\def\eqalign#1{\null\,\vcenter{\openup\jot\mathsurround=0 pt
  \ialign{\strut\hfil$\displaystyle{##}$&$\displaystyle{{}##}$\hfil
      \crcr#1\crcr}}\,}
\def\iff\ {if\null f\ }
\newcommand{\app}{\mathtt{app}}
\renewcommand{\delta}{}
\newcommand{\la}{\pmb{\lambda}}
\newcommand{\CC}{\mathbb{C}}
\newcommand{\DD}{\mathbb{D}}
\newcommand{\K}{\mathbf K}
\newcommand{\SSS}{\mathbf S}
\newcommand{\A}{\mathbf A}
 \newcommand{\reguno}[2]
  {
    \frac{\textstyle #1}{\textstyle #2} 
  }
\newcommand{\utrans}[1]{\mathrel{\smash{\stackrel{#1}{\longrightarrow}}}}
\newcommand{\ctrans}[1]{\mathrel{\smash{\stackrel{#1}{\longrightarrow_C}}}}
\newcommand{\csim}{\mathrel{\sim_C}}
\newcommand{\itrans}[1]{\mathrel{\smash{\stackrel{#1}{\longrightarrow_I}}}}
\newcommand{\isim}{\mathrel{\sim_I}}
\newcommand{\wutrans}[1]{\mathrel{\smash{\stackrel{#1}{\Longrightarrow}}}}
\newcommand{\witrans}[1]{\mathrel{\smash{\stackrel{#1}{\Longrightarrow_I}}}}
\newcommand{\wisim}{\mathrel{\approx_{I}}}
\newcommand{\wrsim}{\mathrel{\approx_{R}}}
\newcommand{\wsim}{\mathrel{\approx}}
\newcommand{\eqlcl}{\mathrel{\simeq_{l}}}
\newcommand{\dlcl}{\mathrel{\downarrow_{l}}}
\newcommand{\eqll}{\mathrel{\approx_{l}}}
\newcommand{\dll}{\mathrel{\Downarrow_{l}}}
\def\doi{5 (3:6) 2009}
\begin{document}
\title{RPO, Second-order Contexts, and
  \texorpdfstring{$\lambda$}{Lambda}-calculus\rsuper*}

\titlecomment{{\lsuper*}Work supported by
ART PRIN Project prot.\ 2005015824 and by FIRB Project RBIN04M8S8 (both funded by MIUR)}

\author[P.~Di Gianantonio]{Pietro Di Gianantonio}
\author[F.~Honsell]{Furio Honsell}
\author[M.~Lenisa]{Marina Lenisa}
 \address{Dip. di Matematica e Informatica,
  Universit\`a di Udine\\
  via delle Scienze 206, 33100 Udine, Italy}
  \email{\{digianantonio,honsell,lenisa\}@dimi.uniud.it}

\keywords{$\lambda$-calculus, reactive system, labeled transition system, weak bisimilarity, RPO technique}
\subjclass{F.3.2, F.4.1}

\begin{abstract}
First, we extend  Leifer-Milner RPO theory, by giving
general conditions to obtain IPO labeled transition systems (and bisimilarities)
with a reduced set of transitions, and possibly finitely branching. Moreover, 
we study the weak variant of Leifer-Milner theory, by giving general conditions under
which  the weak bisimilarity is a congruence. Then, we apply  such extended 
RPO technique to the lambda-calculus, endowed with lazy and call by value 
reduction strategies. 
 We show that,  contrary to  process calculi, one can deal directly  with 
the lambda-calculus syntax and apply  Leifer-Milner technique to a category 
of contexts, provided that we work in the framework of weak bisimilarities.
 However,  even in the case of the transition system with minimal contexts, 
the resulting bisimilarity is 
infinitely branching, due to the fact that, in standard context categories, 
parametric rules such as the beta-rule can be represented only by infinitely
many ground rules.
 To overcome this problem, we introduce the general notion of second-order
context category. We show that, by carrying out the RPO construction in this 
setting, the lazy observational equivalence can be captured 
as a weak bisimilarity equivalence on a finitely branching
transition system. This result is achieved  by considering an encoding of
lambda-calculus in  Combinatory Logic.
\end{abstract}

\maketitle
\baselineskip=14 pt
\section{Introduction}
Recently, much attention has been devoted to derive \emph{labeled transition systems}
 and \emph{bisimilarity congruences} from \emph{reactive systems}, 
 in the context of process
 languages and graph rewriting, \cite{Sew02,LM00,SS03,GM05,BGK06,BKM06,EK06}. 
In the theory of process algebras, the operational semantics of CCS was originally given via a labeled
transition system (lts), while  more recent
process calculi have been presented via  reactive systems plus structural rules.
Reactive systems naturally induce behavioral equivalences which are congruences w.r.t. contexts,
while lts's naturally induce bisimilarity equivalences with  coinductive characterizations.
However, such equivalences are not congruences in general, or else it is an heavy, ad-hoc  
task to prove that they are congruences.

Generalizing \cite{Sew02}, Leifer and Milner \cite{LM00} presented a general categorical 
method for deriving 
a transition system from a reactive system, in such a way that the induced bisimilarity is a congruence.
The labels in Leifer-Milner's  transition system are those contexts which are  \emph{minimal} for a given reaction to fire. 
Minimal contexts are identified via the categorical notion of \emph{relative pushout (RPO)}.
Leifer-Milner's central result guarantees that, under a suitable categorical condition,
 the induced bisimilarity is a \emph{congruence} w.r.t. all contexts.  

In the literature, some case studies have been carried out, especially in the setting of process calculi, for testing the expressivity of 
Leifer-Milner's approach. Some difficulties have arisen in applying the approach directly to such
languages, viewed as   
Lawvere theories, because of  structural rules.
To overcome this problem, two different approaches have been considered. The first
approach consists in using 
 more complex categorical constructions, where structural rules are accounted for explicitly, 
  \cite{Lei01,SS03,SS05}. In the second approach,  intermediate encodings have been considered
  in 
 graph theory, for which the approach of  ``borrowed contexts''  has been developed  \cite{EK06},
 and in Milner's bigraph theory. Here structural rules are avoided, since structurally equivalent terms
 are equated in the target  language.
 
 Moreover, the following further issues have arisen in applying
 Leifer-Milner's technique.  \begin{enumerate}[(i)] \item
 Leifer-Milner's bisimilarity is still redundant, and many labels have
 to be eliminated \emph{a posteriori}, by an \emph{ad-hoc}
 reasoning. Thus general results are called for, in order to reduce
 the complexity of the bisimilarity \emph{a priori}.  \item In some
 cases it is useful to consider \emph{weak} variants of Leifer-Milner
 technique. However, for the \emph{weak bisimilarity} we only have a
 \emph{partial congruence} result, stating that such bisimilarity is a
 congruence w.r.t. a certain class of contexts. However, in many
 concrete cases, the weak bisimilarity turn out to be a \emph{full
 congruence}. Thus it will be useful to study general conditions under
 which this happens.  \item When Leifer-Milner technique is applied in
 the standard setting of term and context categories (Lawvere
 theories), the rules in the rewriting system cannot be represented
 parametrically, but only at a \emph{ground level} through a
 (infinite) series of possible instantiations. As a consequence, the
 bisimilarity turns out to be infinitely branching.  In \cite{KSS05},
 a generalization of Leifer-Milner technique for dealing with
 parametric rules has been introduced. This approach is rather complex
 and not completely satisfactory. An alternative approach (which is
 considered in the present paper) consists in studying
 \emph{second-order} versions of term and context categories, which
 allow \emph{parametric} representations of rewriting rules, and
 carrying out Leifer-Milner technique in this setting.
 \end{enumerate}
 
\noindent In this paper, we address all the above issues. In
particular, in the first part of the paper, we extend Leifer-Milner
theory, by providing general results for reducing the complexity of
the bisimilarity, and by studying conditions under which the weak
bisimilarity is a full congruence.  Then, we focus on the prototypical
example of reactive system given by the $\lambda$-calculus, endowed
with lazy and call by value (cbv) reduction strategies. We show that,
in principle, contrary to most of the case studies considered in the
literature, one could deal directly with the $\lambda$-calculus syntax
and apply Leifer-Milner technique to the category of term contexts
induced by the $\lambda$-terms, provided that we work in the setting
of \emph{weak bisimilarities}.  Applying our general results, we get
quite economical weak bisimilarities which are congruences and we
recover exactly both lazy and cbv contextual equivalences. As a
by-product, we also get an alternative proof of the Context Lemma for
the lazy case.  However, the bisimilarities that we obtain are still
infinitely branching.  This is mainly due to the fact that, in the
category of contexts, the $\beta$-rule cannot be described
parametrically, but it needs to be described extensionally using an
infinite set of pairs of ground terms.
 In order to overcome this problem, we consider the combinatory logic
 and we introduce the general notion of \emph{category of second-order
   term contexts}, which provide a solution to the third issue above.
 Our main result amounts to the fact that, by carrying out
 Leifer-Milner's construction in this setting, the \emph{lazy
   contextual equivalence} can be captured as a \emph{weak
   bisimilarity equivalence} on a (finitely branching) transition
 system, while for the cbv case, the finitely branching transition
 system induces a bisimilarity which is strictly included in the
 contextual equivalence.  Technically, these results are achieved by
 considering an encoding of the lazy (cbv) $\lambda$-calculus in KS
 Combinatory Logic (CL), endowed with a lazy (cbv) reduction strategy,
 and by showing that the lazy (cbv) contextual equivalence on
 $\lambda$-calculus can be recovered as a lazy (cbv) equivalence on
 CL.  It is necessary to consider such encoding, since the approach of
 second-order context categories proposed in this paper works for
 reaction rules which are ``local'', that is, the reaction does not act
 on the whole term, but only locally.  But the substitution operation
 on $\lambda$-calculus is not local.

 Finally,  the correspondence results obtained in this paper about the observational equivalences on
$\lambda$-calculus and CL are interesting \emph{per se} and, although natural and ultimately
elementary, had not appeared previously in the literature.

\subsubsection*{Summary.} In Section~\ref{trs}, we summarize the theory of reactive systems
of \cite{LM00}. 
In Section~\ref{ext}, we extend such theory with new general results about weak bisimilarity, and
about
the ``pruning'' of Leifer-Milner lts and the induced bisimilarity.
In Section~\ref{lamc}, we present the $\lambda$-calculus together with lazy
and cbv reduction strategies and observational equivalences, and we discuss the RPO
approach applied to the $\lambda$-calculus endowed with a structure of context category. In 
Section~\ref{CL},  we focus on Combinatory Logic (CL), we show how to recover 
on CL  the lazy and cbv strategies and observational equivalences, and we discuss the RPO
approach applied to CL, viewed as a context category. In Section~\ref{sec}, we introduce the notion
of second-order context category, and we apply the RPO approach to CL  viewed as a second-order
rewriting system, thus obtaining a characterization of the  lazy  observational 
equivalence as a weak bisimilarity on a finitely branching lts. Final remarks and directions for 
future work appear in Section~\ref{final}. 
\smallskip

The present paper extends \cite{DHL08}. The main new contribution of the present paper is
the extension of  
Leifer-Milner theory, which appears in Section~\ref{ext}. This allows to deal with the 
$\lambda$-calculus in the subsequent sections in a smoother way, to get stronger results about
the lts and the induced bisimilarity, both for the lazy and for the cbv case, and also to provide an
alternative proof of  the 
Context Lemma in  the lazy case. 

\subsubsection*{Acknowledgments.} The authors thank the referees for many useful comments, 
which helped in greatly improving the paper.

\section{The Theory of Reactive Systems}\label{trs}
In this section, we summarize the theory of reactive systems proposed in \cite{LM00} to derive
lts's and bisimulation congruences from a given reduction semantics.
Moreover, we discuss weak variants of Leifer-Milner's bisimilarity equivalence.

The theory of \cite{LM00} is based on a categorical formulation of the notion of \emph{reactive system},
whereby \emph{contexts} are modeled as arrows of a category, \emph{terms} are  arrows having as
domain $0$ (a special object which denotes no holes), and reaction rules are pairs of terms.

\begin{defi}[Reactive System] \label{rslm}
A \emph{reactive system} ${\mathbf C}$ consists of:
\begin{enumerate}[$\bullet$]
\item a  category $\mathcal{C}$;
\item a distinguished object $0\in |\mathcal{C}|$;
\item a composition-reflecting subcategory $\mathcal{D}$ of \emph{reactive contexts};
\item a set of pairs ${\mathbf R} \subseteq \bigcup_{I \in |\mathcal{C}| } \mathcal{C}[0,I] \times \mathcal{C}[0,I] $ of
\emph{reaction rules}.
\end{enumerate}
\end{defi}

\noindent 
The reactive contexts are those in which a reaction can occur. By
composition-reflecting we mean that $dd'\in \mathcal{D}$ implies $d,d'
\in \mathcal{D}$.

Reactive systems on term languages can be viewed as a special case of
reactive systems in the sense of Leifer-Milner by instantiating
$\mathcal{C}$ as a suitable category of term and contexts, also called
the (free) Lawvere category, \cite{LM00}.  In this view, we often call
terms the arrows with domains $0$, and contexts the other arrows.

From the set of reaction rules one generates the reaction relation by
closing them under all reactive contexts:

\begin{defi}[Reaction Relation]
Given a reaction system with reactive contexts $\mathcal{D}$ and reaction rules ${\mathbf R}$, the
\emph{reaction relation} $\rightarrow$ is defined by:
\[t \rightarrow u 
  \quad\hbox{iff}\quad
  t=dl,\  u=dr
  \quad\hbox{for some}\quad
  d\in \mathcal{D} 
  \quad\hbox{and}\quad
  \langle l,r \rangle\in {\mathbf R}\ .
\]
\end{defi}
The behavior of a reactive system is expressed as an unlabeled transition system. On the other 
hand, many useful behavioral equivalences are only defined for lts's. 
The passage from reactive systems to lts's is obtained as follows.

\begin{defi}[Context Labeled Transition System]
  Given a reactive system ${\mathbf C}$, the associated context lts is defined as follows:
\begin{enumerate}[$\bullet$]
\item  states: arrows $t: 0\rightarrow I$ in $\mathcal{C}$, for any $I$;
\item  transitions: $t \ctrans{c} u $ \iff\ $c \in \mathcal{C}$ and $
  ct\rightarrow u$ (i.e., $ct$ and $u$ are in the reaction relation). 
\end{enumerate}
\end{defi}

\noindent 
In the case of a reactive system defined on a category of contexts, a
state is a term $t$, and an associated label is a context $c$ such
that $ct$ reduces.  In the following, we will consider also lts's
obtained by reducing the set of transitions of the context lts. In the
sequel, we will use the word lts to refer to any such lts obtained
from a context lts.

Any lts induces a bisimilarity relation as follows:

\begin{defi}[Bisimilarity] Let $\utrans{c}$ be a lts. 
\begin{enumerate}[(i)]
\item A symmetric relation ${\mathcal R}\subseteq \bigcup_{I\in {\mathcal C}} {\mathcal C}(0,I) \times {\mathcal C}(0,I)$
on the states of the lts is a \emph{bisimulation} if:
\[\langle a,b \rangle \in {\mathcal R} \ \wedge \  a \utrans{f} a'\ \Longrightarrow \ \exists b' .\ 
b \utrans{f} b'\    \wedge \  \langle a',b' \rangle \in {\mathcal R}
\ .  
\]
\item We call \emph{bisimilarity}  the largest bisimulation.
\item The bisimilarity on the context lts  is called \emph{context bisimilarity} $\csim$.
\end{enumerate}
\end{defi}

\noindent
It is easy to check that the context bisimilarity is a \emph{congruence} w.r.t. all contexts, i.e.,
if $a \csim b$, then for any context $c$, $ca \csim cb$.
However, intuitively only those contexts which contain the \emph{minimal}
amount of information for a reaction to fire are relevant, while the others are redundant.
Moreover, often context bisimilarity gives an equivalence which is
 too coarse, as we will see also in this paper. Thus,  in \cite{LM00},  the authors proposed 
 a categorical criterion for identifying the ``smallest context allowing a reaction''. They defined 
 \emph{relative pushouts} (RPOs), of which \emph{idem relative pushouts} (IPOs) are a special case.
 One can define a lts using IPOs. Leifer-Milner's central result consists in showing that, under 
 a suitable categorical condition, such lts
 is well-behaved, in the sense that the induced bisimilarity is a congruence.
 
 \begin{defi}[RPO/IPO]
\hfill \begin{enumerate}[(i)]
\item
  Let $\mathcal{C}$ be a category and let us consider the commutative
  diagram in Fig.~\ref{runo}(i). Any tuple $\langle I_5, e,f,g
  \rangle$ which makes diagram in Fig.~\ref{runo}(ii) commute is
  called a \emph{candidate} for (i). A \emph{relative pushout (RPO)}
  is the smallest such candidate, i.e., it satisfies the universal
  property that given any other candidate $\langle I_6, e',f',g'
  \rangle$, there exists a \emph{unique} mediating morphism $h: I_5
  \rightarrow I_6$ such that both diagrams in Fig.~\ref{runo}(iii) and
  Fig.~\ref{runo}(iv) commute.
\item
  A commutative square such as diagram (i) in Fig~\ref{runo} is an
  \emph{idem pushout (IPO)} if $\langle I_4, c,d, \mathit{id}_{I_4}
  \rangle$ is its RPO.
\end{enumerate}
\end{defi}
 
 \begin{figure} 
 \[ 
 \xymatrix@R-15 pt@C-15 pt{ 
&&I_4 \cr\cr
  I_2  \ar[rruu]^{c}   &&&&  I_3  \ar[lluu]_{d}\cr\cr
&&0    \ar[lluu]^{t}   \ar[rruu]_{l} 
\cr 
 &&(i)
} 
\enspace
\xymatrix@R-15 pt@C-15 pt{ 
 && I_4 \cr\cr
 I_2 \ar[rr]^e \ar[rruu]^{c} && I_5 \ar[uu]_g &&  I_3 \ar[ll]_f \ar[lluu]_{d}
\cr\cr
 && 0 \ar[lluu]^{t} \ar[rruu]_{l}  
\cr
 &&(ii)
} 
\enspace
\xymatrix@R-15 pt@C-15 pt{ 
&& I_6 \cr\cr
 I_2 \ar[rr]^{e} \ar[rruu]^{e'}  
&& I_5 \ar[uu]_h&&  I_3 \ar[ll]_f \ar[lluu]_{f'}
\cr\cr
 &&{\mathstrut}
\cr
 &&(iii)
} 
\enspace
\xymatrix@R-15 pt@C-15 pt{ 
&& I_4 \cr\cr
 I_6 
  \ar[rruu]^{g'}  
 && I_5 \ar[ll]_h \ar[uu]_{g}
\cr\cr
 &&{\mathstrut}
\cr
 &&(iv)
}
\] 
 \caption{Redex Square and Relative Pushout.}
 \label{runo}
\end{figure}

\begin{defi}[IPO Transition System]\hfill
\hfill \begin{enumerate}[(1)]
\item States: arrows $t: 0\rightarrow I$ in $\mathcal{C}$, for any $I$;
\item Transitions: $t \itrans{c} dr $ \iff\ $d\in \mathcal{D}$, $ct = dl$, $\langle  l,r \rangle \in {\mathbf R}$ and the diagram in Fig.~\ref{runo}(i) is an IPO.
\end{enumerate}
\end{defi}

Let $\isim$ denote the bisimilarity induced by the IPO lts.

\begin{defi}[Redex Square]
Let ${\mathbf C}$ be a reactive system and $t:0\rightarrow I_2$ an arrow in ${\mathcal C}$.
A \emph{redex square} (see Fig.~\ref{runo}(i)) consists of a left-hand side $l:0 \rightarrow I_3$ of
a reaction rule  $\langle l:0 \rightarrow I_3, r:0 \rightarrow I_3\rangle \in {\mathbf R}$, a context
$c:I_2 \rightarrow I_4$ and a reactive context $d:I_3 \rightarrow I_4$ such that $ct =dl$.

A reactive  system $\mathbf{C}$ is said to \emph{have redex RPOs} if every redex square has an RPO.
 \end{defi} 

 \begin{figure} 
\[\xymatrix@R-15 pt@C-15 pt{ 
\ar[dddd]_{e_0} \ar[rrrr]^{f_0} &&&& \ar[dd]^{f_1} \cr\cr
     &&&& \ar[dd]^{e_2}\cr\cr
\ar[rr]_{g_0} && \ar[rr]_{g_1}&& \cr
&& (i)
}
\qquad\qquad
\xymatrix@R-15 pt@C-15 pt{ 
\ar[dd]_{e_0} \ar[rr]^{f_0} && \ar[rr]^{f_1}\ar[dd]^{e_1} && \ar[dd]^{e_2}   
\cr\cr
\ar[rr]_{g_0} && \ar[rr]_{g_1} &&
\cr\cr\cr
&& (ii)
}
\] 
 \caption{IPO pasting.}
 \label{rdue}
\end{figure}

The following is a fundamental lemma stating  a property of IPO squares.

\begin{lem}[IPO pasting, \cite{LM00}]\label{ipo-pasting}
Suppose that the square in Fig. \ref{rdue}(i) has an RPO and that  both squares in Fig. \ref{rdue}(ii) commute.
\hfill \begin{enumerate}[\em(i)]
\item If the two squares of Fig. \ref{rdue}(ii) are IPOs so is the outer rectangle.
\item It the outer rectangle and the left square of Fig. \ref{rdue}(ii) are IPOs so is the right square. 
\end{enumerate}
\end{lem}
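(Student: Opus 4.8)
The plan is to work entirely with the universal property of RPOs, via the characterisation built into the definition: a commuting square is an IPO precisely when its own cospan, viewed as a candidate with identity top arrow, is an RPO for it. Writing the shared vertical edge of Fig.~\ref{rdue}(ii) as $e_1$, the three squares in play are the left square (span $(f_0,e_0)$, cospan $(e_1,g_0)$), the right square (span $(f_1,e_1)$, cospan $(e_2,g_1)$) and the outer rectangle (span $(f_1f_0,e_0)$, cospan $(e_2,g_1g_0)$). Both parts are then the RPO analogue of the classical two-part pasting lemma for pushouts, with candidates playing the role of cocones and mediating arrows the role of induced maps; the extra hypothesis that the outer rectangle has an RPO is what lets the comparisons be carried out.

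For part (i), assume both small squares are IPOs and let $\langle L,a,b,c\rangle$ be the RPO of the outer rectangle. Since the outer cospan is itself a candidate, the mediating arrow from the RPO to it is exactly $c$, so it suffices to prove that $c$ is an isomorphism; by uniqueness of RPOs this is equivalent to showing that the trivial outer candidate is an RPO. I would build the inverse of $c$ using the two small IPOs together with the connecting data supplied by the outer RPO itself, and then close the argument by checking the two triangle identities and invoking the uniqueness clauses to force both composites with $c$ to be identities.

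For part (ii) the flow is reversed. Assuming the outer rectangle and the left square are IPOs, I would start from an arbitrary candidate for the right square, paste it on the left with the trivial data of the left IPO to form a candidate for the outer rectangle, invoke the outer IPO to obtain a unique mediating arrow out of its trivial candidate, and then use the uniqueness clause of the left universal property to show that this arrow descends through $e_1$ to the unique mediating arrow demanded by the right square. This shows the right cospan is its own RPO, i.e.\ that the right square is an IPO.

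The main obstacle is that the object sitting on the shared edge $e_1$ is internal to the pasted diagram: an arbitrary candidate carries no arrow to or from it, so one cannot naively feed such a candidate into the universal property of a small square, whose candidates must come equipped with a morphism to that square's apex. Breaking this apparent circularity is precisely what the existence of the outer RPO provides, which in practice means first establishing that the constituent squares inherit RPOs from the outer one; the delicate bookkeeping is then the faithful transfer of the uniqueness clauses, propagated upward through both small squares in part (i), and extracted from the outer and left squares and pushed down to the right square in part (ii).
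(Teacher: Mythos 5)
The paper itself does not prove this lemma---it is quoted from \cite{LM00}---so your proposal has to be measured against the standard Leifer--Milner argument. Your overall strategy (pure diagram chasing with candidates and mediating arrows, in analogy with pushout pasting) is the right one, and you correctly isolate the central difficulty: an arbitrary candidate for the outer rectangle, or for the right square, carries no arrow to or from the intermediate object $E$ sitting on the shared edge $e_1$. But the proposal stops exactly where the proof has to start. ``I would build the inverse of $c$ using the two small IPOs together with the connecting data supplied by the outer RPO'' is not a construction, and the one concrete construction that makes the lemma work is absent. It is this: the square of Fig.~\ref{rdue}(i) is \emph{not} the outer rectangle of Fig.~\ref{rdue}(ii) with span $(e_0,f_1f_0)$ and cospan $(g_1g_0,e_2)$, as you assume, but the square with span $(e_0,f_0)$ and cospan $(g_1g_0,e_2f_1)$, whose corner sits at the \emph{intermediate} top object (note how $f_1$ is drawn vertically in (i), and how the lemma is invoked in the proof of Proposition~\ref{weak-congruence}, where the square required to have an RPO is the redex square of $t$, not of $ct$). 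For that square the tuple $\langle E,g_0,e_1,g_1\rangle$ \emph{is} a candidate, so its RPO $\langle L,a,b,c\rangle$ comes with a mediating arrow $h\colon L\to E$ satisfying $ha=g_0$, $hb=e_1$, $g_1h=c$; and since the left square is an IPO, $\langle L,a,b,h\rangle$ is a candidate for it, yielding $k\colon E\to L$ with $kg_0=a$, $ke_1=b$ and $hk=\mathit{id}_E$. These two arrows are the bridge across the shared edge, and every step of both parts (transporting a candidate of the outer rectangle, respectively of the right square, past $E$, and transferring the uniqueness clauses) is performed by composing with $h$ and $k$.

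Under your reading of the hypothesis the argument cannot be carried out: candidates for the span-$(e_0,f_1f_0)$ square must receive an arrow from the top-right object, which $E$ does not, while candidates for the left IPO must map into $E$, which the RPO vertex does not; so no arrow between $E$ and $L$ is ever produced and the circularity you flag is never broken. Moreover, with the correct hypothesis, your reduction of part (i) to ``$c$ is an isomorphism'' is off target: the RPO of the Fig.~\ref{rdue}(i) square turns out to have vertex isomorphic to $E$ with third component $g_1$, which is not invertible in general; what one proves instead is that the trivial candidate $\langle F,g_1g_0,e_2,\mathit{id}_F\rangle$ for the outer rectangle of (ii) satisfies the universal property directly, by pushing any competing candidate through $h$ and $k$ and then invoking the IPO property of the right square. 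So the proposal names the obstacle but supplies neither the hypothesis it must lean on nor the construction that overcomes it.
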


\noindent
From the above lemma Leifer and Milner derived their central result:

\begin{thm}[\cite{LM00}]\label{ipocon}
Let $\mathbf{C}$ be a reactive system having redex RPOs. Then the IPO
bisimilarity $\isim$ is a congruence w.r.t. all contexts, i.e., if
$a\isim b$ then for all $c$ of the appropriate type, $ca\isim cb$.
\end{thm}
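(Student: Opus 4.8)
The plan is to establish the congruence coinductively, via a single bisimulation that absorbs the context closure. I would set
\[ \mathcal{S} = \{\, \langle ca, cb\rangle \mid a \isim b \text{ and } c \text{ is a context} \,\}, \]
which is symmetric and, taking $c = \mathit{id}$, contains $\isim$. Since $\isim$ is the largest IPO-bisimulation, it is enough to prove that $\mathcal{S}$ is itself an IPO-bisimulation: then $\mathcal{S} \subseteq \isim$, which is exactly the congruence statement. So the whole problem reduces to one simulation step: assuming $a \isim b$ and $ca \itrans{f} w$, I must produce $cb \itrans{f} w'$ with $\langle w, w'\rangle \in \mathcal{S}$ (note the label $f$ must be the same on both sides, as bisimulation demands).

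First I would unfold the transition: $ca \itrans{f} w$ is witnessed by a rule $\langle l, r\rangle \in \mathbf{R}$ and a reactive context $d \in \mathcal{D}$ with $f(ca) = dl$, $w = dr$, and the square $\gamma$ over $\langle ca, l\rangle$ with legs $\langle f, d\rangle$ an IPO. The key idea is to absorb $c$ into the label and re-minimise: the commuting square $B$ given by $(fc)\,a = dl$ is a redex square for $a$, hence has an RPO by hypothesis. Its RPO produces an IPO square $\sigma$ for $a$, say $e\,a = d_2\, l$, together with a mediating context $g$ such that $g e = fc$ and $g d_2 = d$. Here the composition-reflecting property of $\mathcal{D}$ does crucial work: from $d = g d_2 \in \mathcal{D}$ it yields both $g \in \mathcal{D}$ and $d_2 \in \mathcal{D}$, so that $a \itrans{e} d_2 r$ is a legitimate IPO transition. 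Applying $a \isim b$ gives $b \itrans{e} b'$ with $d_2 r \isim b'$, witnessed by an IPO square $\tau$ for $b$, say $e\, b = d_3\, l'$ with $\langle l', r'\rangle \in \mathbf{R}$, $d_3 \in \mathcal{D}$ and $b' = d_3 r'$.

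It remains to transport this back through $c$. I would isolate the mediator square $N$ given by $g e = fc$ and argue that it is an IPO: since $\gamma$ (an IPO, and a redex square, hence equipped with an RPO) factors as $\sigma$ pasted with $N$, the IPO-pasting lemma applied to $\gamma$ and the IPO $\sigma$ forces $N$ to be an IPO. Pasting the $b$-side square $\tau$ with this same $N$ yields an outer square $\gamma'$ realising $f(cb) = (g d_3)\, l'$, and since $\tau$ and $N$ are both IPOs and $\gamma'$ is a redex square (so has an RPO), IPO pasting makes $\gamma'$ an IPO. Because $g \in \mathcal{D}$, $d_3 \in \mathcal{D}$ and $\mathcal{D}$ is closed under composition, $g d_3 \in \mathcal{D}$, so $cb \itrans{f} w'$ with $w' = (g d_3) r' = g b'$ is a genuine IPO transition. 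Finally $\langle w, w'\rangle = \langle g(d_2 r), g(d_3 r')\rangle$ lies in $\mathcal{S}$ because $d_2 r \isim d_3 r'$, closing the simulation step.

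The main obstacle is exactly the point where a naive attempt would silently break: the matching transition from $cb$ must use a \emph{reactive} context, but the context appearing there is the composite $g d_3$, where $g$ is merely a mediating morphism. The resolution is twofold and is where all the hypotheses are spent: composition-reflecting of $\mathcal{D}$ is what upgrades $g$ from an arbitrary context to a reactive one (via $d = g d_2 \in \mathcal{D}$), after which closure of $\mathcal{D}$ under composition gives $g d_3 \in \mathcal{D}$; and two carefully oriented applications of the IPO-pasting lemma are what guarantee the matching square is again an IPO rather than merely commuting. I would expect the fiddly part of writing this up to be fixing the orientation conventions in the pasting lemma (which square plays the role of the ``left'' one in each application) and checking that every square to which the lemma is applied is a redex square, so that the redex-RPO hypothesis supplies the RPO the lemma requires.
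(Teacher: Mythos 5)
Your proof is correct and is essentially the argument the paper points to: the theorem is quoted from Leifer--Milner, and the paper explicitly notes it is derived from the IPO pasting lemma (Lemma~\ref{ipo-pasting}), which is exactly how you use it --- building the contextual closure $\mathcal{S}$, re-minimising via the redex RPO, and applying the two orientations of the pasting lemma, with composition-reflection of $\mathcal{D}$ supplying reactivity of $g$ and $gd_3$. The only ingredient you invoke beyond what the paper states explicitly is the standard fact that the lower square of an RPO is itself an IPO, which is one of the basic lemmas of \cite{LM00} and is unproblematic here.
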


\subsection{Weak  Bisimilarity}
For dealing with the $\lambda$-calculus, it will be useful to consider the weak versions of the context and IPO lts's defined above, together with the corresponding notions of \emph{weak bisimilarities}.

One can proceed in general, by defining a weak lts from a given lts:

\begin{defi}[Weak lts and Bisimilarity] \label{wIPO}
Let $\utrans{\alpha}$ be a lts, and let $\tau$ be a label (identifying an unobservable action).
\hfill \begin{enumerate}[(i)]
\item We define the \emph{weak lts} $\wutrans{\alpha}$ by 
\[t \wutrans{\alpha} u 
  \quad\hbox{iff}\quad
  \begin{cases} t  \utrans{\tau}^* u & 
 \mbox{ if } \alpha=\tau
 \\  t \utrans{\tau}^*  t'  \utrans{\alpha} u'  \utrans{\tau}^* u &
  \mbox{ otherwise ,}  \end{cases}
\]
where $\utrans{\tau}^*$ denotes the reflexive and transitive closure of 
 $\utrans{\tau}$.
\item Let us call \emph{weak bisimilarity} the  bisimilarity induced by the 
 weak lts.
\end{enumerate}
\end{defi}

\noindent 
The above definition differs from the one proposed in \cite{LM00}, where, in case $\alpha\neq \tau$, 
$\wutrans{\alpha} $ is defined by $\utrans{\alpha}\circ \utrans{\tau}^*$. We cannot use the latter, since it
discriminates $\lambda$-terms which are equivalent in the usual semantics.

The following easy lemma gives a useful characterization of the weak bisimilarity,
whereby any $\utrans{\alpha}$-transition is mimicked by a   $\wutrans{\alpha}$-transition:

\begin{lem}\label{altcar}
Let $\utrans{\alpha}$ be a lts and let  $\wutrans{\alpha}$ be the corresponding weak lts.
The induced weak bisimilarity is the greatest symmetric relation $ {\mathcal R}$ s.t.:
\[\langle a,b \rangle \in {\mathcal R} \ \wedge \  a \utrans{f} 
a'\ \Longrightarrow \ \exists b' .\ 
b \wutrans{f} b'\    \wedge \  \langle a',b' \rangle \in {\mathcal R}
\ .
\]
\end{lem}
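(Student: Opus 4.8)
The plan is to prove the two relations coincide by mutual inclusion, showing each is a symmetric relation satisfying the defining clause of the other. Let me denote by $\wsim$ the weak bisimilarity (the largest weak bisimulation in the sense of Definition~\ref{wIPO}, defined on the weak lts $\wutrans{\alpha}$), and by $\mathcal{S}$ the greatest symmetric relation satisfying the ``mixed'' clause in the statement, where a strong step $a \utrans{f} a'$ is answered by a weak step $b \wutrans{f} b'$. Since $\mathcal{S}$ is characterized coinductively as a greatest fixed point exactly as $\wsim$ is, it suffices to show that each relation is closed under the other's defining implication; then maximality gives the two inclusions.

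For the inclusion $\wsim \;\subseteq\; \mathcal{S}$, I would take $\langle a,b\rangle \in \wsim$ and a strong transition $a \utrans{f} a'$. The key observation is that every strong transition is in particular a weak transition: $a \utrans{f} a'$ implies $a \wutrans{f} a'$, because one may prepend and append the empty sequence of $\utrans{\tau}$-steps (using reflexivity of $\utrans{\tau}^*$), and when $f=\tau$ the single $\tau$-step is a length-one member of $\utrans{\tau}^*$. Hence the weak bisimulation game applied to the weak step $a \wutrans{f} a'$ yields $b \wutrans{f} b'$ with $\langle a',b'\rangle \in \wsim$. This is precisely the conclusion required for $\mathcal{S}$, so $\wsim$ satisfies the mixed clause and is contained in the greatest such relation $\mathcal{S}$.

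For the reverse inclusion $\mathcal{S} \;\subseteq\; \wsim$, I would verify that $\mathcal{S}$ is itself a weak bisimulation, i.e.\ that it answers \emph{weak} challenges with weak responses. So suppose $\langle a,b\rangle \in \mathcal{S}$ and $a \wutrans{f} a'$. Unfolding the definition, this decomposes as $a \utrans{\tau}^* t \utrans{f} u \utrans{\tau}^* a'$ (with the degenerate shape $a \utrans{\tau}^* a'$ when $f=\tau$). The strategy is to transport $b$ along this sequence step by step, repeatedly invoking the mixed clause of $\mathcal{S}$: each individual $\utrans{\tau}$-step and the single $\utrans{f}$-step is a strong step, so it is answered by a weak step from the current $\mathcal{S}$-partner, maintaining an $\mathcal{S}$-related pair throughout. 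Concatenating these weak responses and using that the composition of weak transitions of the form $\wutrans{\tau}$ and $\wutrans{f}$ collapses back into a single $\wutrans{f}$ (the $\tau$-closures absorb into the surrounding $\utrans{\tau}^*$ blocks), I obtain $b \wutrans{f} b'$ with $\langle a',b'\rangle \in \mathcal{S}$, as needed.

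The main obstacle is the bookkeeping in this last paragraph: one must be careful that chaining the mixed-clause responses to a run of several $\tau$-steps followed by one $f$-step really does recompose into a \emph{single} weak $f$-transition, rather than something with stray observable actions interleaved. The crucial point making this work is that the weak transition relation is ``transitive'' in the appropriate sense — a $\wutrans{\tau}$ followed by a $\wutrans{f}$ followed by a $\wutrans{\tau}$ is again a $\wutrans{f}$ — which follows directly from the definition of $\wutrans{\alpha}$ by merging adjacent reflexive–transitive $\tau$-closures. Once this absorption property is isolated as a small auxiliary fact, the induction on the length of the decomposition of $a \wutrans{f} a'$ goes through routinely, and both inclusions combine to give the stated characterization.
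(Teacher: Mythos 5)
Your proof is correct; the paper states this lemma without proof (calling it ``easy''), and your mutual-inclusion argument --- observing that every strong step is a weak step for one direction, and for the other decomposing $a \wutrans{f} a'$ into strong steps, answering each via the mixed clause, and recombining the responses using the absorption property $\wutrans{\tau}\circ\wutrans{f}\circ\wutrans{\tau}\ \subseteq\ \wutrans{f}$ --- is exactly the standard argument the authors implicitly rely on. The only point worth making explicit is that the ``greatest symmetric relation'' satisfying the mixed clause exists because the clause is preserved under arbitrary unions of symmetric relations, which you use tacitly when appealing to maximality.
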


The following lemma provides a coinduction ``up-to'' principle, which will be useful in the sequel: 

\begin{lem}\label{upto}
Let $\utrans{\alpha}$ be a lts and let  $\wutrans{\alpha}$ be the corresponding weak lts.
The induced weak bisimilarity is the greatest symmetric relation $ {\mathcal R}$ s.t.:
\[\langle a,b \rangle \in {\mathcal R} \ \wedge \  a \stackrel{f}{\Longrightarrow}\!\! {}'
a'\ \Longrightarrow \ \exists b' .\ 
b \wutrans{f} b'\    \wedge \  \langle a',b' \rangle \in {\mathcal
R}^* \ ,
\]
 where  
$\stackrel{f}{\Longrightarrow}\!\! {}'$ denotes $ \utrans{\tau}^* \circ \stackrel{f}{\longrightarrow}$ ($f$
 is possibly $\tau$), and ${\mathcal R}^*$ denotes the reflexive and transitive closure of 
${\mathcal R}$.
\end{lem}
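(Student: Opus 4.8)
The statement asserts both that the weak bisimilarity $\approx$ is itself a relation of the stated form and that it is the greatest one, so I would treat these two points separately, relying throughout on Lemma~\ref{altcar} and on the fact that weak transitions compose, i.e.\ a $\wutrans{\tau}$ followed by a $\wutrans{f}$ (or by a $\wutrans{\tau}$) is again a $\wutrans{f}$ (resp.\ a $\wutrans{\tau}$). For the first point, given $\langle a,b\rangle\in{\approx}$ and a challenge $a\utrans{\tau}^{*}a''\utrans{f}a'$, I would apply Lemma~\ref{altcar} to each of the leading $\tau$-steps and to the final $f$-step and compose the matching transitions of $b$, obtaining $b\wutrans{f}b'$ with $a'\approx b'$; since $\approx$ is an equivalence we have ${\approx}^{*}={\approx}$, so $\langle a',b'\rangle\in{\approx}^{*}$ and $\approx$ is of the required form. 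This direction is routine.

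For maximality, let $\mathcal{R}$ be any symmetric relation satisfying the condition. Since a single move $a\utrans{f}a'$ is in particular a $\stackrel{f}{\Longrightarrow}'$-step, the hypothesis immediately yields its single-step form: $\langle a,b\rangle\in\mathcal{R}$ and $a\utrans{f}a'$ give some $b'$ with $b\wutrans{f}b'$ and $\langle a',b'\rangle\in\mathcal{R}^{*}$. The plan is then to show that $\mathcal{R}^{*}$ is a weak bisimulation in the sense of Lemma~\ref{altcar}; since $\mathcal{R}^{*}$ is symmetric, the maximality built into Lemma~\ref{altcar} gives $\mathcal{R}\subseteq\mathcal{R}^{*}\subseteq{\approx}$, which together with the first point establishes that $\approx$ is the greatest such relation.

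To prove that $\mathcal{R}^{*}$ is a weak bisimulation I would fix a witnessing chain $a=c_{0}\,\mathcal{R}\,c_{1}\,\mathcal{R}\cdots\mathcal{R}\,c_{n}=b$ for $\langle a,b\rangle\in\mathcal{R}^{*}$ and a challenge $a\utrans{f}a'$, and propagate the answer along the chain while treating each $\wutrans{f}$ as a single atomic move: starting from $c_{0}\wutrans{f}a'$, transfer the move across $\langle c_{0},c_{1}\rangle$, then across $\langle c_{1},c_{2}\rangle$, and so on, so that at each stage one has $c_{i}\wutrans{f}y_{i}$ with $\langle a',y_{i}\rangle\in\mathcal{R}^{*}$. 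Transitivity of $\mathcal{R}^{*}$ keeps the targets related, and in the end $b=c_{n}\wutrans{f}y_{n}$ with $\langle a',y_{n}\rangle\in\mathcal{R}^{*}$ is the desired answer.

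The heart of the argument, and the step I expect to be the main obstacle, is the single-link transfer invoked at each stage: given $\langle p,q\rangle\in\mathcal{R}$ and a full weak move $p\wutrans{f}p'$, one must produce $q\wutrans{f}q'$ with $\langle p',q'\rangle\in\mathcal{R}^{*}$. The natural route is to decompose $p\wutrans{f}p'$ into $\stackrel{}{\Longrightarrow}'$-pieces — one $\stackrel{f}{\Longrightarrow}'$-piece followed by $\stackrel{\tau}{\Longrightarrow}'$-pieces, exploiting $\stackrel{g}{\Longrightarrow}'=\utrans{\tau}^{*}\circ\utrans{g}$ — and to answer each piece by the hypothesis. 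The subtlety is that the hypothesis answers a $\stackrel{g}{\Longrightarrow}'$-challenge with a $\wutrans{g}$-response of uncontrolled length, and after the first piece the two sides are related only by $\mathcal{R}^{*}$ rather than by a single $\mathcal{R}$-step, so a naive ``match one piece and recurse on the rest'' need not be well-founded; this is exactly the point at which weak ``up-to'' techniques are classically delicate. Soundness here hinges on the asymmetric shape chosen in the statement — challenges confined to $\stackrel{f}{\Longrightarrow}'$, responses allowed the full $\wutrans{f}$, and targets compared only up to the reflexive--transitive closure — and making the transfer rigorous (by a carefully stratified simultaneous induction, or by exhibiting $\mathcal{R}^{*}$ as a bisimulation of the atomic weak lts through a compatibility argument rather than by direct recursion) is the one place where genuine care is needed; the remaining bookkeeping is routine.
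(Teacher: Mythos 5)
Your overall architecture is the paper's: reduce everything to showing that $\mathcal{R}^*$ is a weak bisimulation (in the sense of Lemma~\ref{altcar}), propagate a challenge along a witnessing $\mathcal{R}$-chain, and observe that the whole difficulty is concentrated in the single-link transfer of a full weak move $p \wutrans{f} p'$ across a pair $\langle p,q\rangle \in \mathcal{R}$. But you stop exactly there: you name that step ``the main obstacle'', speculate that it might require ``a carefully stratified simultaneous induction'' or ``a compatibility argument'', and do not carry it out. Since that step is the entire content of the lemma, this is a genuine gap, not bookkeeping.

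The resolution is more elementary than you fear, and the well-foundedness worry dissolves once the decomposition is chosen correctly. Write $p \wutrans{f} p'$ as $p \stackrel{f}{\Longrightarrow}\!\!{}'\, p'' \utrans{\tau}^* p'$; the point is that the \emph{entire} trailing block $p'' \utrans{\tau}^* p'$, if nonempty, is itself a single $\stackrel{\tau}{\Longrightarrow}\!\!{}'$-challenge, not a sequence of pieces to be matched one at a time. So one first proves the auxiliary claim (the paper's ``easily checked'' step): if $a \mathrel{\mathcal{R}^*} b$ and $a \utrans{\tau}^* a'$, then $b \utrans{\tau}^* b'$ with $a' \mathrel{\mathcal{R}^*} b'$ --- by induction on the length of the chain $a \mathrel{\mathcal{R}} \cdots \mathrel{\mathcal{R}} b$, each link costing exactly one application of the up-to hypothesis (with $f=\tau$), transitivity of $\mathcal{R}^*$ gluing the targets. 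Then the main claim --- $a \mathrel{\mathcal{R}^*} b$ and $a \wutrans{f} a'$ imply $b \wutrans{f} b'$ with $a' \mathrel{\mathcal{R}^*} b'$ --- goes by a second induction on chain length: the hypothesis applied to the first link handles the $\stackrel{f}{\Longrightarrow}\!\!{}'$-prefix, the auxiliary claim handles the $\tau$-suffix across the resulting $\mathcal{R}^*$-related pair, the two responses compose into a single $\wutrans{f}$, and the induction hypothesis pushes the composed move across the remaining links. Every induction is on a fixed chain length; nothing needs to be stratified. You should supply this argument (or an equivalent one) rather than flag it as an open difficulty.
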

\begin{proof} Let us call ``bisimulation up-to'' a relation 
${\mathcal R}$ as in the statement of the lemma.  In order to prove
the claim, it is sufficient to prove that, if ${\mathcal R}$ is a
bisimulation up-to, then ${\mathcal R}^*$ is a bisimulation. Let
${\mathcal R}$ be a bisimulation up-to. First, one can easily check
that $(a {\mathcal R}^* b \ \wedge \ a \Longrightarrow a') \
\Longrightarrow\ \exists b'.\ (b\Longrightarrow b' \ \wedge\ a'
{\mathcal R}^*b')$ (by induction on the length of the chain $a\
{\mathcal R} \ldots {\mathcal R}\ b$). Now, let $a = a_0 \ {\mathcal
R}\ a_1 \ldots a_{n-1} \ {\mathcal R} \ a_n = b$ and $a\,
\smash{\stackrel{f}{\Longrightarrow}}\, a'$. We prove that $\exists b'.\
(b\,\smash{\stackrel{f}{\Longrightarrow}}\, b' \ \wedge\ a' {\mathcal
R}^*b')$, by induction on $n\geq 0$.  If $n=0$, the claim is
immediate. If $n>0$ and $a\, \smash{\stackrel{f}{\Longrightarrow}}
{}'\, a'' \Longrightarrow a'$, then, since ${\mathcal R}$ is a
bisimulation up-to, $a_1\, \smash{\stackrel{f}{\Longrightarrow}}\, a''_1\
\wedge \ a'' {\mathcal R}^* a''_1,$ and, by what we have proved
before, $\exists a'_1.\ ( a''_1 \Longrightarrow a'_1 \ \wedge\ a'
{\mathcal R}^* a'_1)$. Finally, by induction hypothesis, $\exists b'.\
(b\, \smash{\stackrel{f}{\Longrightarrow}}\, b' \ \wedge\ a'_1 {\mathcal
R}^* b' )$. Hence $a' {\mathcal R}^* b'$.
\end{proof}

For dealing with the $\lambda$-calculus, we will consider a notion of \emph{weak IPO bisimilarity}, where the identity
context is unobservable. Such notions of weak IPO bisimilarities are
 not  congruences w.r.t. all contexts, in general, however, as observed in 
 \cite{LM00} (end of Section~5), they are congruences at least w.r.t. reactive contexts:

\begin{thm}\label{wcon}
Let  $\mathbf{C}$ be a reactive system having redex RPOs. Then the
weak IPO bisimilarity $\wisim$, where the identity
context is unobservable, is a congruence w.r.t. reactive contexts.
\end{thm}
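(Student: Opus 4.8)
The plan is to exhibit an explicit relation and prove it is contained in $\wisim$ by means of the characterization in Lemma~\ref{altcar}. Concretely I would take
$$\mathcal{S} = \{\, \langle da, db\rangle \mid d\in\mathcal{D},\ a\wisim b\,\},$$
which is symmetric since $\wisim$ is. By Lemma~\ref{altcar} it suffices to show that whenever $\langle da,db\rangle\in\mathcal{S}$ and $da\itrans{c}w$ is a (strong) IPO transition, there is a weak transition $db\witrans{c}w'$ with $\langle w,w'\rangle\in\mathcal{S}$; as $\mathcal{S}$ is then a symmetric relation satisfying the clause of Lemma~\ref{altcar} with matches landing \emph{in $\mathcal{S}$ itself}, maximality gives $\mathcal{S}\subseteq\wisim$, i.e.\ $da\wisim db$ for every reactive $d$, which is exactly congruence w.r.t.\ reactive contexts. (Thus the up-to principle of Lemma~\ref{upto} is not even needed here.) The argument mirrors the strong congruence proof underlying Theorem~\ref{ipocon}, but replaces strong matching by weak matching and uses reactivity of $d$ to transport the silent ($\mathit{id}$-labelled) segments.

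First I would \emph{decompose} the challenge. The transition $da\itrans{c}w$ means $c(da)=d''l$ is an IPO with $\langle l,r\rangle\in\mathbf{R}$, $d''\in\mathcal{D}$ and $w=d''r$. Writing $c(da)=(cd)a$, this is a redex square for the \emph{term} $a$ with label $cd$; since $\mathbf{C}$ has redex RPOs it has an RPO $\langle I_5,e,f,g\rangle$ with $ea=fl$, $ge=cd$, $gf=d''$, and $f,g\in\mathcal{D}$ because $gf=d''\in\mathcal{D}$ and $\mathcal{D}$ is composition-reflecting. By the standard fact that the inner square of an RPO is an IPO \cite{LM00}, $ea=fl$ is an IPO, so $a\itrans{e}fr$ is a genuine IPO transition; put $v=fr$, whence $w=d''r=g(fr)=gv$. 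Since the outer (redex) square for $da$ is an IPO and the inner square $ea=fl$ is an IPO, Lemma~\ref{ipo-pasting}(ii) applied to this factorisation shows that the upper square $\sigma:=[\,cd=ge\,]$ is itself an IPO. Finally, from $a\wisim b$ and $a\itrans{e}v$, Lemma~\ref{altcar} supplies $v'$ with $b\witrans{e}v'$ and $v\wisim v'$.

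It remains to \emph{re-compose}: to lift $b\witrans{e}v'$ through $d$ to a weak $c$-labelled transition of $db$. The key sublemma is: if $g\in\mathcal{D}$ and $\sigma=[\,cd=ge\,]$ is an IPO, then any IPO transition $x\itrans{e}x'$ lifts to $dx\itrans{c}gx'$. Indeed $x\itrans{e}x'$ provides an IPO square $ex=d_xl_x$ with $d_x\in\mathcal{D}$, $x'=d_xr_x$; pasting $\sigma$ on top, Lemma~\ref{ipo-pasting}(i) makes $c(dx)=(gd_x)l_x$ an IPO, and $gd_x\in\mathcal{D}$, so $dx\itrans{c}(gd_x)r_x=gx'$. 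Silent steps are the special case $c=e=\mathit{id}$, $g=d$, where the upper square degenerates to $\mathit{id}\cdot d=d\cdot\mathit{id}$, which is easily checked to be an IPO; hence $x\itrans{\mathit{id}}x'$ lifts to $dx\itrans{\mathit{id}}dx'$. Decomposing $b\witrans{e}v'$ as $b\itrans{\mathit{id}}{}^{*}b_1\itrans{e}b_2\itrans{\mathit{id}}{}^{*}v'$ and lifting the three segments — the first through $d$, the middle through $\sigma$, the last through $g$ (each silent portion by induction on its length) — yields $db\witrans{c}gv'$. Since $g\in\mathcal{D}$ and $v\wisim v'$, we get $\langle w,w'\rangle=\langle gv,gv'\rangle\in\mathcal{S}$, closing the bisimulation.

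I expect the re-composition to be the crux. The delicate points are: (a) checking that pasting $\sigma$ onto the IPO square of each transition preserves the IPO property in the exact orientation of Lemma~\ref{ipo-pasting} (noting that all relevant outer squares are redex squares and so have RPOs); and (b) transporting the silent segments. Point (b) is precisely where reactivity is indispensable: only because $d$, and the derived $g$, lie in the composition-reflecting subcategory $\mathcal{D}$ do the composite reactive contexts $dd_x$ and $gd_x$ remain reactive, so the reactions still fire — this is exactly why the statement is restricted to reactive contexts rather than to all contexts. A minor technicality, easily dispatched, is the treatment of degenerate (identity) labels, for which one verifies directly that the pertinent squares are IPOs.
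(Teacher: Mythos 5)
Your proof is correct. The paper itself gives no argument for Theorem~\ref{wcon} (it defers to the observation at the end of Section~5 of \cite{LM00}), but your decompose--recompose strategy --- factor the IPO square for $da$ through the redex RPO of $a$ to get an inner IPO transition $a\itrans{e}v$ and an upper IPO square $cd=ge$, match weakly using $a\wisim b$, then lift the silent segments through the reactive contexts $d$ and $g$ and the visible step through the upper square via Lemma~\ref{ipo-pasting} --- is exactly the technique the paper employs in its proofs of Propositions~\ref{weak-congruence} and~\ref{list-weak-congruence}, and your use of composition-reflection to get $f,g\in\mathcal{D}$ and of the identity IPO squares for the $\tau$-segments is where reactivity is genuinely needed, just as you say.
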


\section{Extending the Theory of Reactive Systems}\label{ext}

In this section, we present some original results concerning the lts obtained by the RPO construction. These results concern two issues:

\begin{enumerate}[\hbox to8 pt{\hfill}]
 \item{\hskip-12 pt\bf Weak-bisimilarity:}\  Since in the $\lambda$-calculus the weak bisimilarity is the equivalence to be used, we present some general conditions assuring that the weak bisimilarity, on the lts obtained by an IPO construction, is a congruence w.r.t. all contexts.
\item{\hskip-12 pt\bf Pruning the lts tree:}\ In order to obtain a feasible lts, i.e., a lts with a reduced set of transitions, possibly finitely branching, it is often necessary to prune the lts obtained by an IPO construction. We present some general conditions allowing to prune IPO lts, without modifying the induced (weak)-bisimilarity.
\end{enumerate}
We present our results in two different versions, the first one is
quite simple, but it does not apply to our particular case, so we
present a second version that is more involved but suits our needs.
We choose to present the simple first version of our results as an
introduction to the second one, and also because it can have
applications in modeling languages different from the
$\lambda$-calculus.

Some preliminary definitions are necessary.

\begin{defi}
Given a lts obtained by the IPO construction: 
\begin{enumerate}[$\bullet$]
\item
Given a set of labels $L$, the \emph{$L$-restricted IPO lts} is the
 lts obtained by removing from the IPO lts all transitions not labeled
 by elements in $L$.  We denote by $\wsim_{L}$ the weak bisimilarity
 induced by the $L$-restricted IPO lts.
\item
We denote by $R$ the set of labels that are reactive contexts.  We
denote by $\wrsim$ the weak bisimilarity induced by the $R$-restricted
IPO lts.
\item
In a reactive system, we say that the family of IPO transitions with
label $f: I_0 \rightarrow I_1$ is \emph{definable by contexts} if
there exists a list of contexts $e_1, \ldots , e_h: I_0 \rightarrow
I_1$ such that, for all $t: 0 \rightarrow I_0$, we have that: $\forall
i.\ t \itrans{f} e_i t$ and $t \itrans{f} t' \ \Longrightarrow \
\exists i. \ t' = e_i t$.
\end{enumerate}
\end{defi}

\noindent 
Intuitively, a family of IPO transitions with label $f: I_0
\rightarrow I_1$ is definable by contexts if $f$ is an IPO for any
arrow $t: 0 \rightarrow I_0$ and the IPO transitions on $f$ can be
described by contexts, that is, they do not modify the internal
structure of the term $t$.

\begin{prop} \label{weak-congruence}
Let  $\mathbf{C}$ be a reactive system having redex RPOs. If any IPO context is either reactive or definable by contexts (or both), then the weak IPO bisimilarity $\wisim$ (with the identity
IPO context unobservable) is a congruence.  Moreover $\wisim$ coincides with $\wrsim$.
\end{prop}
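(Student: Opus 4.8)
The plan is to reduce both assertions to a single coinductive claim: that the context closure $\mathcal{V}=\{\langle ca,cb\rangle \mid a \wrsim b,\ c\text{ a context}\}$ of $\wrsim$ is a weak bisimulation on the \emph{full} IPO lts. First I would note that $\mathrm{id}$, being an identity arrow, lies in the subcategory $\mathcal{D}$, so $\mathrm{id}$ is reactive; hence every weak $R$-transition is also a weak full transition, and the inclusion $\wisim \subseteq \wrsim$ is immediate. Once $\mathcal{V}$ is shown to be a weak bisimulation, Lemma~\ref{altcar} gives $\mathcal{V}\subseteq\wisim$; taking $c=\mathrm{id}$ yields $\wrsim\subseteq\wisim$, hence $\wisim=\wrsim$ with the previous inclusion, while the general membership $\langle ca,cb\rangle\in\mathcal{V}\subseteq\wisim=\wrsim$ is exactly the congruence of $\wrsim$ (and therefore of $\wisim$). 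So everything follows from the one coinduction, which I would carry out by matching a single transition $ca\itrans{g}u$ with a weak transition $cb\witrans{g}v$ landing back in $\mathcal{V}$.

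The engine is IPO pasting together with the composition-reflecting property of $\mathcal{D}$. Given $ca\itrans{g}u$, with underlying IPO $g(ca)=dl$ and $d\in\mathcal{D}$, I would take the RPO of the redex square $\langle a,l;gc,d\rangle$ for $a$ (it exists, the system having redex RPOs): this produces an IPO transition $a\itrans{c'}u'$ with a residual $e$ satisfying $gc=ec'$, $d=ed'$, and $u=eu'$, and Lemma~\ref{ipo-pasting} makes the residual square an IPO exactly as in the proof of Theorem~\ref{ipocon}. Since $d=ed'\in\mathcal{D}$, composition-reflection forces $e,d'\in\mathcal{D}$; in particular the residual $e$ is reactive. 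The crucial observation is a dichotomy: $c'$ is an IPO context, hence by hypothesis reactive or definable by contexts, and if $c'\in\mathcal{D}$ then $gc=ec'\in\mathcal{D}$, whence $c\in\mathcal{D}$ again by composition-reflection. Contrapositively, \emph{whenever the ambient context $c$ is non-reactive, every such inner label $c'$ is definable by contexts.}

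This organizes the match into two cases. If $c\notin\mathcal{D}$, then $c'$ is definable, so $u'=e_ia$ for one of the fixed contexts $e_1,\dots,e_h$ of that family; since $b\itrans{c'}e_ib$ as well, uniformity of the family lets the same reactive residual $e$ paste onto this $c'$-transition (Lemma~\ref{ipo-pasting}(i)), giving a single step $cb\itrans{g}(ee_i)b$, and $u=(ee_i)a$ closes up with $\langle u,(ee_i)b\rangle\in\mathcal{V}$. This case is purely structural: no reductions are replayed. If instead $c\in\mathcal{D}$, I split on $c'$: when $c'$ is definable the same structural argument applies; when $c'$ is reactive, $a\rtrans{c'}u'$ is an $R$-transition, so $a\wrsim b$ supplies a weak match $b\wrtrans{c'}v'$ with $u'\wrsim v'$, which I transport through the \emph{reactive} contexts $c$ and $e$ to obtain $cb\witrans{g}ev'$, closing with $\langle eu',ev'\rangle\in\mathcal{V}$. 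This sub-case is precisely the reactive-context reasoning underlying Theorem~\ref{wcon}, now run with $\wrsim$ as the bisimulation.

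The main obstacle, and the point the whole construction is engineered around, is the transport of the \emph{internal $\tau$-steps} of a matching transition into the ambient context: a reduction $b\to b'$ need not survive the prefix $c$ unless $c$ is reactive, so the naive repasting of a weak match fails for arbitrary contexts. The hypothesis resolves this exactly: for reactive $c$ the $\tau$-steps do lift (and remain $\mathrm{id}$-labelled IPO transitions), whereas for non-reactive $c$ the composition-reflecting dichotomy guarantees that the only transitions to be matched are the definable, structural ones, which carry no internal reductions to transport. I expect the fiddly part of a full write-up to be verifying that the lifted and repasted steps are genuine IPO transitions with the intended labels, but these are precisely the IPO-pasting bookkeeping facts already isolated in Lemma~\ref{ipo-pasting} and in the proof of Theorem~\ref{wcon}.
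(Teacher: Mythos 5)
Your proposal is correct and follows essentially the same route as the paper's proof: the same contextual closure of $\wrsim$ as the candidate bisimulation, the same RPO factorization of the redex square followed by IPO pasting, and the same case split on the inner label being definable by contexts versus reactive (your reorganization of the cases around whether the ambient context $c$ is reactive is just the contrapositive of the paper's observation that a reactive inner label forces $c$ reactive by composition-reflection). The only difference is presentational.
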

\proof
Consider the relation $S = \{\, \langle ct, cu \rangle  \mid t \wrsim u, \ c \ \mbox{context}\, \}$. 
It is immediate that $\wisim \subseteq \wrsim$, and from this, $\wisim \subseteq \{ \langle ct, cu \rangle  \mid t \wisim u, \ c \ \mbox{context} \} \subseteq S$.
If we prove also the inclusion $S \subseteq \wisim$, then all relations are equal and $\wisim$ coincides with its contextual closure, i.e., it is congruence. 
By Lemma~\ref{altcar}, in order to prove $S \subseteq \wisim$ it is sufficient to show that, for any $ \langle ct, cu \rangle  \in S$, if $ct \itrans{f} t'$ then there exists $u'$ s.t.\ $cu \witrans{f} u'$ with $t' S u'$.

Consider the following diagram:
\[
\xymatrix{
0 \ar[r]^{t} \ar[d]_{l} & I_0 \ar[r]^{c} \ar[d]_{f'} & I_2 \ar[d]^{f} \\
I_3 \ar[r]_{d}          & I_1 \ar[r]_{d'}           & I_4 
}
\] 
where the outermost rectangle is the IPO inducing the transition $ct \itrans{f} t'$, 
namely $t' = d' d r$ with $ \langle l,r \rangle $ a reaction rule, while the left square is a RPO  of the redex square.  
By Lemma~\ref{ipo-pasting}, the IPO pasting property, we have that also the  right-hand square of the diagram is an IPO.
 
There are two cases to consider: 
\begin{enumerate}[(i)]
\item
If the context $f'$ is definable by contexts, since $t \itrans{f'}
dr$, there exists a context $e$ such that $d r = e t$ and $t' =d' e
t$, it follows that $u \itrans{f'} e u$.  That is, there exist a
reaction rule $ \langle l_1, r_1 \rangle $ and a reactive context
$d_1$ s.t.\ $e u = d_1 r_1$, and the left-hand square of the following
diagram is a IPO.
\[
\xymatrix{
0 \ar[r]^{u} \ar[d]_{l_1} & I_0 \ar[r]^{c} \ar[d]_{f'} & I_2 \ar[d]^{f} \\
I_3 \ar[r]_{d_1}          & I_1 \ar[r]_{d'}           & I_4 
}
\] 
Since the right-hand square is IPO, by the IPO pasting property,
Lemma~\ref{ipo-pasting}, also the outermost rectangle is an IPO.  It
follows that $c u \itrans{f} d' d_1 r_1 = d' e u$, which implies the
claim.

\item
If the context $f'$ is reactive, then it so also the context $d' f'$
(composition of reactive contexts) and the context $c$ (reactive
contexts are composition-reflecting).  Moreover, by the definition of
bisimilarity, there exists $u_0$ such that $u \witrans{f'} u_0$ (which
means
$u \itrans{Id}^{\ast} u_1 \itrans{f'} u_2 \itrans{Id}^{\ast} u_0$)
with $u_0 \wrsim dr$.  Since $c$ is reactive and squares of the form
\[
\xymatrix{
I_0 \ar[r]^{c} \ar[d]_{Id} & I_2 \ar[d]^{Id} \\
I_1 \ar[r]_{c}           & I_3 
}
\] 
are IPOs, by composition of IPO squares (and by induction) it is easy
to prove that $cu \itrans{Id}^{\ast} cu_1 \itrans{f} d'u_2
\itrans{Id}^{\ast} d'u_0$, which implies the claim.\qed
\end{enumerate}

For dealing with the $\lambda$-calculus, we present a second
proposition that is similar in spirit to
Proposition~\ref{weak-congruence}, although it is not a direct
generalization.  The second proposition considers both the category of
unary linear term contexts and a category of ``multi-holed'' linear
term contexts.  The category of unary contexts is the most suitable
for the IPO construction, while the category of multi-holed contexts
is useful to represent some transitions (in the lts) through
insertions of terms in suitable contexts.

The following definition formalizes the relation existing between the
two categories of contexts.

\begin{defi}
A category $\mathcal{D}$ is a \emph{list extension} of a category $\mathcal{C}$ if the following hold:
\begin{enumerate}[$\bullet$]
\item $\mathcal{C}$ contains a distinguished object $0$.
\item The objects of $\mathcal{D}$ are finite  lists of objects of $C$ different from $0$.
\item By identifying $0$ with the empty list $\langle \ \rangle$, and any other object $I$ in $\mathcal{C}$ with the singleton list $\langle I \rangle$, $\mathcal{C}$ is a \emph{full subcategory} of $\mathcal{D}$.
\item There exists a \emph{concatenation} functor $\otimes$ from $\mathcal{D} \times \mathcal{D}$ to $\mathcal{D}$ acting as concatenation on objects $\langle I_0, \ldots, I_n \rangle \otimes \langle J_0, \ldots, J_m \rangle = \langle I_0, \ldots, I_n, J_0, \ldots, J_m \rangle $ and being associative on arrows. 
\end{enumerate} 
\end{defi}
\noindent 
In the spirit of the previous remark we will call unary (single-holed) contexts the arrows in  $\mathcal{C}$ (with domain different from $0$), and multi-holed contexts the arrows in $\mathcal{D}$.

Two other definitions are necessary.
\begin{defi}
Given a reactive system $\mathbf{C}$ on a category $\mathcal{C}$, and a category $\mathcal{D}$, list extension of $\mathcal{C}$:
\begin{enumerate}[(i)]
\item
we define a multi-holed context $g : \langle I_0, \ldots, I_n \rangle \rightarrow  I $ \emph{IPO uniform} if for any  context $f: I \rightarrow J$
 appearing as label in the IPO lts, there exists a list of multi-holed contexts  $g_1: \langle I_{1,0}, \ldots, I_{1,n_1} \rangle \rightarrow  
 J,  \ldots,  
g_h : \langle I_{h,0}, \ldots, I_{h,n_h} \rangle \rightarrow J $, and a list of functions 
$l_1 : \{0, \ldots, n_1 \} \rightarrow \{0, \ldots, n \},
\ldots, 
 l_h : \{0, \ldots, n_h \} \rightarrow \{0, \ldots, n \}$   such that,
for any n-tuple of $\mathcal{C}$ terms $t_0 : 0 \rightarrow I_0, \ldots, t_n : 0 \rightarrow I_n$, we have that:
\begin{enumerate}[$- $]
\item $\forall i.\ g(t_0 \otimes \ldots \otimes t_n) \itrans{f} g_i (t_{l_i(0)} \otimes \ldots \otimes t_{l_i(n_i)})$ and
\item
 $g(t_0 \otimes \ldots \otimes t_n) \itrans{f} t' \Longrightarrow \exists i. \ t' = g_i (t_{l_i(0)} \otimes \ldots \otimes t_{l_i(n_i)})$;
 \end{enumerate}
\item[(ii)]
a context $g : \langle I_0, \ldots, I_n \rangle \rightarrow  I $ has a \emph{reactive index $i$} if for any list of $n$ terms $t_0 : 0 \rightarrow I_0, \ldots, t_{i-1}, t_{i+1}, \ldots,  t_n : 0 \rightarrow I_n$, the context $g (t_0 \otimes \ldots \otimes t_{i-1} \otimes id_{I_i} \otimes t_{i+1} \otimes \ldots \otimes t_n): I_i \rightarrow I$, seen as a context in $\mathcal{C}$, is reactive. 
\end{enumerate} 
\end{defi}
\noindent 
Intuitively, a context $g$ is IPO uniform if the behavior wrt the IPO reaction of the term $g( t_{l_i(0)} \otimes \ldots \otimes t_{l_i(n_i)} )$ does not depend on the terms $t_{l_i(0)}, \ldots, t_{l_i(n_i)}$.  We remark that the notion of ``uniform'' is not a generalization of the notion of ``definable by contexts''.
\begin{prop}\label{list-weak-congruence}
Let $\mathbf{C}$ be a reactive system  having redex RPOs. 
\begin{enumerate}[\em(i)]
\item
The weak IPO bisimilarity $\wisim$ (with the identity IPO context unobservable) is a congruence if there exists a category $\mathcal{D}$, list extension of $\mathcal{C}$ such that any (multi-holed) context $g : \langle I_0, \ldots, I_n \rangle \rightarrow  I $ is either IPO uniform or it has a reactive index (or both).
\item
Moreover, if the reaction relation is deterministic, i.e., any term can react in at most  one possible way, then the relation $\wisim$ coincides with $\wrsim$.
\end{enumerate}
\end{prop}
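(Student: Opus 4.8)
The plan is to mirror the proof of Proposition~\ref{weak-congruence}, replacing the unary contextual closure by the multi-holed one supplied by the list extension $\mathcal{D}$. For part~(i), I would define
\[
S = \{\, \langle g(t_0 \otimes \ldots \otimes t_n),\, g(u_0 \otimes \ldots \otimes u_n)\rangle \mid g : \langle I_0, \ldots, I_n\rangle \rightarrow I \text{ in } \mathcal{D},\ \forall j.\ t_j \wisim u_j \,\}.
\]
Since every unary context is a one-holed arrow of $\mathcal{D}$, the congruence statement $t \wisim u \Rightarrow ct \wisim cu$ is the special case of $S \subseteq \wisim$ obtained for one-holed $g$, so it suffices to prove the latter. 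By Lemma~\ref{altcar} (or, if the reactive case returns pairs only in $S^*$, by the up-to principle of Lemma~\ref{upto}) it is then enough to show that $S$ is a weak bisimulation up-to: for $\langle g(\vec t), g(\vec u)\rangle \in S$ and any transition $g(\vec t) \itrans{f} t'$, I must produce $u'$ with $g(\vec u) \witrans{f} u'$ and the target pair in $S^*$.

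The case split is driven by the hypothesis on $g$. If $g$ is IPO uniform, the two defining clauses give directly that $t' = g_i(t_{l_i(0)} \otimes \ldots \otimes t_{l_i(n_i)})$ for some $i$ and that $g(\vec u) \itrans{f} g_i(u_{l_i(0)} \otimes \ldots \otimes u_{l_i(n_i)})$; since $t_{l_i(k)} \wisim u_{l_i(k)}$ for every $k$, the target pair lies in $S$ with witnessing context $g_i$. This case needs no further analysis and, notably, no determinism. If instead $g$ has a reactive index $i$, I would write $g(\vec t) = c(t_i)$ and $g(\vec u) = c'(u_i)$, where $c = g(t_0 \otimes \ldots \otimes id_{I_i} \otimes \ldots \otimes t_n)$ and $c' = g(u_0 \otimes \ldots \otimes id_{I_i} \otimes \ldots \otimes u_n)$ are both reactive by definition of reactive index. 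I would then decompose the transition exactly as in the reactive case of Proposition~\ref{weak-congruence}: factor the redex square for $t_i$ through its RPO (which exists since $\mathbf{C}$ has redex RPOs) and apply the IPO pasting Lemma~\ref{ipo-pasting} to extract an IPO transition $t_i \itrans{f'} \widehat{d}\,r$ together with a residual $d'$ with $f c = d' f'$ and $t' = d' \widehat{d}\,r$. Matching $t_i \wisim u_i$ against $\itrans{f'}$ by Lemma~\ref{altcar} yields $u_i \witrans{f'} u_i'$ with $\widehat{d}\,r \wisim u_i'$, and reapplying the reactive context (using that the identity squares under a reactive context are IPOs) lifts this to $c'(u_i) \witrans{f} d' u_i'$, whose target pairs with $t'$ in $S$.

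The hard part will be this reactive-index case, for two reasons. First, one must check that the residual $d'$ and the minimal label $f'$ extracted from the redex square of $t_i$ do not depend on the filling of the remaining holes, so that the \emph{same} factorisation $f c' = d' f'$ can be reused to lift the transition on the $u$-side; this is exactly where the locality of the reactions is needed, and it is the reason the result is phrased for a list extension rather than for arbitrary contexts. Second, for part~(ii) I would rerun the whole argument with the $\wrsim$-based closure $S' = \{\, \langle g(\vec t), g(\vec u)\rangle \mid \forall j.\ t_j \wrsim u_j \,\}$: since $\wisim \subseteq \wrsim$ holds unconditionally and $\wrsim \subseteq S'$, establishing $S' \subseteq \wisim$ forces $\wisim = \wrsim$. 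The obstacle is that $\wrsim$ only offers matches along \emph{reactive} labels, so the weak transition $u_i \witrans{f'} u_i'$ must be reconstructed from reactive-labelled and $\tau$-labelled steps alone; here determinism of the reaction relation is precisely what guarantees that the reduction path mimicking $t_i$ is unique and hence faithfully reproduced on the $u$-side, closing the gap that is harmless in the IPO-uniform case but genuine for reactive indices.
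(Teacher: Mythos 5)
Your skeleton (contextual closure $S$, the up-to principle of Lemma~\ref{upto}, the case split on IPO-uniform versus reactive index, IPO pasting in the reactive branch) matches the paper's, but the reactive-index case as you set it up has a genuine gap that the paper's proof is specifically structured to avoid. You write $g(\vec{t})=c(t_i)$ and $g(\vec{u})=c'(u_i)$ with $c=g(t_0\otimes\ldots\otimes id_{I_i}\otimes\ldots\otimes t_n)$ and $c'=g(u_0\otimes\ldots\otimes id_{I_i}\otimes\ldots\otimes u_n)$, and then want to reuse the factorisation $fc=d'f'$ obtained from the RPO of the redex square of $c(t_i)$ to lift the matching move on the $c'(u_i)$ side. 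But $c\neq c'$, and the residual $d'$ in general literally contains the terms $t_j$ occupying the other holes, so the ``same'' factorisation is simply not available for $c'$; you flag this as ``the hard part'' and hope it follows from locality, but nothing in the hypotheses gives it to you. The paper sidesteps the problem by a \emph{double induction}, on the length of the weak transition \emph{and on the number of holes of $g$}: in the reactive-index case it first forms $g'=g(t_0\otimes id_{I_1}\otimes\ldots\otimes id_{I_n})$, which has one hole fewer, and uses the hole-count induction to replace $u_1,\ldots,u_n$ while keeping $t_0$ fixed; only after that does it face a \emph{single} reactive hole sitting in the \emph{same} unary reactive context $g''=g(Id_{I_0}\otimes u_1\otimes\ldots\otimes u_n)$ on both sides, for which the RPO/IPO-pasting decomposition is legitimate. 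Without the induction on the number of holes your argument does not go through.

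A second, smaller inaccuracy concerns where determinism enters in part~(ii). In the single-reactive-hole sub-claim the IPO label $f''$ extracted for $t_o$ from the first step of the chain need not be reactive; when it is not, the paper observes that a non-reactive unary context cannot have a reactive index, hence by the dichotomy hypothesis $f''$ is IPO uniform, which yields $t_o\itrans{f''}g'(t_o,\ldots,t_o)$ and $u_o\itrans{f''}g'(u_o,\ldots,u_o)$ for the \emph{same} list context $g'$; determinism is then used only to conclude that the resulting term is weakly bisimilar to the actual endpoint $t_o'$ of the chain (under a deterministic reaction relation, terms joined by $\tau$-reductions are weakly bisimilar). Your proposal never invokes IPO uniformity of the non-reactive label $f''$ and instead attributes the whole repair to ``uniqueness of the reduction path,'' which does not by itself produce a matching $\wrsim$-move for $u_o$ along a non-reactive label.
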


\proof Here we present only the proof of point (ii).  The proof of
point (i) is almost identical and can be derived, from the present
proof, by substituting the relation $\wrsim$ with $\wisim$, and by
simplifying some steps.

  By repeating the same arguments used at the beginning of the proof
  of Proposition~\ref{weak-congruence}, it is sufficient to prove that
  the relation
\[S = \{\, \langle\,g(t_0 \otimes \ldots \otimes t_n), g(u_0
\otimes \ldots \otimes u_n)\,\rangle\,\mid\, g : \langle I_0, \ldots,
I_n \rangle \rightarrow I, \ \forall i.\, t_i{}\wrsim{}u_i\, \}
\]
  is contained in the weak bisimilarity.  By Lemma~\ref{upto}, it is
  sufficient to show that for any $ \langle g(t_0 \otimes \ldots
  \otimes t_n), g(u_0 \otimes \ldots \otimes u_n) \rangle \in S$ and
  IPO-transition $f$, if $g(t_0 \otimes \ldots \otimes t_n)
  \witrans{f} t$, with $\itrans{f}$ the last step of the chain of
  reactions, then there exists $u$ s.t.\ $ g(u_0 \otimes \ldots
  \otimes u_n) \witrans{f} u$ with $t S^{\ast} u$.  The proof is by
  double induction on the number of steps of the transition $g(t_0
  \otimes \ldots \otimes t_n) \witrans{f} t$, and on the number $n$ of
  holes in the list context $g$.

The basic case is when $g(t_0 \otimes \ldots \otimes t_n) \witrans{f} t$ in $0$ steps ($f = id$), in this case there is nothing to prove. 

Now suppose $g(t_0 \otimes \ldots \otimes t_n) \itrans{f'} t' \witrans{f''} t$, in this case $(f' = id \wedge f'' = f)$ or  $(f' = f \wedge f'' = id \wedge t' = t$),

There are two cases to consider: 
\begin{enumerate}[(i)]
\item
The context $g$ is IPO-uniform: in this case there exists a context
$e: \langle I'_{0}, \ldots, I'_{n'} \rangle \rightarrow J_1$ and a
function $l : \{0, \ldots n' \} \rightarrow \{0, \ldots , n \}$ such
that $t'= e(t_{l(0)} \otimes \ldots \otimes t_{l(n')})$ and $g(u_{0}
\otimes \ldots \otimes u_{n}) \itrans{f'} e(u_{l(0)} \otimes \ldots
\otimes u_{l(n')})$.  By application of the inductive hypothesis, on a
smaller number of transitions steps, there exists $u$ s.t.\ $e(u_{l(0)}
\otimes \ldots \otimes u_{l(n')}) \witrans{f''} u$ with $t S^{\ast}
u$, and from which the claim follows.
\item
The context $g$ has a reactive index $i$, for the sake of simplicity,
assume $i = 0$. Consider the arrow $g' = g (t_0 \otimes id_{I_{1}}
\otimes \ldots \otimes id_{I_{n}}) : \langle I_1, \ldots, I_n \rangle
\rightarrow I $.  Since $g'(t_{1} \otimes \ldots \otimes t_n) = g(t_0
\otimes \ldots \otimes t_n) \witrans{f} t$, by inductive hypothesis,
on the number of holes in the multi-holed contexts, there exists $u$
such that $g'(u_{1} \otimes \ldots \otimes u_n) = g(t_0 \otimes u_{1}
\otimes \ldots \otimes u_n) \witrans{f} u$, with $t S^{\ast} u$.

Now consider the context $g'' = g(Id_{I_0} \otimes u_{1} \otimes \ldots \otimes u_n) : I_o \rightarrow I$. The context $g''$  is reactive and $g''(t_0) \witrans{f} u$.  
To obtain the claim, it remains to prove that there exists $u'$ s.t.\ $g''(u_0) = g(u_0 \otimes \ldots \otimes u_n) \witrans{f} u'$, with $u S^{\ast} u'$.

More generally we prove that for any reactive context $g_o : J_0 \rightarrow J_1$, any IPO context $f: J_1 \rightarrow J_2$, and any pair of terms $t_o, u_o$,  if $t_o \wrsim u_o$ and $g_o(t_o)\witrans{f} t_o'$  then there exists  $u_o'$ s.t.\ $g_o(u_o) \witrans{f} u_o'$ and $t_o' S^{\ast} u_o'$. 
The proof is by induction on the number of steps in the transition $g''(t_0) \witrans{f} t_o'$.
The basic case is when  the reaction is of zero steps; in this case there is nothing to prove. 

For the inductive case consider the following diagram of IPO squares defining the first reaction in the chain
\[
\xymatrix{
0 \ar[r]^{t_o} \ar[d]_{l} & J_0 \ar[r]^{g_o} \ar[d]_{f''} & J_1 \ar[d]^{f'} \\
J_3 \ar[r]_{d}          & J_4 \ar[r]_{d'}           & J_5 
}
\] 
We need to consider two cases.
The first one is where $f'$ is a reactive context ($f' \in \{ f, Id \}$). Since reactive contexts are 
composition-reflecting, then also the IPO context $f''$ is reactive.  By the definition of bisimilarity, $u_o \witrans{f''} u_i$ with $u_i \wrsim d r$.  By reactivity of $g_o$, using suitable IPO pasting diagrams, we can prove $g_o(u_o) \witrans{f'} d' u_i$. Now by applying the inductive hypothesis to the reduction $d'( d r) \witrans{f} t_o'$, we obtain the claim. 

The second case is where  $f'$ is a non reactive context ($f' = f$).  Since reactive contexts are compositional reflecting, then also the IPO context $f''$ is non reactive and therefore, by hypothesis, IPO uniform. Notice that the context $Id $ is an IPO context for the term $f''(t_o)$, by the IPO uniformity of $f''$, $Id $ is an IPO context also for $f''(u_o)$ and there exist a list context $g' \langle J_1, \ldots J_1 \rangle \rightarrow J$ s.t.\  $t_o \itrans{f''} g'(t_o, \ldots, t_o)$ and also  $u_o \itrans{f''} g'(u_o, \ldots, u_o)$.  Notice that,  if the reduction relation is deterministic, two terms that reduce one to the other via $\tau$ transitions are weakly bisimilar. It follows that $g_o(t_o)  \itrans{f} d' g'(t_0, \ldots, t_0) \wrsim t_o' $ and, by IPO pasting, $g_o(u_o) \itrans{f} d' g'(u_o, \ldots, u_o)$, from which we derive the claim.\qed
\end{enumerate}

\begin{rem}
Propositions~\ref{weak-congruence} and ~\ref{list-weak-congruence} above, about 
 congruence of the weak IPO  bisimilarity, are more related than what they look at first glance. From one side, 
  by exploiting the fact that the composition of a non-reactive context with any context gives a non-reactive 
 context, one can show that, if the non-reactive IPOs are definable by contexts,  then any non-reactive context is IPO-uniform.
  Note that the condition of ``definability by context'' is in general simpler to verify than the one of ``IPO-uniformity'', and so we prefer to present the given formulation of Proposition~\ref{weak-congruence}. On the other side, it would be possible to extend the notion of ``definability by context'' to the case of list extension categories, however to this
  aim  it would be necessary to present a series of new definitions, necessary to lift the IPO construction to the list extension categories. For the sake of simplicity, we prefer to avoid the introduction of these further notions.  
\end{rem}

\section{The Lambda Calculus}\label{lamc}

First, we recall the $\lambda$-calculus syntax together with
\emph{lazy} and \emph{cbv} reduction strategies and observational
equivalences. Then, we show how to apply the RPO technique to
$\lambda$-calculus, viewed as a context category, and we discuss some
problematic issues.

\subsection{Syntax, Reduction Strategies, Observational Equivalences}

\begin{defi}[Syntax] The set of \emph{$\lambda$-terms} $\Lambda$ is defined by
\[ (\Lambda\ni)\ M::=\  x \  | \ MM\ | \ \lambda x. M \ , \]
where $x\in \mathit{Var}$ is an infinite set of variables.
Let $\mathit{FV}(M)$ denote the set of free variables in $M$, and
let us denote by $\Lambda^0$ the set of \emph{closed} $\lambda$-terms.
\end{defi}  

As usual, $\lambda$-terms are taken up-to $\alpha$-conversion, and
application associates to the left.  We consider the standard notions
of $\beta$-rule and $\beta_V$-rule:
 
 \begin{defi}
\hfill \begin{enumerate}[(i)]
\item $\beta$-rule: $(\lambda x.M)N \rightarrow_{\beta} M[N/x]$;
\item $\beta_V$-rule: $(\lambda x.M)N \rightarrow_{\beta_V}  M[N/x]$,
  if $N$ is a variable or a  $\lambda$-abstraction.
\end{enumerate}
 \end{defi}

\noindent 
As usual, we denote by $=_{\beta}$ and $=_{\beta_V}$ the corresponding conversions.
 
A \emph{reduction strategy}
on the $\lambda$-calculus determines, for each term which is not a value, a suitable $\beta$-redex
appearing in it to be contracted. 
The lazy and cbv reduction strategies are defined on closed $\lambda$-terms as follows:

\begin{defi}[Reduction Strategies]\label{strat}
\hfill \begin{enumerate}[(i)]
\item The \emph{lazy strategy} $\rightarrow_l \subseteq \Lambda^0 \times \Lambda^0$  reduces the leftmost
$\beta$-redex, not appearing within a $\lambda$-abstraction. Formally, $\rightarrow_l$ is defined
by the rules:
\[\reguno{}{(\lambda x.M)N  \rightarrow_l M[N/x] } \qquad
\reguno{N\rightarrow_l N' }{NP\rightarrow_l N'P }
\]
\item The \emph{call by value strategy} $\rightarrow_v \subseteq \Lambda^0 \times \Lambda^0$  reduces the leftmost
$\beta_V$-redex, not appearing within a $\lambda$-abstraction. Formally, $\rightarrow_v$ is defined
by the following rules:  
\[\reguno{}{(\lambda x.M)V  \rightarrow_v M[V/x] } \qquad
  \reguno{N\rightarrow_v N' }{NP\rightarrow_v N'P }\qquad
  \reguno{N\rightarrow_v N'}{(\lambda x.M)N  
  \rightarrow_v(\lambda x.M) N' } 
\]
where $V$ is a closed value, i.e., a 
$\lambda$-abstraction.
\end{enumerate}
\end{defi}

\noindent 
We denote by $\rightarrow_{\sigma}^{*}$ the reflexive and transitive closure of a strategy
$\rightarrow_{\sigma}$, for $\sigma \in \{ l,v\}$, by $\mathit{Val}_{\sigma}$ the set of values, i.e., the set of terms on which the reduction strategy halts (which coincides with the set of $\lambda$-abstractions
in both cases), 
and by $M\Downarrow_{\sigma}$ the fact that there exists $V\in \mathit{Val}_{\sigma}$ such that
$M \rightarrow_{\sigma}^{*}V$.

As we will see in Section~\ref{lclt} below, each strategy defines a (deterministic) reactive system on
$\lambda$-terms in the sense of Definition~\ref{rslm}.  To this aim, it is useful to notice that the 
above reduction strategies can be alternatively determined by specifying suitable  sets of
\emph{reactive contexts} (see Remark~\ref{strate} below), 
which are subsets of the following  \emph{unary contexts}, i.e.,
contexts with a single hole:

 \begin{defi}[Unary Contexts]  Let $P\in \Lambda$. The \emph{unary contexts} are:
\[ C[\ ] \ ::= \  [\  ] \ | \ PC[\ ] \ | \ C[\ ] P \ |\ \lambda x.C[\ ]\ .\]  
The \emph{closed unary contexts} are the unary contexts with no free variables.
 \end{defi}
 
\begin{rem} \label{strate}
\hfill \begin{enumerate}[(i)]
\item The lazy strategy  $\rightarrow_l $ is the closure of the $\beta$-rule under the reactive contexts, corresponding to the closed applicative contexts: \ $ D[\ ] \ ::= \ [\  ]\  |\ D[\ ] P \ , $ where $P\in \Lambda^0$. 
\item
The cbv strategy $\rightarrow_v$ is the closure of the $\beta_V$-rule
 under the following closed reactive contexts: \ $ D[\ ] \ ::= \ [\  ]\ |\   D[\ ] P     \ | \  (\lambda x. M)D[\ ] \ ,$
 where $P , \lambda x.M\in \Lambda^0$. 
\end{enumerate}
 \end{rem} 

\noindent 
Each strategy induces an \emph{observational (contextual) equivalence}
\`a la Morris on closed terms, when we consider programs as {\it black
boxes} and only observe their ``halting properties''.

\begin{defi}[$\sigma$-observational Equivalence]  
Let $\rightarrow_{\sigma}$ be a reduction strategy and let $M,N\in
\Lambda^0$.  The {\em observational equivalence} $\approx_{\sigma}$ is
defined by
\[ M\approx_{\sigma} N\quad \mbox{\iff\  for any closed unary context } C[\ ].\ 
C[M]\Downarrow_{\sigma}\Leftrightarrow C[N]\Downarrow_{\sigma}\ .
\]
\end{defi} 
The definition of $\approx_{\sigma}$ can be extended to open terms by
considering closing (by-value) substitutions, i.e., for $M,N \in
\Lambda$ s.t.\ $FV(M,N) \subseteq \{ x_1 , \ldots , x_n \}$, we define:
\[ M \widehat{\approx}_{\sigma}N\quad\mbox{\iff\ for all closing (by-value) substitutions $\vec{P}$}, M[\vec{P}/\vec{x}] 
\approx_{\sigma} N[\vec{P}/\vec{x}] \ .
\]

\begin{rem}
Often in the literature, the observational equivalence is defined by
considering multi-holed contexts. However, it is easy to see that the
two notions of observational equivalences, obtained by considering
just unary or all multi-holed contexts, coincide.
\end{rem}

The problem of reducing the set of contexts in which we need to check
the behavior of two terms has been widely studied in the literature.
In particular, for both strategies in Definition~\ref{strat} above, a
\emph{Context Lemma} holds, which allows us to restrict ourselves to
\emph{applicative contexts} of the shape $ [\ ] \vec{P} $ ($[\ ]
\vec{V}$), where $\vec{P}$ ($\vec{V}$) denotes a list of closed terms
(values). Let us denote by $\approx_{\sigma}^{app}$ the observational
equivalence which checks the behavior of terms only in applicative
(by-value) contexts. This admits a coinductive characterization as
follows:

\begin{defi}[Applicative $\sigma$-bisimilarity]
\hfill \begin{enumerate}[(i)]
\item A relation ${\mathcal R} \subseteq \Lambda^0 \times \Lambda^0$ is 
\begin{enumerate}[$-$]
\item an \emph{applicative 
lazy bisimulation} if the following holds:
\[\langle M,N \rangle \in {\mathcal R} \ \Longrightarrow \ (M\Downarrow_{l}\
\Leftrightarrow\
N\Downarrow_{l}) \ \wedge\ 
\forall P\in \Lambda^0.\  \langle MP, NP\rangle\in
{\mathcal R} .
\]
\item an \emph{applicative 
cbv bisimulation} if the following holds:
\[\langle M,N \rangle \in {\mathcal R} \ \Longrightarrow \ (M\Downarrow_{v}\
\Leftrightarrow\
N\Downarrow_{v}) \ \wedge\ 
\forall V \mbox{ closed value}.\  \langle MV, NV\rangle\in
{\mathcal R} .
\]
\end{enumerate} 
\item The applicative equivalence $\approx_{\sigma}^{app}$ is the largest applicative bisimulation.
\end{enumerate}
\end{defi}

\noindent 
The following is a well-known result \cite{AO93,EHR92}:

\begin{lem}[Context Lemma]
\label{cont}
 $\approx_{\sigma} = \approx_{\sigma}^{app}$.
 \end{lem}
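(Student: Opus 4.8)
The plan is to prove the two inclusions $\approx_{\sigma}^{app} \subseteq \approx_{\sigma}$ and $\approx_{\sigma} \subseteq \approx_{\sigma}^{app}$ separately, since the Context Lemma asserts exactly that checking halting in all closed unary contexts is no finer than checking it in applicative contexts alone.

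\textbf{The easy direction.} First I would establish $\approx_{\sigma} \subseteq \approx_{\sigma}^{app}$. Here the point is that applicative contexts $[\ ]\vec{P}$ (resp.\ $[\ ]\vec{V}$) are a special case of unary contexts, so two $\approx_{\sigma}$-equivalent terms are in particular indistinguishable by applicative contexts. Concretely, I would show that the relation $\approx_{\sigma}$ is itself an applicative $\sigma$-bisimulation: if $M \approx_{\sigma} N$, then taking $C[\ ] = [\ ]$ gives $M\Downarrow_{\sigma} \Leftrightarrow N\Downarrow_{\sigma}$, and for any closed $P$ (value $V$) and any closed unary context $D[\ ]$, the context $D[[\ ]P]$ is again a closed unary context, so $D[MP]\Downarrow_{\sigma} \Leftrightarrow D[NP]\Downarrow_{\sigma}$; hence $MP \approx_{\sigma} NP$. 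Since $\approx_{\sigma}^{app}$ is the \emph{largest} applicative bisimulation, this yields $\approx_{\sigma} \subseteq \approx_{\sigma}^{app}$.

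\textbf{The hard direction.} The substantial content is $\approx_{\sigma}^{app} \subseteq \approx_{\sigma}$, i.e.\ applicative equivalence already implies indistinguishability in \emph{all} unary contexts. The plan is to proceed by induction on the structure of the unary context $C[\ ]$, showing that $M \approx_{\sigma}^{app} N$ implies $C[M]\Downarrow_{\sigma} \Leftrightarrow C[N]\Downarrow_{\sigma}$. The base case $C[\ ]=[\ ]$ is the halting clause of the bisimulation definition. For the inductive cases one must handle $PC[\ ]$, $C[\ ]P$, and $\lambda x.C[\ ]$; the application-on-the-left and abstraction cases interact with the reduction strategy, so the clean way is to strengthen $\approx_{\sigma}^{app}$ to a relation on open terms and prove it is a \emph{congruence} compatible with the strategy. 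The standard technique is \emph{Howe's method}: I would form the Howe closure $\widehat{\approx}_{\sigma}^{app}$ of applicative bisimilarity, which is by construction a congruence, verify it contains $\approx_{\sigma}^{app}$, then prove the key \emph{simulation} (or \emph{substitutivity}) lemma stating that the Howe closure is itself an applicative bisimulation up to the strategy — i.e.\ that $M \mathrel{\widehat{\approx}_{\sigma}^{app}} N$ and $M \rightarrow_{\sigma}^{*} V$ imply $N \rightarrow_{\sigma}^{*} W$ with the values suitably related. From this one concludes $\widehat{\approx}_{\sigma}^{app} \;=\; \approx_{\sigma}^{app}$ on closed terms, and since the Howe closure is a congruence, $M \approx_{\sigma}^{app} N$ gives $C[M] \mathrel{\widehat{\approx}_{\sigma}^{app}} C[N]$, hence $C[M]\Downarrow_{\sigma}\Leftrightarrow C[N]\Downarrow_{\sigma}$, completing the inclusion.

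\textbf{Main obstacle.} The crux, and the step I expect to be the most delicate, is the simulation lemma for the Howe closure — proving that it is preserved by the reduction strategy. The difficulty is the interplay of substitution with the relation: one needs that $M \mathrel{\widehat{\approx}_{\sigma}^{app}} N$ is stable under substituting related arguments, so that firing a $\beta$-redex (resp.\ $\beta_V$-redex) on one side can be matched on the other. This requires a careful analysis of the reduction strategy's structure (the grammar of reactive contexts from Remark~\ref{strate}), and in the cbv case the restriction that arguments be values forces extra care, since one may only substitute values and must track that related values remain related. The abstraction and left-application clauses of the inductive step both reduce to this lemma, so once the substitutivity property of the Howe closure is in hand, the remaining bookkeeping is routine. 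Since the result is cited as well known (\cite{AO93,EHR92}), I would either invoke Howe's method as a black box or sketch it at this level of detail rather than reproduce the full inductive verification.
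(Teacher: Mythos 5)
Your proposal is correct in outline, but it takes a genuinely different route from the paper. The paper does not prove Lemma~\ref{cont} directly at all: it is stated as a known result with a citation to \cite{AO93,EHR92}, and the paper's own (new) proof of the lazy case emerges later as a by-product of the RPO machinery, via the chain of inclusions $\approx_l\subseteq\approx_{l}^{app}\subseteq\approx_{lR}\subseteq\approx_{lI}\subseteq\approx_l$ in Proposition~\ref{sapp}. There the hard congruence step is not Howe's method but Corollary~\ref{unocon}, i.e.\ the general congruence theorem for weak IPO bisimilarity (Proposition~\ref{list-weak-congruence}), whose hypotheses (every multi-holed context is IPO uniform or has a reactive index) are checked once for the $\lambda$-calculus; the inclusion $\approx_{l}^{app}\subseteq\approx_{lR}$ is then a short case analysis on IPO labels, and $\approx_{lI}\subseteq\approx_l$ follows from the congruence property. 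Your Howe-closure argument plays exactly the role that Proposition~\ref{list-weak-congruence} plays in the paper: both are devices for getting a congruence out of an applicative/labelled characterization, and in both cases the delicate point is stability under substitution of related arguments (your substitutivity lemma for the Howe closure corresponds to the paper's verification that non-applicative multi-holed contexts are IPO uniform, which hinges on closedness of the terms plugged into the holes). What your approach buys is self-containedness and independence from the RPO framework --- it is the classical syntactic proof and works for both strategies uniformly; what the paper's approach buys is that the Context Lemma comes essentially for free once the general reactive-systems theory of Section~\ref{ext} is in place, which is precisely the point the authors want to make. Your easy direction ($\approx_{\sigma}$ is itself an applicative bisimulation, using closure of unary contexts under $D[[\ ]P]$) matches the paper's "holds by definition" step and is fine. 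One cosmetic caveat: the paper already reserves $\widehat{\approx}_{\sigma}$ for the open extension of $\approx_{\sigma}$ by closing substitutions, so you should pick a different symbol for the Howe closure to avoid a clash.
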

 
 By the Context Lemma, the class of contexts in which we have to check the behavior of terms 
 is smaller, however it is still infinite, thus the applicative bisimilarity is infinitely branching.
In the following,  we will study alternative coinductive characterizations
of the observational equivalences, arising from the application of Leifer-Milner technique.

\subsection{Lambda Calculus as a Reactive System}\label{lclt}

Both lazy and cbv $\lambda$-calculus can be endowed with a structure of
reactive system in the sense of Definition~\ref{rslm}, by considering
corresponding  context categories. 
 
\begin{defi}[Lazy, cbv $\lambda$-reactive Systems]
${\bf C}_{\sigma}^{\lambda}$, for $\sigma \in \{ l,v\}$,  consists of 
\begin{enumerate}[$\bullet$]
\item the category  whose objects are $0,1$, where the morphisms from 
0 to 1 are the closed terms  (up-to $\alpha$-equivalence), the morphisms from 1 to 1 are the  
unary closed contexts  (up-to $\alpha$-equivalence), and composition is context insertion;
\item the subcategory of reactive contexts is determined by the reactive contexts 
for the lazy and cbv strategy, respectively,  presented in Remark~\ref{strate};
\item the (infinitely many) reaction rules  are $(\lambda x.M)N\rightarrow_{\beta_{\sigma}}  M[N/x]$, for all $M, N$, where
$\rightarrow_{\beta_l}$ is $\rightarrow_{\beta}$-rule, while $\rightarrow_{\beta_v}$ is $\rightarrow_{\beta_V}$-rule.
\end{enumerate}
\end{defi}
\noindent
The above definition is well-posed, in particular the subcategory of reactive contexts is
composition-reflecting. 

One can easily check that the reactive system  $\mathbf{C}_{\sigma}^{\lambda}$ 
has redex RPOs;  this fact can be proved by rephrasing the corresponding proof for the category of term contexts of  \cite{Sew02}.  Here it is essential the fact that we consider only closed terms and closed contexts.

\begin{lem}
The reactive system ${\mathbf C}^{\lambda}_{\sigma}$, for $\sigma \in \{ l,v \}$, has redex RPOs.
\end{lem}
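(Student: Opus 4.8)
The plan is to construct the RPO of an arbitrary redex square explicitly, in the spirit of the term-context argument of \cite{Sew02}, exploiting the fact that in $\mathbf{C}^{\lambda}_{\sigma}$ all terms and contexts are closed and all contexts are unary. First I would pin down the shape of a redex square. Since the only objects of $\mathcal{C}$ are $0$ and $1$, and the only homsets with non-identity arrows are $\mathcal{C}(0,1)$ (closed terms) and $\mathcal{C}(1,1)$ (closed unary contexts), while $\mathcal{C}(1,0)=\emptyset$, every corner is forced to be $1$: from $t:0\to I_2$ and $l:0\to I_3$ we get $I_2=I_3=1$, and since there is no arrow $1\to0$, both $I_4$ and the mediating object $I_5$ of any candidate (Fig.~\ref{runo}(ii)) must also be $1$. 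So it suffices to produce, for each redex square $ct=dl=:w$ (with $c,d:1\to1$, $d$ reactive, $t,l:0\to1$ closed and $l=(\lambda x.M)N$), closed unary contexts $e,f,g:1\to1$ with $ge=c$, $gf=d$, $et=fl$, and to verify the universal property.

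The construction reads off the positions of the two holes in the term tree of $w$. Let $p_c$ be the position at which $c$ inserts $t$ and $p_d$ the position at which $d$ inserts $l$, and let $p_g$ be their longest common prefix (their meet; in a $\lambda$-term any two positions are either nested or disjoint, so this is well defined). I would then take $g$ to be $w$ with a hole at $p_g$, take $e$ to be the subterm of $w$ at $p_g$ with a hole reopened at $p_c$, and take $f$ to be the subterm at $p_g$ with a hole reopened at $p_d$; in the nested subcases $p_g=p_c$ or $p_g=p_d$ one of $e,f$ degenerates to the identity context. A direct check (splitting on whether $p_c,p_d$ are disjoint or nested) gives $ge=c$, $gf=d$, and $et=fl$, both sides being the subterm of $w$ at $p_g$, so $\langle 1,e,f,g\rangle$ is a candidate.

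The delicate point — and exactly where closedness is \emph{essential}, as anticipated in the statement — is that $e,f,g$ must themselves be \emph{closed} contexts. Here I would use that by Remark~\ref{strate} the reactive context $d$ never places its hole under a $\lambda$-binder, so $p_d$, and hence its prefix $p_g$, lies under no binder and the subterm of $w$ at $p_g$ is closed; since $t$ and $l$ are closed, excising them removes whole closed subtrees and thus leaves closed residual contexts while freeing no variable. This step is uniform in $\sigma\in\{l,v\}$, since the cbv reactive contexts $(\lambda x.M)D[\ ]$ still keep the hole in argument position rather than under the binder.

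For universality, given any other candidate $\langle 1,e',f',g'\rangle$ with $g'e'=c$, $g'f'=d$, $e't=f'l$, I would note that $g'$ is forced to be $w$ with a hole at some position $p_{g'}$, and that the equations $g'e'=c$ and $g'f'=d$ make $p_{g'}$ a common prefix of $p_c$ and $p_d$, hence a prefix of the meet $p_g$. The mediating morphism $h:1\to1$ is then the subterm of $w$ at $p_{g'}$ with a hole reopened at $p_g$; the same closedness bookkeeping shows $h$ is legal, and one checks $g'h=g$, $he=e'$, $hf=f'$ (Fig.~\ref{runo}(iii),(iv)). Uniqueness is immediate, since insertion into the fixed context $g'$ is injective, so $g'h=g'h'$ forces $h=h'$. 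I expect the main obstacle to be precisely this closedness bookkeeping, together with the careful disjoint-versus-nested case analysis on $p_c,p_d$; once the position picture is in place, the categorical universal property itself is routine.
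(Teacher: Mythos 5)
Your proposal is correct and is essentially the proof the paper has in mind: the paper omits the argument, remarking only that one rephrases Sewell's RPO construction for term-context categories and that closedness of terms and contexts is the essential ingredient, which is exactly the position/meet construction you carry out, including the key observation that reactivity of $d$ keeps the meet position outside all binders so that the factoring contexts remain closed. No gap to report.
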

 
 The IPO contexts of a  closed term for the lazy and cbv reactive systems
are summarized in the second columns of the tables in Fig.~\ref{tb}.  Intuitively, such contexts are minimal for the given reduction 
to fire. Vice versa, contexts different from the ones above are not IPO;  \emph{e.g.} $C[\ ]P$, for terms of the shape $\lambda x.M$, is
not IPO if $C[\ ]$ is different from $\lambda x. C_1[\ ]$ and  $[\ ]$, because otherwise the reduction can fire already in $C[\ ]$.

\begin{figure} 
\begin{center}
\begin{tabular}{| l | l | l |}
  \multicolumn{3}{c}{\bf Lazy IPO  lts's}\\
\hline
term & IPO contexts & reactive IPO contexts\\
\hline\hline
 $\lambda x.M$  &  $[\ ]P,$ $(\lambda x. C[\ ]) P$, $PC[ \ ]$ \hspace*{0.2cm} 
 & [\  ]P \\
\hline
 $(\lambda x.M)N\vec{P}$ \hspace*{0.2cm}& $[\ ] $, $(\lambda x. C[\ ]) P$, $P C[ \ ]$
 & [\ ]  \\
\hline
\end{tabular} 
\vspace*{0.4cm}

\begin{tabular}{| l |l | l|}
\multicolumn{3}{c}{\bf Cbv IPO lts's}\\
\hline
term $M$ & IPO contexts & reactive IPO contexts \\
\hline\hline
 $\lambda x.M_1$  &  $[\ ]P,$ $(\lambda x. C[\ ]) P$, $RC[ \ ]$,  $(\lambda x.Q)C_1 [\ ]$\hspace*{0.2cm}
 &   $[\ ]P$, $(\lambda x.Q) [\ ]$ 
  \\
\hline
 $(\lambda x.M_1 )N\vec{P}$ \hspace*{0.2cm} & $[\ ] $, $(\lambda x. C[\ ]) P$, $RC[ \ ]$,  $(\lambda x.Q)C_1 [\ ]$&
 $[\ ] $
  \\
\hline
\end{tabular} \bigskip
\\ where $R$ is not a value and $C_1[M]$ is a value.
\caption{IPO contexts for the lazy/cbv lts's.}
\label{tb}
\end{center}
\end{figure}

The strong versions of context and IPO bisimilarities are too fine,
since they take into account reaction steps, and tell apart
$\beta$-convertible terms. Trivially, $I$ and $II$, where $I=\lambda
x.x$, are equivalent neither in the context bisimilarity nor in the
IPO bisimilarity, since $I\stackrel{[\ ]}{\not \rightarrow} $, while
$II\stackrel{[\ ]}{ \rightarrow} $ (both in the lazy and cbv case).
On the other hand, one can easily check that the weak context
bisimilarity, where the identity context $[\ ]$ is unobservable,
equates all closed terms.  The appropriate notion is that of weak IPO
bisimilarity, which, as we will see, turns out to capture exactly the
lazy and cbv equivalences.

It is interesting to observe that also the observational equivalence
and the applicative bisimilarity can be characterized as weak
bisimilarities on suitable context lts's.  In fact it is easy to prove
that the observational equivalence $\approx_{\sigma}$ coincides with
the weak bisimilarity on a restriction of the context lts built on
$\mathbf{C}_{\sigma}^{\lambda}$, defined by $M \stackrel{C[\
]}{\longrightarrow} N$ \iff\ $M \stackrel{C[\ ]}{\longrightarrow_C} N$
and $M\Downarrow_{\sigma}$. Similarly, the applicative equivalence can
be characterized by considering only applicative contexts in the lts.

In the following we will show that all these lts's induce the same notion of equivalence. 
Moreover, using the results of Section~\ref{ext}, we will show that the set 
of IPO contexts in the weak IPO bisimilarity to be considered can be significantly simplified. Then,
from
the fact that  the 
weak IPO lts is the smallest of the ones above, it follows that  it induces the simplest proofs that two terms are bisimilar.
 
Now, let us denote 
by
  $\approx_{\sigma I}$, for $\sigma \in \{ l,v\} $, the lazy/cbv \emph{weak IPO bisimilarity}, where the
  identity context is unobservable.
  In order to prove that $\approx_{\sigma I}$ is a congruence w.r.t. all contexts, we need to consider
  the category $\mathcal{D}_{\sigma}^{\lambda}$, list extension of $\mathcal{C}_{\sigma}^{\lambda}$, where the
  objects are finite lists $\langle 1, \ldots , 1 \rangle$, and an arrow 
\[\underbrace{\langle 1, \ldots , 1 \rangle}_n \rightarrow
\underbrace{\langle 1, \ldots , 1 \rangle}_m
\]
  is a m-tuple of possibly closed multi-holed contexts $\langle C_1,
  \ldots , C_m\rangle$ with n holes all together.  Multi-holed
  contexts are defined by
\[ C[\  ] \ ::= \ [\ ] \ |\ P \ | \ C[ \ ]C[\ ] \ |\ \lambda x.C[\ ] \ .\]
  Then, in the lazy case one can show that any closed multi-holed
  context either is IPO uniform or it is of the shape $[\ ] C_1[\ ]
  \ldots C_k[\ ]$ with the first hole reactive.  Namely, if $C[\ ]$ is
  of the shape $[\ ] C_1[\ ] \ldots C_k[\ ]$, then clearly the first
  hole is reactive.  Otherwise, it is of the shape $P C_1[\ ] \ldots
  C_k[\ ]$ or $(\lambda x. C_0[\ ] )C_1[\ ] \ldots C_k[\ ]$.  In the
  first case, the reduction (if any) involves only $P$ or at most
  $PC_1[\ ]$, where $C_1[\ ]$ together with the term put in the holes,
  plays only a passive role as argument. In the latter case, since the
  term put in the holes is closed, again it will be not affected by
  the substitution induced by the reduction.  Similarly, for the cbv
  case, all the multi-holed contexts are IPO uniform, apart from the
  contexts ranging on the following grammar, which have a reactive
  hole: \[ D[\ ]\ ::=\ [\ ]\ | \ D[\ ]C[\ ] \ |\ (\lambda x.C[\ ])D[\
  ] \ ,\] where $C$ is a closed multi-holed context. Moreover, the
  reduction relation is obviously deterministic.  Thus, by applying
  Proposition~\ref{list-weak-congruence}, we have:
 
 \begin{cor}\label{unocon} 
\hfill \begin{enumerate}[\em(i)]
\item
 For all $M, N \in \Lambda^0$, for any  closed unary context $C[\ ]$, \[ M \approx_{\sigma I} N \ \Longrightarrow \ 
 C[M]\approx_{\sigma I} C[N]\ .\]
 \item Moreover
 \[ \approx_{\sigma I}{}={}\approx_{\sigma R} \ ,\]
  where $\approx_{\sigma R}$ denotes the weak IPO bisimilarity where only reactive contexts are 
  considered (see the third columns in the tables of  Fig.~\ref{tb}).
\end{enumerate}
 \end{cor}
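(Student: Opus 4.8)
The plan is to read the statement off Proposition~\ref{list-weak-congruence}, instantiated for the reactive system $\mathbf{C}_\sigma^\lambda$ together with the list extension $\mathcal{D}_\sigma^\lambda$ introduced above; here the weak IPO bisimilarity of $\mathbf{C}_\sigma^\lambda$ is by definition $\approx_{\sigma I}$ and its $R$-restricted weak bisimilarity is $\approx_{\sigma R}$. Two hypotheses must be discharged: that every closed multi-holed context of $\mathcal{D}_\sigma^\lambda$ is either IPO uniform or has a reactive index, and, for the second part, that the reaction relation is deterministic. Determinism is immediate, since the lazy and cbv strategies always contract the unique leftmost redex of the prescribed kind, so a closed term reacts in at most one way. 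Granting the dichotomy, part~(i) of the proposition yields that $\approx_{\sigma I}$ is a congruence with respect to all contexts of $\mathcal{D}_\sigma^\lambda$, hence in particular with respect to the closed unary contexts, which is Corollary part~(i); and part~(ii) yields $\approx_{\sigma I}={}\approx_{\sigma R}$, which is Corollary part~(ii), with the reactive IPO labels read off the third columns of Fig.~\ref{tb}.

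The real content is therefore the dichotomy, which I would establish by a case analysis on the shape of a closed multi-holed context, matched against the IPO labels listed in Fig.~\ref{tb}. Writing any such context after full left-decomposition as $H[\ ]C_1[\ ]\ldots C_k[\ ]$ with head $H$ a hole, a closed term, or an abstraction $\lambda x.C_0[\ ]$, I would argue as follows. If the head is a hole, then filling $C_1,\ldots,C_k$ with any closed terms gives an applicative context $[\ ]\vec P$, which is reactive for the lazy strategy by Remark~\ref{strate}(i); hence the leading hole is a reactive index. In the remaining two cases I would show IPO uniformity: for a closed-term head the holes occupy argument position and any IPO reaction touches only the head (or its first argument), so the successor is a fixed context applied to reindexed copies of the hole-fillers; for an abstraction head an IPO reaction fires the head redex and substitutes one closed hole-filler for the bound variable inside another, and the claim is that the resulting context and its reindexing do not depend on the actual hole-fillers. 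The cbv case is handled in the same way, the only change being that the contexts carrying a reactive hole now range over $D[\ ] ::= [\ ] \mid D[\ ]C[\ ] \mid (\lambda x.C[\ ])D[\ ]$ with $C$ closed, matching the extra cbv reactive context $(\lambda x.M)[\ ]$ of Remark~\ref{strate}(ii).

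The main obstacle I anticipate is precisely the IPO uniformity of the abstraction-headed contexts, which must be checked against the full list of IPO labels in Fig.~\ref{tb} rather than only the reactive ones. The delicate point is that the head reaction substitutes one hole-filler for the bound variable inside another: here closedness of the hole-fillers is essential, since it guarantees that they contain no free occurrence of the bound variable, so the number of copies produced by the substitution --- and hence the shape of the resulting context together with its reindexing function --- is fixed by the context alone and is independent of $\vec t$. Closedness is equally what forbids a $\beta$-redex straddling the boundary between the context and a hole-filler, which would otherwise make the IPO successors genuinely depend on $\vec t$ and break uniformity. Once this is verified in every case, the corollary is a direct instance of Proposition~\ref{list-weak-congruence}.
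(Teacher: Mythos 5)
Your proposal is correct and follows essentially the same route as the paper: the corollary is obtained by instantiating Proposition~\ref{list-weak-congruence} for $\mathbf{C}_\sigma^\lambda$ with the list extension $\mathcal{D}_\sigma^\lambda$, observing determinism of the strategies, and verifying the dichotomy by the same head-decomposition case analysis (hole-headed contexts have a reactive first hole; closed-term-headed and abstraction-headed contexts are IPO uniform, with closedness of the hole-fillers guaranteeing that the $\beta$-substitution does not touch them). Your identification of closedness as the crux of the abstraction-headed case matches the paper's own (brief) justification exactly.
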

\noindent
Now, we are left to prove that the IPO bisimilarity coincides with the
original observational equivalence.  Notice that, in the above
proposition, we also provide a new alternative proof of the Context
Lemma for the lazy case.

\begin{prop}\label{sapp}
$\approx_{l}{}={}\approx^{app}_l{}={}\approx_{l I} $ and 
$\approx_{v}{}={}\approx_{v I} $.
\end{prop}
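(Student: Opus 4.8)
The plan is to establish both chains by a ring of inclusions, working throughout with the reactive presentation $\approx_{\sigma R}$, which by Corollary~\ref{unocon}(ii) equals $\approx_{\sigma I}$ and whose \emph{observable} (non-$\tau$) transitions are exactly the labels in the third columns of Figure~\ref{tb}: for a value the label $[\ ]P$ (and, in the cbv case, also $(\lambda y.Q)[\ ]$), while the identity context $[\ ]$ is the unobservable $\tau$ realizing one reduction of a non-value. A recurring ingredient is that reduction is \emph{absorbed} by each equivalence in play: since every strategy is deterministic, $M \to_\sigma a$ entails $M \approx_\sigma a$ (standard soundness), $M \approx_\sigma^{app} a$ (because $M\vec P \to_\sigma a\vec P$, so convergence in all applicative contexts is preserved), and $M \approx_{\sigma R} a$ (as already observed in the proof of Proposition~\ref{list-weak-congruence}). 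I shall also use repeatedly that a closed term enables an observable weak move \emph{iff} it converges, since an observable IPO step can fire only at a value and closed non-values are never stuck.

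For the lazy case I would close the ring $\approx_l \subseteq \approx_l^{app} \subseteq \approx_{lI} \subseteq \approx_l$. The first inclusion is immediate: $\approx_l$ is an applicative lazy bisimulation, since its convergence clause is the context $[\ ]$ and $M \approx_l N \Rightarrow MP \approx_l NP$ is just the tautological congruence of an observational equivalence with respect to $[\ ]P$. For the second inclusion I would show via Lemma~\ref{altcar} that $\approx_l^{app}$ is a weak $R$-restricted IPO bisimulation: a $\tau$-step $M \to_l a$ is matched by the empty move of $N$, using $a \approx_l^{app} M \approx_l^{app} N$; a label $M=\lambda x.M'' \itrans{[\ ]P} M''[P/x]$ is matched by reducing $N$ to a value and firing $[\ ]P$ to a target $b$, the two targets being related by $M''[P/x] \approx_l^{app} MP \approx_l^{app} NP \approx_l^{app} b$, where $MP \approx_l^{app} NP$ is precisely the applicative clause. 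Crucially this uses only the applicative property and reduction-absorption, never congruence of $\approx_l$, so the resulting identity $\approx_l = \approx_l^{app}$ is a genuine, independent reproof of the lazy Context Lemma (Lemma~\ref{cont}). The third inclusion carries the real content of Corollary~\ref{unocon}(i): from $M \approx_{lI} N$ and congruence one gets $C[M] \approx_{lI} C[N]$ for every closed unary context $C$, and since $\approx_{lI}$ preserves convergence (by the observation above), $C[M]\Downarrow_l \Leftrightarrow C[N]\Downarrow_l$, i.e.\ $M \approx_l N$.

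For the cbv case the statement asks only for $\approx_v = \approx_{vI}$, and I would argue the two inclusions directly. The inclusion $\approx_{vI} \subseteq \approx_v$ is verbatim the last step above, namely congruence from Corollary~\ref{unocon}(i) together with convergence-preservation. For $\approx_v \subseteq \approx_{vR}$ I would show that $\approx_v$ itself is a weak $R$-restricted IPO bisimulation. The $\tau$- and $[\ ]P$-labels are treated as in the lazy case, now invoking the genuine congruence of the observational equivalence to obtain $MP \approx_v NP$. The new label is $(\lambda y.Q)[\ ]$: a value $M$ fires $M \itrans{(\lambda y.Q)[\ ]} Q[M/y]$; reducing $N$ to a value $W$ yields the matching move $N \witrans{(\lambda y.Q)[\ ]} Q[W/y]$, and the targets are related because $M \approx_v N \approx_v W$ combined with congruence (the substitution $Q[\,\cdot\,/y]$ being a multi-holed context) gives $Q[M/y] \approx_v Q[W/y]$.

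The main obstacle is concentrated in the two ``bisimulation'' inclusions $\approx_l^{app} \subseteq \approx_{lI}$ and $\approx_v \subseteq \approx_{vI}$, and specifically in the cbv label $(\lambda y.Q)[\ ]$, whose matching forces one to feed a whole equivalent value into an arbitrary abstraction and therefore genuinely requires the congruence of $\approx_v$. This is exactly the point at which the lazy reproof of the Context Lemma cannot be transported to cbv, consistently with our taking the Context Lemma as given there. A secondary, purely bookkeeping difficulty is the management of the weak transitions: one must keep the $\tau^{\ast}$-prefixes and suffixes around each observable IPO step aligned and repeatedly use determinism to absorb reductions, which Lemma~\ref{altcar} and reduction-absorption render routine but notation-heavy.
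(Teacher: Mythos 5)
Your proposal is correct and follows essentially the same route as the paper: the identical chains $\approx_l\subseteq\approx_l^{app}\subseteq\approx_{lR}\subseteq\approx_{lI}\subseteq\approx_l$ and $\approx_v\subseteq\approx_{vR}\subseteq\approx_{vI}\subseteq\approx_v$, with Corollary~\ref{unocon} supplying both the $R$-to-$I$ step and the congruence used in the final inclusion, and the same bisimulation case analysis on the reactive IPO labels. Your only addition is to spell out the cbv label $(\lambda y.Q)[\ ]$, which the paper dismisses with ``one can easily check''; your treatment of it (congruence of $\approx_v$ plus closure under $\beta_V$-reduction) is sound and matches the paper's intent.
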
 
\proof For the lazy case, we proceed by proving the following
chain of inclusions:
\begin{equation}\label{eqc} 
\approx_l{}\subseteq{}  \approx_{l}^{app}{}\subseteq{} \approx_{l R}{}
\subseteq{}\approx_{l I}{}\subseteq{}\approx_{l}\ .
\end{equation}
The first inclusion, $\approx_{l}{}\subseteq{} \approx_{l}^{app}$,
holds by definition.  The third inclusion, $\approx_{lR} {}\subseteq{}
\approx_{lI}$, follows by Corollary~\ref{unocon}(ii).  The others are
proved as follows:
\begin{enumerate}[$\bullet$]
\item $\approx_{l}^{app} {}\subseteq{} \approx_{l R}$. We prove that
$\approx_{l}^{app}$ is a ``weak IPO reactive bisimulation''. Let $M
\approx_{l}^{app} N$.  Assume $M \stackrel{C[\ ] }{\rightarrow_I} M'$
in the IPO reactive system.  By case analysis on $M$ and $C[\ ]$ we show
that $\exists N'.\ N\stackrel{C[\ ]}{\Rightarrow} N' \ \wedge\ M'
\approx_{l}^{app} N'$.

If $M \equiv (\lambda x. M_1)Q\vec{P}$ and $C[\ ] \equiv [\ ] $, then
$N \stackrel{[\ ]}{\Rightarrow}N$, $M'=_{\beta} M \approx_{l}^{app}
N$, hence closedness of $\approx_{l}^{app} $ under $\beta$-conversion
establishes the claim.

If $M \equiv \lambda x. M_1$ and $C[\ ] \equiv [\ ] P$, then, since
$M\approx_{l}^{app} N$, $\exists \lambda x. N_1 .\ N\stackrel{[\
]}{\Rightarrow} \lambda x.N_1 \stackrel{[\ ]P}{\rightarrow} N'$.  Then
$M'=_{\beta} MP \approx_{l}^{app} NP =_{\beta} N'$, and closedness of
$\approx_l$ under $\beta$-conversion establishes the claim.

\item $\approx_{l I}{}\subseteq{} \approx_{l}$. Let $M\approx_{l I}
N$. We have to show that, for any unary closed context $C[\ ]$, $C[M]
\Downarrow \ \Leftrightarrow \ C[N]\Downarrow$.  From $M\approx_{l I}
N$, by Corollary~\ref{unocon}(i), we have $C[M]\approx_{l I} C[N]$.
Now assume that $C[M] \Downarrow_l$, then there exists $M'$ such that
$C[M] \stackrel{[\ ]P}{\Rightarrow} M'$, hence also there exists $N'$
such that $C[N] \stackrel{[\ ]P}{\Rightarrow} N'$, thus $C[N]
\Downarrow_l$.
\end{enumerate}
The above argument  provides a new proof of the Context Lemma. 

For the cbv case, considering the applicative equivalence $\approx_v^{app}$ does not help, but
one can prove directly:
\begin{equation}  
\approx_{v} {}\subseteq{} \approx_{v R}{}\subseteq{} \approx_{v I}
{}\subseteq{} \approx_{v} \label{equno}
\end{equation}
\begin{enumerate}[$\bullet$]
\item $\approx_{v} {}\subseteq{} \approx_{v R}$.  One can easily check that $\approx_{v}$ is
a ``weak IPO reactive bisimulation'', using the fact that $\approx_v$ is closed under $\beta$-reduction.
\item $\approx_{vR} {}\subseteq{} \approx_{v I}$. Immediate by Corollary~\ref{unocon}(ii).
\item  $\approx_{vI} {}\subseteq{} \approx_{v }$. Let $M\approx_{vI} N$. We have to show that, for any 
unary context $C[\ ]$, $C[M]\Downarrow_v \Longleftrightarrow C[N]\Downarrow_v$. From 
$M\approx_{vI} N$, by Corollary~\ref{unocon}(i), we have $C[M] \approx_{vI}  C[N]$.  
Now assume that $C[M] \Downarrow_v$, then there exists $M'$ such that $C[M] \stackrel{[\ ]V}{\Rightarrow} M'$, hence also
there exists $N'$ such that $C[N] \stackrel{[\ ]V}{\Rightarrow} N'$, thus $C[N] \Downarrow_v$.\qed
\end{enumerate}

 \begin{rem}
 Corollary~\ref{unocon}(ii) allows us to reduce the set of IPO contexts to be considered in the 
 IPO bisimilarities. For the lazy case, only applicative contexts can be considered (see the first table in Figure~\ref{tb}), while for the cbv case, the set of reactive IPO contexts is larger (see the second 
 table in Figure~\ref{tb}). However, also for the cbv case, one can prove that applicative  (by-value)
 IPO contexts are sufficient. We omit the details.
 \end{rem}

Proposition~\ref{sapp} above gives us  interesting characterizations of  lazy and cbv observational 
equivalences, in terms of lts's where the labels are significantly reduced. 
However, such lts's (and bisimilarities) are still infinitely branching, e.g.
$\lambda x.M \itrans{P}$, for all $P\in \Lambda^0$.
This is due to the fact that  the  context  categories underlying 
the reactive systems $\mathbf{C}^{\lambda}_l$ and $\mathbf{C}^{\lambda}_v$ allow only for a  ground representation
of the $\beta$-rule through infinitely many ground rules.
 In order
to overcome  this problem, one should look for alternative  categories which allow  for a parametric representation of the $\beta$-rule as $(\lambda x. X )Y \rightarrow X[Y/x]$, where $X,Y$
are parameters. To this aim, we introduce the category of \emph{second-order term contexts}
(see  Section~\ref{sec} below).  However, as we will see, 
this approach works only if the reaction rules are ``local'', that is, they do not 
act on the whole term, but only locally. In particular, the operation of substitution on the 
$\lambda$-calculus is not  local and thus it is not describable by a finite set of reaction rules.
To avoid this problem, in the following section we consider  encodings of the $\lambda$-calculus into  Combinatory Logic  (CL) endowed with
suitable strategies and equivalences, which turn out to correspond to lazy and cbv
equivalences. 

\section{Combinatory Logic} \label{CL}
In this section, we focus on \emph{Combinatory Logic}  \cite{HS86} with Curry's combinators  $\K,\SSS$, and we study its relationships
with the $\lambda$-calculus endowed with lazy and cbv reduction strategies. 
An interesting result that we prove is that we can define suitable reduction strategies on CL-terms, inducing observational equivalences which correspond to lazy and cbv equivalences on 
$\lambda$-calculus.
 As a consequence, we can safely shift our attention from the reactive system of $\lambda$-calculus 
 to the simpler reactive system of CL.  In this section, we apply Leifer-Milner construction to CL 
 viewed as a (standard) context category, and we study weak versions of context and  IPO bisimilarities. Our main
 result is that we can recover lazy and cbv observational equivalences as weak IPO equivalences on 
 CL$^*$, a variant of standard CL. Here the approach is first-order, thus the IPO equivalences are still
 infinitely branching. However, the results in this section are both interesting in themselves, and 
 useful for our subsequent investigation of Section~\ref{sec}, where CL is viewed as a second-order
 rewriting system, and a characterization of the lazy observational equivalence as a finitely branching IPO
 bisimilarity is given.

In \cite{Sew02}, a construction, similar to Leifer-Milner construction, has been applied to the Combinary Logic case.  However, in that paper, it has been left open the question of whether the weak bisimilarity on the derived LTS is a congruence.  In this paper, using Proposition \ref{list-weak-congruence}, we can positively answer that question. 

\begin{defi}[Combinatory Terms]\label{comt} The set of combinatory terms is defined by:
\[ (CL  \ni)\  M\ ::=\   x\ | \ \K\  |\  \SSS\  | \ MM \ , \]
where $\K$, $\SSS$ are combinators.  Let $CL^0$ denote the set of
\emph{closed} CL-terms.
\end{defi}

\subsection{Correspondence with the \texorpdfstring{$\lambda$}{Lambda}-calculus}
 
Let $\Lambda (\K,\SSS)$ denote the set of $\lambda$-terms built over constants $\K,\SSS$. The following is a well-known encoding:

\begin{defi}[$\lambda$-encoding]
Let $\mathcal{T} : \Lambda (\K,\SSS)  \rightarrow CL$  be the transformation defined as follows:
\[\eqalign{
  \mathcal{T} (x)
&=x\cr
  \mathcal{T}(MN)
&=\mathcal{T}(M) \mathcal{T}(N)\cr
  \mathcal{T}(\lambda x.x)
&=\SSS \K\K\cr
  \mathcal{T}(\lambda x.y)
&=\K y\cr
  }
  \qquad\qquad
  \eqalign{
  \mathcal{T}(C)
&=C\enspace\mbox{if}\enspace C\in \{ \K, \SSS \}\cr 
  \mathcal{T} (\lambda x. MN)
&=\SSS \mathcal{T}(\lambda x.M)\mathcal{T}(\lambda x.N)\cr
  \mathcal{T}(\lambda x.\lambda y.M)
&=\mathcal{T} (\lambda x. \mathcal{T} (\lambda y. M) )\cr
  \mathcal{T} (\lambda x.C)
&=\K \mathcal{T} (C)\enspace\mbox{if}\enspace C\in \{ \K, \SSS \}\cr
  }
\]

\noindent
In particular, if we restrict the domain of $\mathcal{T}$ to
$\Lambda$, we get an encoding of $\lambda$-terms into CL.  Vice versa,
there is a natural embedding of CL into the $\lambda$-calculus
$\mathcal{E}: CL \rightarrow \Lambda$: 
\[
\mathcal{E} (\K) =\lambda xy.x\quad
\mathcal{E}(\SSS) = \lambda xyz. (xz)(yz)\quad
\mathcal{E} (x) =x\quad
\mathcal{E}(MN) = \mathcal{E} (M) \mathcal{E}(N)
\]
\end{defi}

The following  lemma holds: 

\begin{lem} \label{es}
For all $M\in \Lambda$,\  $\mathcal{E} (\mathcal{T}(M))=_{\sigma} M$, for $\sigma \in \{ \beta, \beta_V\}$. 
\end{lem}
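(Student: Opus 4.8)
The plan is to prove a statement slightly stronger than the one displayed, so that the induction closes. First I would extend the embedding $\mathcal{E}$ homomorphically to all of $\Lambda(\K,\SSS)$, setting $\mathcal{E}(\lambda x.P)=\lambda x.\mathcal{E}(P)$ and $\mathcal{E}(PQ)=\mathcal{E}(P)\mathcal{E}(Q)$ on top of the clauses already given for variables and the combinators. With this extension the target claim becomes the special case $\mathcal{E}(P)=P$ (for pure $P\in\Lambda$) of the reinforced statement
\[ \mathcal{E}(\mathcal{T}(P)) =_{\beta} \mathcal{E}(P) \qquad \text{for all } P\in\Lambda(\K,\SSS), \]
and likewise with $=_{\beta_V}$. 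The reason for generalising to $\Lambda(\K,\SSS)$ is that the clause $\mathcal{T}(\lambda x.\lambda y.M)=\mathcal{T}(\lambda x.\mathcal{T}(\lambda y.M))$ feeds the combinatory term $\mathcal{T}(\lambda y.M)$ back into $\mathcal{T}$, so the recursion unavoidably meets terms containing $\K$ and $\SSS$.

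The core of the argument is an auxiliary lemma about the bracket abstraction implemented by the $\mathcal{T}(\lambda x.\,\cdot\,)$ clauses:
\[ \mathcal{E}(\mathcal{T}(\lambda x.R)) =_{\beta} \lambda x.\mathcal{E}(R) \qquad \text{for every combinatory } R\in CL. \]
I would prove this by structural induction on $R$, which is clean because $R$ contains no abstractions: the cases $R=x$, $R=y\ (\neq x)$ and $R\in\{\K,\SSS\}$ reduce to the well-known combinatory identities $\mathcal{E}(\SSS\K\K)=_{\beta}\lambda x.x$ and $\mathcal{E}(\K R')=_{\beta}\lambda x.\mathcal{E}(R')$ when $x\notin \mathit{FV}(R')$, while the application case $R=R_1R_2$ follows from $\mathcal{E}(\SSS AB)=_{\beta}\lambda x.(\mathcal{E}(A)\,x)(\mathcal{E}(B)\,x)$ together with the induction hypotheses on $R_1,R_2$.

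With this lemma in hand, the reinforced statement follows by induction on the size of $P$. The non-abstraction cases ($P$ a variable, a combinator, or an application) are immediate from the defining clauses of $\mathcal{T}$, $\mathcal{E}$ and the induction hypothesis. For $P=\lambda x.Q$ I would distinguish according to $Q$: when $Q$ is a variable, a combinator, or an application $MN$ the $\mathcal{T}(\lambda x.\,\cdot\,)$ clauses apply directly and I reduce to the induction hypothesis on the strictly smaller terms $\lambda x.M$, $\lambda x.N$; the only delicate case is the nested abstraction $Q=\lambda y.M$, where $\mathcal{T}(\lambda x.\lambda y.M)=\mathcal{T}(\lambda x.R)$ with $R=\mathcal{T}(\lambda y.M)\in CL$. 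Here the auxiliary lemma gives $\mathcal{E}(\mathcal{T}(\lambda x.R))=_{\beta}\lambda x.\mathcal{E}(R)=\lambda x.\mathcal{E}(\mathcal{T}(\lambda y.M))$, and the main induction hypothesis applied to the proper subterm $\lambda y.M$ yields $\mathcal{E}(\mathcal{T}(\lambda y.M))=_{\beta}\lambda y.\mathcal{E}(M)$, whence $\mathcal{E}(\mathcal{T}(\lambda x.\lambda y.M))=_{\beta}\lambda x.\lambda y.\mathcal{E}(M)=\mathcal{E}(P)$.

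I expect the nested abstraction case to be the main obstacle, precisely because $R=\mathcal{T}(\lambda y.M)$ may be larger than $P$, so the main induction hypothesis cannot be invoked on $\lambda x.R$; isolating the auxiliary lemma, whose induction runs on the abstraction-free term $R$, is exactly what circumvents this. Finally, to obtain the $=_{\beta_V}$ variant with the same case analysis, I would observe that every contraction used above substitutes a \emph{value}: the combinators unfold to the abstractions $\mathcal{E}(\K)=\lambda xy.x$ and $\mathcal{E}(\SSS)=\lambda xyz.(xz)(yz)$, and the arguments fed to them are in each case a variable or (by induction, up to $=_{\beta_V}$) an abstraction. Hence all the $\beta$-steps are legal $\beta_V$-steps, and the identical argument establishes $\mathcal{E}(\mathcal{T}(P))=_{\beta_V}\mathcal{E}(P)$.
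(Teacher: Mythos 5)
Your proof is correct and follows essentially the same route as the paper's: an auxiliary induction on $\lambda$-free (combinatory) terms establishing $\mathcal{E}(\mathcal{T}(\lambda x.R))=_{\sigma}\lambda x.\mathcal{E}(R)$, followed by a main induction on the term in which the nested abstraction $\lambda x.\lambda y.M$ is the only delicate case. The one difference is cosmetic: the paper additionally invokes the idempotency $\mathcal{T}^2=\mathcal{T}$ to close that case, which your direct application of the auxiliary lemma to $R=\mathcal{T}(\lambda y.M)$, combined with the main induction hypothesis on $\lambda y.M$, renders unnecessary.
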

\proof First, one can easily prove that, if $M$ is $\lambda$-free,
 then $\mathcal{E} \mathcal{T} (\lambda x.M) =_{\sigma}\lambda x.M$
 (by induction on $M$). Then, using the fact that $\mathcal{T} (M)$ is
 $\lambda$-free for all $M$, by definition of $\mathcal{T}$, one gets
 that $\mathcal{T}^2 (M)= \mathcal{T} (M)$ for all $M$. Finally, we
 are ready to prove the claim in its full generality by induction on
 $ M$. The only non-trivial case is when $M= \lambda x. \lambda y.N$.
 Then we have $ \mathcal{E} \mathcal{T}( \lambda x. \lambda y.N)=
 \mathcal{E} \mathcal{T}( \lambda x. \mathcal{T}( \lambda y.N))$,
 where $\mathcal{T}( \lambda y.N)= PQ$ is $\lambda$-free.  Then
\[\eqalign{
  \mathcal{E} \mathcal{T}( \lambda x. \lambda y.N)
&=\mathcal{E} \mathcal{T}(\lambda x. PQ)
 =\mathcal{E}({\mathbf S})\mathcal{E}\mathcal{T}(\lambda x.P)
                          \mathcal{E}\mathcal{T}(\lambda x.Q)\cr
&=(\lambda xyz. (xz)(yz)) \mathcal{E}\mathcal{T}(\lambda x.P)  
                          \mathcal{E}\mathcal{T}(\lambda x.Q)\cr
&=_{\sigma} ( \lambda xyz. (xz)(yz)) \lambda x. P \lambda x. Q\,,\enspace 
  \mbox{since $PQ$ is $\lambda$-free,}\cr 
&=_{\sigma} \lambda x.PQ 
=_{\sigma} \lambda x. \mathcal{E}\mathcal{T} (PQ)\,,\enspace
  \mbox{since $PQ$ is $\lambda$-free,}\cr 
&=\lambda x. \mathcal{E}\mathcal{T}\mathcal{T} (\lambda y.N)
 =\lambda x. \mathcal{E}\mathcal{T}(\lambda y.N)\,,\enspace
  \mbox{since $\mathcal{T}^2 = \mathcal{T}$,}\cr 
&=_{\sigma} \lambda x. \lambda y.N\,,\enspace
  \mbox{by induction hypothesis.}\hbox to140 pt{\hfil\qEd}\cr
  } 
\]

\subsubsection{Lazy/cbv observational equivalence on CL}
Usually, the set of combinatory terms are endowed with the following reaction
rules: 
\[ \K MN\rightarrow M \ \  \ \ \ \  \ \  \ \ \ \SSS MNP \rightarrow (MP)(NP) 
\]  
  We will also consider a cbv version of the above rules, reducing CL
  redexes only when the arguments are values, i.e., terms on the
  following grammar:
\[   V\ :: = \  \K \ |\ \SSS\ | \  \K V \ | \SSS V\ |\ 
 \SSS VV \ .\]
  The cbv rules are the following:
\[\K V_1V_2\rightarrow V_1\qquad\qquad\SSS V_1V_2V_3\rightarrow(V_1V_3)(V_2 V_3) 
\]  

\begin{defi}[Lazy/cbv Reduction Strategy on CL] \label{lazyr}
\hfill \begin{enumerate}[(i)]
\item The \emph{lazy reduction strategy} $\rightarrow_l{}\subseteq{} CL^0 \times CL^0$  reduces the  leftmost outermost  CL-redex.
Formally:
\[\reguno{}{\SSS M_1 M_2 M_3 \rightarrow_l (M_1 M_3) (M_2 M_3)} \qquad
  \reguno{}{\K M_1 M_2  \rightarrow_l M_1 } \qquad
  \reguno{M \rightarrow_l M' }{MP  \rightarrow_l M'P }
\]\smallskip
\item The \emph{cbv strategy} $\rightarrow_v{}\subseteq{} CL^0 \times CL^0$  
is defined by
\[\reguno{}{\SSS V_1 V_2 V_3  \rightarrow_v (V_1 V_3) (V_2 V_3)}\qquad
  \reguno{}{\K V_1 V_2  \rightarrow_v V_1 } \qquad 
  \reguno{M_1 \rightarrow_v M'_1}{\K M_1  \rightarrow_v \K M'_1}
\]
\[\reguno{M_2 \rightarrow_v M'_2}{\K V_1 M_2 \rightarrow_v \K V_1 M'_2}\qquad
  \reguno{M_1 \rightarrow_v M'_1}{\SSS M_1 \rightarrow_v \SSS M'_1}\qquad
  \reguno{M_2 \rightarrow_v M'_2}{\SSS V_1 M_2  \rightarrow_v \SSS
  V_1M'_2}
\]
\[\reguno{M_3 \rightarrow_v M'_3}
  {\SSS V_1 V_2 M_3 \rightarrow_v \SSS V_1 V_2 M'_3} \qquad
  \reguno{M \rightarrow_v M'}{MP \rightarrow_v M'P}
\]\medskip

\noindent where $V_1, V_2, V_3$ are values.
\end{enumerate} 
\end{defi}
 
 \begin{defi}[Unary Contexts on CL] \label{uccl}
 The set of \emph{unary contexts} on CL is defined by
 \[ C[\ ]\ ::=\  [\ ] \ | \ C[\ ]P \  |\ PC[\ ] \ . \]
 \end{defi}
 
Alternatively we could define the lazy strategy $\rightarrow_l$ as the closure of the standard 
 CL-reaction rules under the following reactive
 contexts (which coincide with the applicative ones):
 \[ D[ \ ] \ ::= \ [\ ] \ | \ D[\ ]P \ . \]
 Similarly, 
 we could define the cbv strategy  $\rightarrow_v$ as the closure of the cbv reaction rules under the following reactive
 contexts: 
 \[ D[ \ ] \ ::= \  [\ ] \ | \   D[\ ]P \ | \ \K D[\ ] \ |\  \K V D[\ ]  \ | \  \SSS  D[\ ] \ | \ \SSS V D[\ ] \ | \ \SSS V_1 V_2 D[\ ].  
\]
Let $\downarrow_{\sigma}$  denote  the  convergence relation on CL, for $\sigma \in \{ l,v\}$.
 
\begin{defi}[Lazy/cbv Equivalence on CL] \label{dleq}
\hfill \begin{enumerate}[(i)]
\item A  relation  $\mathcal{R}\subseteq CL^0 \times CL^0$ is a
\begin{enumerate}[$-$]
\item  \emph{CL lazy bisimulation} if:
\[\langle M,N \rangle \in {\mathcal R} \ \Longrightarrow \ (M\downarrow_{l} \ \Leftrightarrow \ N\downarrow_{l}) 
\ \wedge \
\forall P\in CL^0.\  \langle MP, NP\rangle\in
{\mathcal R}\ .
\]
\item  \emph{CL cbv bisimulation} if:
\[\langle M,N \rangle \in {\mathcal R} \ \Longrightarrow \ (M\downarrow_{v} \ \Leftrightarrow \ N\downarrow_{v}) 
\ \wedge \
\forall \mbox{ closed value }V\in CL^0.\  \langle MV, NV\rangle\in
{\mathcal R}\ .
\]
\end{enumerate}
\item Let $\simeq_{\sigma}{}\subseteq{} CL^0 \times CL^0$ be the largest CL lazy/cbv bisimulation.
\item Let $\widehat{\simeq}_{\sigma}{}\subseteq{} CL \times CL$ denote the extension of  $\simeq_{\sigma}$
to open terms defined by:  for $M,N\in CL$ s.t.\ $FV(M,N)\subseteq \{ x_1, \ldots , x_n\}$, $ M \widehat{\simeq}_{\sigma} N $ \iff\ for all closing (by-value) substitutions $\vec{P}$, $M[\vec{P}/\vec{x}] \simeq_{\sigma}
N[\vec{P}/\vec{x}]$.
\end{enumerate}
\end{defi}

\noindent
Notice that we use two different symbols for equivalences ($\approx$ and $\simeq$), in this way we distinguish the equivalence relation on $\lambda$-terms from the corresponding relation on CL. 

The following theorem  is interesting \emph{per se}:

\begin{thm}\label{eqv}
For all $M,N \in \Lambda$, $M\,\widehat{\approx}_{\sigma}\, N
\ \Longleftrightarrow \ \mathcal{T} (M)\,\widehat{\simeq}_{\sigma}\,
\mathcal{T}(N) \ . $
\end{thm}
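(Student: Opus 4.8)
The plan is to establish the biconditional $M\,\widehat{\approx}_{\sigma}\, N \Longleftrightarrow \mathcal{T}(M)\,\widehat{\simeq}_{\sigma}\, \mathcal{T}(N)$ by transferring each observational equivalence across the translation/embedding pair $(\mathcal{T},\mathcal{E})$, using the fact that $\mathcal{E}\mathcal{T}(M) =_\sigma M$ from Lemma~\ref{es}. The two equivalences are defined by closing (by-value) substitutions into closed terms, so I would first reduce everything to the closed case: it suffices to show, for $M,N \in \Lambda^0$, that $M \approx_\sigma N \Longleftrightarrow \mathcal{T}(M) \simeq_\sigma \mathcal{T}(N)$, and then lift through substitutions, checking that $\mathcal{T}$ commutes appropriately with substitution modulo the relevant conversion (i.e.\ $\mathcal{T}(M[\vec P/\vec x])$ and $\mathcal{T}(M)[\mathcal{T}(\vec P)/\vec x]$ agree up to $=_\sigma$, and symmetrically for $\mathcal{E}$). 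I expect this substitution-commutation bookkeeping to be routine but the place where care is most needed.

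\emph{For the closed case,} the key is that both $\simeq_\sigma$ and $\approx_\sigma$ admit applicative (coinductive) characterizations: on the $\lambda$-side via Lemma~\ref{cont} ($\approx_\sigma = \approx_\sigma^{app}$), and on the CL-side by the very Definition~\ref{dleq} of $\simeq_\sigma$ as the largest CL (lazy/cbv) bisimulation. So the heart of the argument is a coinductive transfer. For the direction $\mathcal{T}(M)\simeq_\sigma\mathcal{T}(N) \Rightarrow M \approx_\sigma^{app} N$, I would consider the relation
\[
\mathcal{R} \;=\; \{\,\langle P, Q\rangle \in \Lambda^0 \times \Lambda^0 \;\mid\; \mathcal{T}(P)\simeq_\sigma \mathcal{T}(Q)\,\}
\]
and verify it is an applicative $\sigma$-bisimulation. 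This requires two facts: first, that convergence is preserved, $P \Downarrow_\sigma \Leftrightarrow \mathcal{T}(P)\downarrow_\sigma$, so that the halting clauses match; second, that $\mathcal{T}$ is compatible with application, $\mathcal{T}(PR) = \mathcal{T}(P)\mathcal{T}(R)$ (immediate from the definition), so that the application clause of the bisimulation is respected. For the reverse direction I would run the symmetric argument through $\mathcal{E}$, using the relation $\{\langle A,B\rangle \in CL^0\times CL^0 \mid \mathcal{E}(A)\approx_\sigma \mathcal{E}(B)\}$ and the analogous convergence-preservation and application-compatibility facts for $\mathcal{E}$, together with Lemma~\ref{es} to close the loop back to $\mathcal{T}$.

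\emph{The main obstacle} is proving the convergence-preservation lemma $P \Downarrow_\sigma \Leftrightarrow \mathcal{T}(P)\downarrow_\sigma$ (and its $\mathcal{E}$-counterpart), and doing so \emph{strategy-sensitively}: the claim is not merely that $\mathcal{T}$ respects $=_\beta$ (which would be too weak, since $\approx_\sigma$ is not $\beta$-equality) but that the \emph{lazy} (resp.\ \emph{cbv}) reduction of $P$ terminates in a value exactly when the corresponding CL strategy on $\mathcal{T}(P)$ does. The difficulty is that the CL combinators $\K,\SSS$ simulate $\beta$-reduction only up to extra administrative $\K$/$\SSS$-steps and partial applications, so one must check that these bookkeeping reductions never create or destroy convergence under the chosen strategy — in particular that the CL notion of value (Definition~\ref{lazyr}, including partial applications like $\K V$, $\SSS V$, $\SSS VV$) lines up with $\lambda$-abstractions being the values. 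I would isolate this as a simulation lemma relating one $\sigma$-step on the $\lambda$-side to a finite nonzero number of $\sigma$-steps on the CL-side and back, handling the lazy and cbv cases in parallel, with the cbv case being the more delicate because the value restriction must be tracked through every combinator reduction.
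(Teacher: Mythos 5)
Your proposal follows essentially the same route as the paper: reduce to closed terms via a substitution-commutation lemma (the paper's Lemma~\ref{AAA}), then transfer the applicative characterizations coinductively across $(\mathcal{T},\mathcal{E})$ using the convergence correspondences $M\Downarrow_\sigma\Leftrightarrow\mathcal{T}(M)\downarrow_\sigma$ and $P\downarrow_\sigma\Leftrightarrow\mathcal{E}(P)\Downarrow_\sigma$ (Lemma~\ref{A}) together with $\mathcal{E}\mathcal{T}(M)=_\sigma M$ (Lemma~\ref{es}). The only cosmetic difference is that you propose to prove the $\mathcal{T}$-direction of convergence preservation by a direct step-by-step simulation, whereas the paper derives it from the $\mathcal{E}$-direction plus closure of $\Downarrow_\sigma$ under $\beta$-conversion; both are sound.
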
 

\noindent {\bf Proof of Theorem~\ref{eqv}}.
We carry out the proof of the above theorem for the lazy case, the proof for the cbv case being similar.

\begin{lem}\label{A}
\hfill \begin{enumerate}[\em(i)]
\item For all $M\in CL^0$, $M\dlcl \ \Longleftrightarrow \ \mathcal{E} (M)\dll$.
\item
For all  $M\in \Lambda^0$,
$M\dll \ \Longleftrightarrow \ \mathcal{T}(M) \dlcl .$
\end{enumerate}
\end{lem}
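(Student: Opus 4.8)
The plan is to prove (i) by setting up a tight simulation between lazy CL-reduction and lazy $\lambda$-reduction along the embedding $\mathcal{E}$, and then to derive (ii) from (i) together with Lemma~\ref{es} and the invariance of lazy convergence under $\beta$-conversion. Throughout I use that both lazy strategies are deterministic, so that each term has a unique maximal reduction, and that for a \emph{closed} $\lambda$-term $M\dll$ is exactly the existence of a weak head normal form (necessarily a $\lambda$-abstraction).

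For (i), I would first establish a one-step simulation lemma: for every closed $T\in CL^0$, if $T\rightarrow_l T'$ then $\mathcal{E}(T)\rightarrow_l^+\mathcal{E}(T')$, where the $\lambda$-steps are again leftmost--outermost. This is a local check on the shape of the head redex, using that $\mathcal{E}$ is a homomorphism for application, so that $\mathcal{E}(cN_1\ldots N_k)=\mathcal{E}(c)\mathcal{E}(N_1)\ldots\mathcal{E}(N_k)$ with $c\in\{\K,\SSS\}$: a head $\K$-reduction is mirrored by exactly two lazy steps $(\lambda xy.x)\mathcal{E}(N_1)\mathcal{E}(N_2)\rightarrow_l^*\mathcal{E}(N_1)$ (the bound variables not occurring free since $T$ is closed), and a head $\SSS$-reduction by exactly three. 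Dually, I would check that if $T$ is a lazy CL-value (an under-applied combinator) then $\mathcal{E}(T)$ lazily reduces to a $\lambda$-abstraction. The forward implication $T\dlcl\Rightarrow\mathcal{E}(T)\dll$ is then immediate: a converging reduction $T\rightarrow_l^* V$ lifts to $\mathcal{E}(T)\rightarrow_l^*\mathcal{E}(V)\rightarrow_l^*(\text{abstraction})$. For the converse I would argue by contraposition and determinism: if $T$ diverged, the simulation lemma would concatenate its infinite reduction into an infinite lazy $\lambda$-reduction of $\mathcal{E}(T)$, and since the lazy $\lambda$-strategy is deterministic this is \emph{the} reduction of $\mathcal{E}(T)$, contradicting $\mathcal{E}(T)\dll$.

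For (ii), I would instantiate (i) at $T=\mathcal{T}(M)\in CL^0$ (legitimate since $M\in\Lambda^0$) to get $\mathcal{T}(M)\dlcl \Leftrightarrow \mathcal{E}(\mathcal{T}(M))\dll$, and then invoke Lemma~\ref{es}, which gives $\mathcal{E}(\mathcal{T}(M))=_\beta M$. It remains to show that lazy convergence is a $\beta$-conversion invariant, i.e.\ that for closed terms $M=_\beta M'$ implies $M\dll \Leftrightarrow M'\dll$. Since for closed terms $M\dll$ is the existence of a weak head normal form, I would appeal to the Church--Rosser and standardization theorems: passing to a common reduct and using standardization, any term convertible to a weak head normal form is reduced to one by its own leftmost--outermost (lazy) reduction. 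Chaining these equivalences yields $M\dll \Leftrightarrow \mathcal{T}(M)\dlcl$.

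The routine parts are the finite case analyses in the simulation lemma and in the value correspondence. The real obstacle --- and the only place where I would lean on non-elementary $\lambda$-calculus theory --- is the $\beta$-invariance of lazy convergence needed in (ii): unlike $\mathcal{E}$, the encoding $\mathcal{T}$ does \emph{not} commute with substitution, so a direct simulation of $\beta$-reduction by CL-reduction along $\mathcal{T}$ is unavailable, and routing through $\mathcal{E}$ forces us to transport convergence across a genuine $\beta$-conversion rather than a one-directional reduction.
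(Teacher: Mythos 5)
Your proof is correct and takes essentially the same route as the paper: part (i) is the direct step-by-step simulation of the lazy strategies along $\mathcal{E}$, which the paper compresses into the one line ``by definition of the lazy strategies'', and part (ii) is obtained exactly as in the paper by combining (i) with Lemma~\ref{es} and the closure of lazy convergence under $\beta$-conversion. The only difference is one of exposition: you make explicit both the two/three-step mirroring of the $\K$/$\SSS$ contractions and the standardization argument behind $\beta$-invariance, which the paper leaves implicit.
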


\proof\hfill
\begin{enumerate}[(i)]
\item  By definition of the lazy strategies on $\lambda$-terms and on CL-terms.
\item ($\Rightarrow$) Let $M\dll$. Then, since by Lemma~\ref{es} $\mathcal{ E}(\mathcal{ T}(M))=_{\beta} M$, and $\eqll$ is closed under $\beta$-conversion, we have also
$\mathcal{ E}(\mathcal{ T}(M)) \dll$. Thus, by (i), $\mathcal{ T} (M) \dlcl$.
\item[] ($\Leftarrow$) Let $\mathcal{ T} (M) \dlcl$. By (i), 
$\mathcal{ E}(\mathcal{ T} (M)) \dll$, by Lemma~\ref{es}, $M =_{\beta} \mathcal{ E}(\mathcal{ T} (M))$,  thus $M\dll$.\qed
\end{enumerate}

\begin{lem} \label{B}
For all $M,N \in CL^0$, if $\mathcal{ E} (M) =_{\beta} \mathcal{ E} (N)$, then $M\eqlcl  N$.
\end{lem}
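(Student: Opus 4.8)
The plan is to prove the lemma by coinduction, exhibiting a single CL lazy bisimulation that relates $M$ and $N$. Concretely, I would set
\[ \mathcal{R} = \{\, \langle P, Q \rangle \in CL^0 \times CL^0 \mid \mathcal{E}(P) =_\beta \mathcal{E}(Q) \,\} \]
and show that $\mathcal{R}$ is a CL lazy bisimulation in the sense of Definition~\ref{dleq}. Since $\langle M, N\rangle \in \mathcal{R}$ by hypothesis, and $\eqlcl$ is the largest such bisimulation, this immediately yields $M \eqlcl N$. Observe that $\mathcal{R}$ is symmetric because $=_\beta$ is, so the two clauses below suffice.

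To verify the convergence clause, take $\langle P, Q\rangle \in \mathcal{R}$. By Lemma~\ref{A}(i) we have $P \dlcl \Leftrightarrow \mathcal{E}(P)\dll$ and likewise $Q \dlcl \Leftrightarrow \mathcal{E}(Q)\dll$, so it suffices to know that lazy convergence on $\lambda$-terms is invariant under $\beta$-conversion; since $\mathcal{E}(P) =_\beta \mathcal{E}(Q)$, this gives $\mathcal{E}(P)\dll \Leftrightarrow \mathcal{E}(Q)\dll$ and hence $P \dlcl \Leftrightarrow Q \dlcl$. For the application clause, fix any $R \in CL^0$; I must check $\langle PR, QR\rangle \in \mathcal{R}$, that is, $\mathcal{E}(PR) =_\beta \mathcal{E}(QR)$. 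By definition of the embedding, $\mathcal{E}(PR) = \mathcal{E}(P)\mathcal{E}(R)$ and $\mathcal{E}(QR) = \mathcal{E}(Q)\mathcal{E}(R)$, and since $=_\beta$ is a congruence, $\mathcal{E}(P) =_\beta \mathcal{E}(Q)$ propagates to $\mathcal{E}(P)\mathcal{E}(R) =_\beta \mathcal{E}(Q)\mathcal{E}(R)$. This closes the coinduction.

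The only nontrivial ingredient is the $\beta$-invariance of $\dll$ used in the convergence clause, which is the crux of the argument and exactly the fact already invoked in the proof of Lemma~\ref{A}(ii). For closed terms, $M \dll$ holds iff $M$ is $\beta$-convertible to a $\lambda$-abstraction, a property manifestly stable under $=_\beta$: the forward direction is immediate, and the converse follows from confluence (a common reduct of $M$ and an abstraction is again an abstraction) together with standardization, which guarantees that if $M$ reduces to an abstraction at all, then the leftmost (lazy) reduction already reaches one. Everything else is routine: the application clause is pure congruence of $=_\beta$, and the passage to $\lambda$-calculus convergence is handed to us by Lemma~\ref{A}(i).
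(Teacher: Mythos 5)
Your proposal is correct and follows essentially the same route as the paper: the paper also takes $\mathcal{R}=\{\langle P,Q\rangle \in CL^0\times CL^0 \mid \mathcal{E}(P)=_{\beta}\mathcal{E}(Q)\}$, checks the convergence clause via Lemma~\ref{A}(i) together with the invariance of lazy convergence under $\beta$-conversion, and checks the application clause using that $\mathcal{E}$ is a homomorphism for application and $=_\beta$ is a congruence. The only difference is that you spell out (via confluence and standardization) the $\beta$-invariance of $\dll$, which the paper simply invokes as a known fact.
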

\begin{proof} The proof follows from the fact that ${\mathcal R}=\{ \langle M,N \rangle \in CL^0 \mid \mathcal{ E} (M) 
=_{\beta}  \mathcal{ E} (N) \}$ is a CL lazy bisimulation. Namely $M\dlcl $ \iff\ $N\dlcl$,
because, by Lemma~\ref{A}(i), 
 $M\dlcl$ \iff\ $\mathcal{ E} (M) \dlcl$ and $N\dlcl$ \iff\ $\mathcal{ E} (N) \dlcl$, and $\approx^{app}_l$ is closed under $\beta$-conversion. Moreover, for any $P\in
CL^0$,
$\langle MP, NP\rangle \in {\mathcal R}$, since $\mathcal{ E} (MP) =\mathcal{ E}(M) \mathcal{ E} (P) =_{\beta}
\mathcal{ E}(N) \mathcal{ E} (P) =\mathcal{ E} (NP)$.
 \end{proof}

\begin{lem}\label{AA}  
$\forall P\in CL^0$, $P\simeq_l \mathcal{ T} (\mathcal{ E}(P)).$
\end{lem}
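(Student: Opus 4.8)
The plan is to reduce the statement to Lemma~\ref{B} by way of Lemma~\ref{es}, avoiding any direct bisimulation construction. Lemma~\ref{B} already tells us that two \emph{closed} CL-terms whose $\mathcal{E}$-images are $\beta$-equal are lazy bisimilar; hence it suffices to exhibit $\mathcal{E}(P) =_\beta \mathcal{E}(\mathcal{T}(\mathcal{E}(P)))$ and then apply Lemma~\ref{B} with the pair $P$ and $\mathcal{T}(\mathcal{E}(P))$.

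Before invoking the lemmas I would record that both arguments are indeed closed CL-terms, as required by Lemma~\ref{B}: $P \in CL^0$ by hypothesis; since $\mathcal{E}$ preserves free variables we have $\mathcal{E}(P) \in \Lambda^0$; and since $\mathcal{T}$ maps closed $\lambda$-terms to closed CL-terms, $\mathcal{T}(\mathcal{E}(P)) \in CL^0$. Note also that $\mathcal{E}(P)$ is a pure $\lambda$-term, so it lies in the domain of $\mathcal{T}$.

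The one substantive step is then to instantiate Lemma~\ref{es} at $M := \mathcal{E}(P) \in \Lambda$. Reading $=_\sigma$ as $=_\beta$ in the lazy case, this gives $\mathcal{E}(\mathcal{T}(\mathcal{E}(P))) =_\beta \mathcal{E}(P)$, which is precisely the hypothesis of Lemma~\ref{B} for $M := P$ and $N := \mathcal{T}(\mathcal{E}(P))$. Applying Lemma~\ref{B} we obtain $P \simeq_l \mathcal{T}(\mathcal{E}(P))$, as desired.

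Since every ingredient is already in place, I do not anticipate a genuine obstacle here; the argument is a clean composition of two prior lemmas. The only points deserving attention are bookkeeping ones: confirming the closedness of all terms so that Lemmas~\ref{es} and~\ref{B} genuinely apply, and correctly specializing $=_\sigma$ to $=_\beta$ for the lazy case being treated.
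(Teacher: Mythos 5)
Your proof is correct, and it is not the route the paper takes. The paper proves Lemma~\ref{AA} by exhibiting an explicit bisimulation, namely $\mathcal{R}=\{\,(P\vec{R},\ \mathcal{T}(\mathcal{E}(P))\vec{R})\mid P,\vec{R}\in CL^0\,\}$, and verifying the convergence clause $P\vec{R}\downarrow_l\Leftrightarrow \mathcal{T}(\mathcal{E}(P))\vec{R}\downarrow_l$ by pushing everything through $\mathcal{E}$ with Lemma~\ref{A} and the identity $\mathcal{E}\circ\mathcal{T}\circ\mathcal{E}(P)=_\beta\mathcal{E}(P)$ from Lemma~\ref{es}. You instead observe that Lemma~\ref{B} already packages exactly this kind of bisimulation argument (its witness relation is $\{\langle M,N\rangle\mid \mathcal{E}(M)=_\beta\mathcal{E}(N)\}$), so it suffices to check the single $\beta$-equality $\mathcal{E}(P)=_\beta\mathcal{E}(\mathcal{T}(\mathcal{E}(P)))$, which is Lemma~\ref{es} instantiated at $M:=\mathcal{E}(P)$. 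This is legitimate --- Lemma~\ref{B} is stated and proved before Lemma~\ref{AA} and does not depend on it, so there is no circularity --- and your closedness bookkeeping ($\mathcal{E}(P)\in\Lambda^0$, hence $\mathcal{T}(\mathcal{E}(P))\in CL^0$) covers the hypotheses. The paper's direct construction is, in effect, an inlined special case of Lemma~\ref{B}; your version buys brevity and avoids re-proving the bisimulation closure under application, at the cost of nothing, while the paper's version is self-contained at the level of this one lemma.
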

\begin{proof} We prove that 
$\mathcal{ R} =\{ (P \vec{R}, \mathcal{ T} (\mathcal{ E}(P))\vec{R}) \mid P, \vec{R} \in CL^0 \}$ is a bisimulation.
To this aim, it is sufficient to prove that, for all $P, \vec{R}$, $P\vec{R} \downarrow_l \ \Leftrightarrow\
 \mathcal{ T} (\mathcal{ E}(P))\vec{R} \Downarrow_l$. By Lemma~\ref{A}, 
$P\vec{R} \downarrow_l \ \Leftrightarrow\
 \mathcal{ E}(P\vec{R}) \Downarrow_l$. 
 Now $\mathcal{ E}(P\vec{R}) = \mathcal{ E}(P)  \mathcal{ E}(\vec{R}) =_{\beta} ( \mathcal{ E} \circ \mathcal{ T} \circ
 \mathcal{ E} (P)) \mathcal{ E} (\vec{R})  = \mathcal{ E} ((\mathcal{ T}\circ \mathcal{ E}(P)) \vec{R})$. 
 \\ Finally,
 by Lemma~\ref{A}, $\mathcal{ E} ((\mathcal{ T}\circ \mathcal{ E}(P)) \vec{R} \Downarrow_l \ \Longleftrightarrow\
 \mathcal{ T}(\mathcal{ E}(P)) \vec{R} \downarrow_l$. 
\end{proof}

\begin{lem}\label{AAA}
Let $M\in \Lambda$ and let $\vec{P}$ be closed such that $M[\vec{P}/\vec{x}] \in \Lambda^0$, then \[ \mathcal{T} (M[\vec{P}/\vec{x}]) 
\simeq_{l} \mathcal{ T} (M) [\mathcal{T}(\vec{P})/\vec{x}]\ .\] 
\end{lem}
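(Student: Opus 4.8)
The plan is to reduce the claimed lazy bisimilarity to a plain $\beta$-equality between the $\mathcal{E}$-images of the two CL-terms, and then to invoke Lemma~\ref{B}. First I would observe that both $\mathcal{T}(M[\vec{P}/\vec{x}])$ and $\mathcal{T}(M)[\mathcal{T}(\vec{P})/\vec{x}]$ are closed CL-terms: the former because $M[\vec{P}/\vec{x}]\in\Lambda^0$ and $\mathcal{T}$ preserves closedness, the latter because $FV(\mathcal{T}(M))\subseteq FV(M)\subseteq\{\vec{x}\}$ while each $\mathcal{T}(P_i)$ is closed. Hence Lemma~\ref{B} applies, and it suffices to prove
\[ \mathcal{E}(\mathcal{T}(M[\vec{P}/\vec{x}]))=_{\beta}\mathcal{E}(\mathcal{T}(M)[\mathcal{T}(\vec{P})/\vec{x}])\,. \]

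For the left-hand side, Lemma~\ref{es} gives directly $\mathcal{E}(\mathcal{T}(M[\vec{P}/\vec{x}]))=_{\beta}M[\vec{P}/\vec{x}]$. For the right-hand side, the key observation is that $\mathcal{E}$ commutes with substitution: since $\mathcal{E}$ is defined homomorphically on application, fixes variables, and sends the combinators $\K,\SSS$ to \emph{closed} $\lambda$-terms, one has $\mathcal{E}(A[B/x])=\mathcal{E}(A)[\mathcal{E}(B)/x]$ for all CL-terms $A,B$ (no capture issues arise, as the images of the constants are closed). Applying this I obtain
\[ \mathcal{E}(\mathcal{T}(M)[\mathcal{T}(\vec{P})/\vec{x}])=\mathcal{E}(\mathcal{T}(M))[\mathcal{E}(\mathcal{T}(\vec{P}))/\vec{x}]\,. \]

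Next I would use Lemma~\ref{es} again, now in the form $\mathcal{E}(\mathcal{T}(M))=_{\beta}M$ and $\mathcal{E}(\mathcal{T}(P_i))=_{\beta}P_i$ for each $i$, together with the standard fact that $\beta$-equality is a congruence closed under substitution (if $A=_{\beta}A'$ and $B_i=_{\beta}B_i'$, then $A[\vec{B}/\vec{x}]=_{\beta}A'[\vec{B'}/\vec{x}]$). This yields $\mathcal{E}(\mathcal{T}(M))[\mathcal{E}(\mathcal{T}(\vec{P}))/\vec{x}]=_{\beta}M[\vec{P}/\vec{x}]$. Both sides of the first displayed equation are therefore $\beta$-equal to the common term $M[\vec{P}/\vec{x}]$, so by transitivity of $=_{\beta}$ the display holds, and Lemma~\ref{B} then delivers $\mathcal{T}(M[\vec{P}/\vec{x}])\simeq_{l}\mathcal{T}(M)[\mathcal{T}(\vec{P})/\vec{x}]$.

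I expect the only point requiring genuine care to be the commutation of $\mathcal{E}$ with substitution and the closure of $\beta$-conversion under substitution of $\beta$-equal terms: these are routine, but they are exactly what collapses the two $\mathcal{E}$-images onto the single term $M[\vec{P}/\vec{x}]$. Notably, no fresh induction is needed, since all the inductive content has already been absorbed into Lemma~\ref{es} (in particular the idempotency $\mathcal{T}^{2}=\mathcal{T}$ and $\mathcal{E}\mathcal{T}(M)=_{\beta}M$); the argument is thus a direct chain of rewritings closed off by Lemma~\ref{B}.
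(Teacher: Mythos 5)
Your proposal is correct and follows essentially the same route as the paper's proof: reduce to $\beta$-equality of the $\mathcal{E}$-images via Lemma~\ref{B}, use Lemma~\ref{es} on the left-hand side, and on the right-hand side commute $\mathcal{E}$ with substitution before applying Lemma~\ref{es} again. Your explicit remarks on the absence of capture (since $\mathcal{E}(\K),\mathcal{E}(\SSS)$ are closed) and on the closure of $=_{\beta}$ under substitution of $\beta$-equal terms make precise the steps the paper leaves implicit.
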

\begin{proof}
By Lemma~\ref{B}, it is sufficient to show that ${\mathcal E} (\mathcal{T} (M[\vec{P}/\vec{x}]) ) =_{\beta}
{\mathcal E} (\mathcal{ T} (M) [\mathcal{T}(\vec{P})/\vec{x}])$. 
Now 
${\mathcal E} (\mathcal{T} (M[\vec{P}/\vec{x}]) )=_{\beta} M[\vec{P}/\vec{x}]$, by Lemma~\ref{es}. 
On the other hand, from the definition of ${\mathcal E}$, one can prove by induction that 
${\mathcal E} (\mathcal{ T} (M) [\mathcal{T}(\vec{P})/\vec{x}])=  {\mathcal E} \mathcal{T} (M) [{\mathcal E} {\mathcal T} (\vec{P})/\vec{x}]$,
which, by Lemma~\ref{es}, 
$=_{\beta} M[\vec{P}/\vec{x}]$.
\end{proof}

Now we proceed to prove Theorem~\ref{eqv} $(\Rightarrow)$. Assuming $M\widehat{\approx}_{l}  N$, we have to prove that,
for all closing $\vec{P}$, ${\mathcal T} (M) [\vec{P}/\vec{x}] \simeq_l {\mathcal T} (N) [\vec{P}/\vec{x}] $.  From
$M\widehat{\approx}_{l}  N$ it follows  $M[{\mathcal E}(\vec{P})/\vec{x}] \approx_l N[{\mathcal E}(\vec{P})/\vec{x}]$.
By Lemmata~\ref{A},~\ref{AA}, using the fact that $\simeq_l$ is a congruence, we have
${\mathcal T} (M[{\mathcal E}(\vec{P})/\vec{x}]) \simeq_l {\mathcal T} (N[{\mathcal E}(\vec{P})/\vec{x}])$. By Lemma~\ref{AAA}, 
${\mathcal T} (M)[{\mathcal T} {\mathcal E}(\vec{P})/\vec{x}] \simeq_l {\mathcal T} (N)[{\mathcal T} {\mathcal E}(\vec{P})/\vec{x}]$, hence
by Lemma~\ref{AAA}, using the fact that $\simeq_l$ is a congruence, we have 
${\mathcal T} (M)[\vec{P}/\vec{x}] \simeq_l {\mathcal T} (N)[\vec{P}/\vec{x}]$.

In order to prove Theorem~\ref{eqv} $(\Leftarrow)$, assume ${\mathcal T}(M) \widehat{\simeq}_{l}  {\mathcal T}(N)$. We have to
prove that, for all closing $\vec{P}$, $M[\vec{P}/\vec{x}]\approx_l N[\vec{P}/\vec{x}]$. From 
${\mathcal T}(M) \widehat{\simeq}_{l}  {\mathcal T}(N)$ it follows 
${\mathcal T}(M)[{\mathcal T} (\vec{P})/\vec{x}]  \simeq_{l}  {\mathcal T}(N)[{\mathcal T} (\vec{P})/\vec{x}] $.
From Lemma~\ref{AAA}, we have 
${\mathcal T}(M[\vec{P}/\vec{x}] ) \simeq_{l}  {\mathcal T}(N [\vec{P}/\vec{x}] )$. By Lemma~\ref{A}, we have
$M[\vec{P}/\vec{x}] \approx_l N[\vec{P}/\vec{x}]$.

\subsection{The First-order Approach: CL as a Context Category}\label{first}
We endow CL with a structure of reactive system in the sense of \cite{LM00}, by considering the context category
 of closed unary contexts:  

\begin{defi}[Lazy, cbv CL Reactive Systems] \label{lazyRS}
${\mathbf C}^1_{\sigma}$, for $\sigma \in \{ l,v\}$,  consists of:
\begin{enumerate}[$\bullet$]
\item the context category whose objects are $0,1$, where the morphisms from 
0 to 1  are the closed terms, the morphisms from 1 to 1 are the  
closed unary contexts, and composition is context substitution;
\item the subcategory of reactive contexts is determined by the reactive contexts for the lazy and cbv strategy, respectively,  presented in Definition~\ref{lazyr}; 
\item the reaction rules are the standard CL reduction rules for the lazy case, and the cbv
reduction rules for the cbv case.
\end{enumerate}
\end{defi}

\begin{lem}
The reactive systems ${\mathbf C}^1_{\sigma}$ have redex RPOs.
\end{lem}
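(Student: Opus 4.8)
The plan is to transcribe Sewell's construction of relative pushouts for categories of term contexts \cite{Sew02}, specialised to the two-object category underlying $\mathbf{C}^1_\sigma$, whose only objects are $0$ and $1$; the argument is uniform in $\sigma \in \{l,v\}$, since the particular shape of the reactive contexts plays no role in the \emph{existence} of the RPO itself. First I would record the structural facts that make everything go through: an arrow $0 \to 1$ is a closed CL-term, an arrow $1 \to 1$ is a closed unary context, and composition grafts a term (resp.\ a context) into the unique hole. Because all terms and contexts are \emph{closed}, this grafting is capture-free and amounts to plain substitution at a tree address; moreover the only candidate mediating object available is $1$, as there are no arrows $1 \to 0$.

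Given a redex square $ct = dl$ with $c,d \colon 1 \to 1$ ($d$ reactive) and $t,l \colon 0 \to 1$, I would regard the common arrow $T = ct = dl$ as a single closed CL-term, i.e.\ a finite tree. The hole of $c$ occupies a position $p$ of $T$, with $t$ the subterm rooted there, and the hole of $d$ occupies a position $q$, with $l$ the redex rooted there. Since both contexts are unary there are exactly two holes, and in a tree the positions $p$ and $q$ are either comparable (one an ancestor of the other) or parallel; this trichotomy drives the proof.

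Next I would exhibit the RPO explicitly, always with mediating object $I_5 = 1$ (notation as in Fig.~\ref{runo}). When $p$ and $q$ are parallel, let $T_5$ be the subterm of $T$ rooted at the meet of $p$ and $q$ and write $T_5 = G'[t,l]$ for a suitable two-hole context $G'$; then take $g$ to be the context of $T$ above $T_5$, $e = G'[-,l]$ and $f = G'[t,-]$. When $q$ lies strictly below $p$ the redex sits inside the state, so $t = C'[l]$ for some unary $C'$, and one takes $e = \mathit{id}$, $f = C'$, $g = c$. The remaining case, $p$ below $q$, is symmetric, with $l = D'[t]$, $e = D'$, $f = \mathit{id}$, $g = d$. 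In each case the identities $ge = c$, $gf = d$ and $et = fl$ hold by inspection, so the tuple is a candidate; note that the square is an IPO precisely when $g = \mathit{id}$, which recovers the IPO labels of Fig.~\ref{tb}.

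Finally I would check the universal property, which is the only substantial step. For any other candidate $\langle I_6, e', f', g'\rangle$, the shared term $e't = f'l$ must again contain both holes, so its root lies at or above the meet of $p$ and $q$; hence $T_5$ embeds into $e't$, and this embedding is exactly a unary context $h \colon 1 \to I_6$ (forcing $I_6 = 1$) satisfying $he = e'$, $hf = f'$ and $g'h = g$. Uniqueness of $h$ follows because a unary context is determined by the tree surrounding its hole, and $he = e'$ pins this tree down. The main obstacle is precisely this positional bookkeeping: making the case split on the relative location of the two holes watertight and verifying that the meet construction simultaneously yields the candidate and the mediating morphism. Closedness is essential here, since it is what guarantees capture-free grafting and the clean composition of tree addresses; this is the only place where the reasoning departs from the general term-context setting, and it is what allows Sewell's proof to be transcribed almost verbatim.
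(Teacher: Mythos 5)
Your construction is correct and coincides with what the paper intends: this lemma is stated without proof, and the analogous lemma for $\mathbf{C}^{\lambda}_{\sigma}$ is justified only by the remark that one can rephrase Sewell's RPO construction for categories of term contexts \cite{Sew02} using essentially the closedness of terms and contexts --- which is exactly the positional case analysis (holes parallel / one above the other) and meet-of-positions mediating context that you carry out. The only blemish is the pointer to Fig.~\ref{tb}, which lists the $\lambda$-calculus IPO contexts rather than the CL ones given in the bulleted list following the lemma.
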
 

One can easily check that the IPO contexts are  the following.
\begin{enumerate}[$\bullet$]
\item {\bf Lazy}.  The IPO contexts for a given term $M$ are:
\begin{enumerate}[$-$] 
\item  $[\ ]\vec{P}$, where
$\vec{P}$ has the minimal length for the top-level reaction of $M$ to fire,
\item   $\K C[\ ]P_1,\
  \K P_1 C[\ ], \ \K P_1 \vec{Q} C[\ ]$, for any  $ C[\ ], \vec{Q}, P_1$,
\item $\SSS C[\ ]P_1 P_2,\
\SSS P_1 C[\ ]P_2,\   \SSS P_1 P_2 C[\ ],\   \SSS P_1 P_2 \vec{Q} C[\ ]$, for any  $ P_1, P_2,C[\ ], \vec{Q}$.
\end{enumerate}
\item {\bf Cbv}.
\\ For $M$  not a value, the following contexts are IPOs:
\begin{enumerate}[$-$]
\item   $[\ ]$, 
\end{enumerate}
\noindent  For $M$ value, the following contexts are IPOs:
\begin{enumerate}[$-$]
\item   $[\ ] V_1 \ldots V_i$,
where $i$ is the minimum number of arguments necessary for the top-level reaction of $M$ to fire,   
\item $[\ ]  V_1 \ldots V_i P$, where $P$ is not a value, and $i$, possibly $0$, is less than the minimum number of arguments necessary for the top-level reaction of $M$ to fire,
\item $V C[\ ] V_1 \ldots V_i$
where $V$ and $C[M]$ are values and $i+1$ is the minimum number of arguments necessary for the top-level reaction of $V$ to fire,  in more detail:  $\K C[\ ] V$, $\K V C[\ ]$, $\SSS C[\ ] V_1 V_2$,  $\SSS V_1 C[\ ] V_2$, $\SSS V_1 V_2 C[\ ]$, 
\item $V C[\ ]  V_1 \ldots V_i P$, where $V$ and $C[M]$ are values, $P$ is not a value, and $i+1$ is less than the minimum number of arguments necessary for the top-level reaction of $V$ to fire, in more detail: $\K C[\ ] P$, $\SSS C[\ ] P$, $\SSS C[\ ] V_1 P$,  $\SSS V_1 C[\ ] P$.
\end{enumerate}
For any term $M$, the following contexts are IPOs: 
\begin{enumerate}[$-$]
\item $P C[\ ]$, where $P$ is not a value and $C[\ ]$ is any context.
\end{enumerate}
\end{enumerate}
\noindent

For any of the above contexts there is a reduction rule which applies, and the context is minimal for the given reduction to fire.
By case analysis, one can show that all the other contexts are not IPO contexts.

The strong versions of context and IPO bisimilarities are too fine, since, as in the $\lambda$-calculus
case, they take into account reduction steps, and
 tell apart $\beta$-convertible terms. Thus we consider weak variants of such equivalences, where
 the identity context $[ \ ]$ is unobservable.
 Weak context bisimilarity is too coarse, since  it equates all terms.
 However,  we will prove that the weak IPO  bisimilarity ``almost'' coincides with the lazy/cbv equivalence. Moreover, we will show how to recover the exact correspondence by considering a suitable variant of
 CL. 
 
First of all, let $\simeq_{\sigma I}$, for $\sigma \in \{ l,v\}$,  denote the \emph{lazy/cbv weak IPO bisimilarity} obtained 
by considering the identity context as unobservable.
Similarly to the case of the $\lambda$-calculus, we can define a list extension category by taking the category of multi-holed contexts.  In this category  all contexts with no reactive indexes are IPO uniform.
In the lazy case, the contexts with a reactive index are of the shape $[\ ]C_1[\ ]\ldots C_k[\ ]$
(with the leftmost hole being reactive), and the remaining ones have not reactive indexes and are IPO uniform.
 For the cbv case,  one can show that the multi-holed contexts with a reactive index are given by the grammar:
 \[ D[ \ ] \ ::= \  [\ ] \ | \   D[\ ]C[\ ] \ | \ \K D[\ ] \ |\  \K V D[\ ]  \ | \  \SSS  D[\ ] \ | \ \SSS V D[\ ] \ | \ \SSS V_1 V_2 D[\ ] \ ,
\]
where $C[\ ]$ is any closed multi-holed context.

Thus, by Proposition~\ref{list-weak-congruence}(i),  we have: 
 
 \begin{prop}\label{cunocon}
 For all $M, N \in CL^0$,  for any closed unary context $C[\ ]$,
 \[ M \simeq_{\sigma I} N \ \Longrightarrow \ C[M] \simeq_{\sigma I} C[N] \ .\]
 \end{prop}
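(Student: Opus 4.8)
The plan is to obtain this congruence as a direct instance of Proposition~\ref{list-weak-congruence}(i). First I would exhibit a category that is a list extension of the context category $\mathcal{C}^1_\sigma$ underlying $\mathbf{C}^1_\sigma$: its objects are the finite lists $\langle 1,\ldots,1\rangle$, and an arrow from the $n$-fold list to the $m$-fold list is an $m$-tuple of (possibly closed) multi-holed CL-contexts with $n$ holes in total, where multi-holed contexts are generated by $C[\ ]\ ::=\ [\ ]\ |\ P\ |\ C[\ ]C[\ ]$ with $P\in CL$. Concatenation of the hole-lists provides the required monoidal functor $\otimes$, and the unary contexts of Definition~\ref{uccl} embed as a full subcategory (identifying $0$ with the empty list and $1$ with the singleton), so all the list-extension axioms hold. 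Since $\mathbf{C}^1_\sigma$ has redex RPOs, the only remaining hypothesis of Proposition~\ref{list-weak-congruence}(i) is the dichotomy: every multi-holed context is IPO uniform or has a reactive index.

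For the lazy case I would establish the dichotomy by inspecting the head of a multi-holed context $g$. If $g$ has the shape $[\ ]C_1[\ ]\cdots C_k[\ ]$, then filling every hole but the leftmost with closed terms yields an applicative context $[\ ]\vec{P}$, which is reactive by the reactive-context grammar of Definition~\ref{lazyRS}; hence the leftmost hole is a reactive index. Every other multi-holed context has head $\K$, $\SSS$, or a closed application, and for such $g$ the top-level redex (if any) is already fixed by the head symbol together with its first few arguments. The list of lazy IPO contexts recorded above ($[\ ]\vec{P}$, $\K C[\ ]P_1$, $\SSS P_1P_2C[\ ]$, and so on) then shows that each IPO transition merely relocates or duplicates the arguments in a way read off from $g$ itself, independently of the closed terms substituted into the holes. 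This is exactly IPO uniformity, with the witnessing contexts $g_1,\ldots,g_h$ obtained by applying each admissible label to $g$ symbolically.

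For the cbv case I would run the analogous argument, taking the contexts with a reactive index to be precisely those generated by the displayed grammar $D[\ ]\ ::=\ [\ ]\ |\ D[\ ]C[\ ]\ |\ \K D[\ ]\ |\ \K V D[\ ]\ |\ \SSS D[\ ]\ |\ \SSS V D[\ ]\ |\ \SSS V_1V_2 D[\ ]$, and checking that for each such context the indicated hole becomes reactive once the remaining holes are closed, while all remaining multi-holed contexts are IPO uniform. With the dichotomy verified in both strategies, Proposition~\ref{list-weak-congruence}(i) yields that the weak IPO bisimilarity is a congruence; specializing its congruence statement to single-holed contexts $C[\ ]$ gives the stated implication $M\simeq_{\sigma I}N \Rightarrow C[M]\simeq_{\sigma I}C[N]$.

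The main obstacle I anticipate is the IPO-uniformity check for the non-reactive contexts in the cbv case. There the $\SSS$-rule duplicates its third argument and values may be assembled inside the holes, so one must confirm that both the location of the unique IPO redex and the way the argument holes are copied into the two occurrences of $(V_1V_3)(V_2V_3)$ are determined by $g$ alone and not by which closed values occupy the holes. Managing the interaction between the value predicate and the closedness of the plugged-in terms, across the several productions of the cbv reactive grammar, is the delicate point; by comparison the lazy case and all the reactive-index verifications are routine.
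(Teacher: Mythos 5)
Your proposal matches the paper's own argument: the paper likewise forms the list-extension category of multi-holed CL-contexts, asserts that in the lazy case the contexts with a reactive index are exactly those of shape $[\ ]C_1[\ ]\cdots C_k[\ ]$ and that the rest are IPO uniform, gives the same grammar of reactive-index contexts for cbv, and then invokes Proposition~\ref{list-weak-congruence}(i). Your write-up is in fact somewhat more detailed than the paper's (which leaves the dichotomy check implicit), and the cbv duplication issue you flag is absorbed by the definition of IPO uniformity, since the relocation functions $l_i$ need not be injective.
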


The rest of this section is devoted to compare the lazy/cbv weak IPO bisimilarity $\simeq_{\sigma I}$ with
 the lazy/cbv equivalence on CL $\simeq_{\sigma}$ defined in Definition~\ref{dleq}.  The following lemma
 can be easily proved by coinduction, using  Proposition~\ref{cunocon}.
 
\begin{lem}\label{ccont}
$\simeq_{\sigma I}{}\subseteq{} \simeq_{\sigma}$.
\end{lem}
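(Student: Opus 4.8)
The plan is to prove the inclusion coinductively, by showing that the weak IPO bisimilarity $\simeq_{\sigma I}$ is itself a CL $\sigma$-bisimulation in the sense of Definition~\ref{dleq}. Since $\simeq_{\sigma}$ is by definition the largest such bisimulation, this immediately yields $\simeq_{\sigma I} \subseteq \simeq_{\sigma}$. Concretely, assuming $M \simeq_{\sigma I} N$, I have to verify the two closure clauses: the \emph{convergence} clause $M \downarrow_{\sigma} \Leftrightarrow N \downarrow_{\sigma}$, and the \emph{applicative} clause that $\langle MP, NP\rangle \in \simeq_{\sigma I}$ for every closed argument $P$ (every closed value $V$ in the cbv case).

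The applicative clause is the routine half. Writing $MP = C[M]$ for the closed unary context $C[\ ] = [\ ]P$ (respectively $C[\ ] = [\ ]V$), Proposition~\ref{cunocon} gives directly $C[M] \simeq_{\sigma I} C[N]$, that is $MP \simeq_{\sigma I} NP$. Thus this clause is nothing but the congruence property already established.

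All the content is in the convergence clause, which I would obtain from a characterization of $\downarrow_{\sigma}$ purely in terms of the weak IPO transition system. The key observation is that the identity label $[\ ]$ is exactly the unobservable action $\tau$ and corresponds to a single reduction step, so a chain of identity-labelled IPO transitions is the same thing as a $\rightarrow_{\sigma}$-reduction; and, inspecting the IPO contexts listed in Section~\ref{first}, a closed term admits a \emph{genuinely} labelled transition of the value-completing form ($[\ ]\vec{P}$ in the lazy case, $[\ ]V_1 \ldots V_i$ with the $V_j$ values in the cbv case) precisely when it is a value, since only a value requires extra arguments in order to fire its outermost redex. Consequently $M \downarrow_{\sigma}$, i.e.\ $M \rightarrow_{\sigma}^{*} V$ for some value $V$, is equivalent to the existence of a weak transition $M \witrans{[\ ]\vec{P}} M'$ (perform the identity-labelled reductions down to $V$, then fire the labelled step), and conversely any weak transition carrying such a value-completing label forces the source term to reduce to a value first.

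Granting this characterization, the convergence clause is immediate: if $M \downarrow_{\sigma}$ then $M \witrans{[\ ]\vec{P}} M'$ for a suitable label, and since $M \simeq_{\sigma I} N$ this transition must be matched by $N \witrans{[\ ]\vec{P}} N'$ \emph{with the same label}, which forces $N$ to reduce to a value, whence $N \downarrow_{\sigma}$; the reverse implication is symmetric. The main obstacle is establishing the convergence characterization rigorously: one must check that, among all the IPO labels of Section~\ref{first}, it is exactly the value-completing family that detects convergence (and not the genuinely contextual labels such as $\K C[\ ]P_1$ or $\SSS V_1 C[\ ]V_2$), and one must deal with the cbv case, where the value grammar and the admissible lists of value arguments are considerably more delicate than in the lazy case.
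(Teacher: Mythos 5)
Your proposal is correct and follows essentially the same route as the paper: the paper also proves the inclusion by showing that $\simeq_{\sigma I}$ is a CL lazy/cbv bisimulation, deriving the applicative clause from Proposition~\ref{cunocon} and the convergence clause from the observation that convergent and divergent terms have different (weak) IPO transitions. Your elaboration of that last point — that the value-completing labels $[\ ]\vec{P}$ (resp.\ $[\ ]V_1\ldots V_i$) are exactly the ones that detect reduction to a value — is just an unpacking of the one-line justification the paper gives.
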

\begin{proof}
We prove that $\simeq_{\sigma I} $ is a lazy/cbv bisimulation 
on CL.  Let $M\simeq_{\sigma I} N $. If $M\downarrow_{\sigma}$, then also $N\downarrow_{\sigma}$, since a convergent term has 
different IPO-transitions from
a divergent term. We are left to prove that for all $P$, $MP \simeq_{\sigma I} N P$. But 
this follows from Proposition~\ref{cunocon}.
\end{proof}

However, the converse inclusion  $ \simeq_{\sigma}  {}\subseteq{} \simeq_{\sigma I}$ does not hold, 
since 
for instance  $\K \simeq_{\sigma}  \SSS (\K \K) (\SSS \K \K)$, because, e.g. for the lazy case, for all $P$,
 $ \SSS (\K \K) (\SSS \K \K)P \rightarrow^* \K P$. But 
 $\K\not  \simeq_{\sigma I} \SSS (\K \K) (\SSS \K \K)$. Namely
$ \SSS (\K   \K)(\SSS \K \K) \itrans{[\ ] V}$, while $ \K  \stackrel{[\ ] V}{\not\rightarrow_I}$.
The problem, which was already noticed in \cite{Sew02}, 
arises since the equivalence $ \simeq_{\sigma I}$ tells apart
terms whose top-level combinators
expect a different number of arguments to reduce.
In order to overcome this problem,  we consider an extended calculus, CL$^*$, where the combinators 
$\K$ and $\SSS$ become unary, at the price of adding new intermediate combinators and 
intermediate reductions (the reactive contexts are the ones in Definition~\ref{lazyRS}). 

\begin{defi}
The CL$^*$ lazy combinatory calculus is defined by
\begin{enumerate}[$\bullet$]
\item Terms: 
\[M\ ::=\   x\ | \ \K\  |\  \SSS\  |\  \K' M \   |\   \SSS' M \ |\
\SSS'' MN \ |  \ MN 
\] 
  where $\K$, $\K'$, $\SSS$, $\SSS'$,  $\SSS''$ are combinators.
\item Rules:
\[\K M\rightarrow \K' M \quad \K'MN \rightarrow M\]
\[\SSS M \rightarrow \SSS'M \quad \SSS'M N \rightarrow \SSS''MN \quad
         \SSS'' MNP \rightarrow (MP)(NP) 
\]
\end{enumerate}

\noindent The CL$^*$ cbv combinatory calculus is defined by
\begin{enumerate}[$\bullet$]
\item Terms: 
\[M\ ::=\   x\ | \ \K \ |\  \SSS  \ | \ MN  \ |\  \K' V \ |\   \SSS' V
\ |\  \SSS'' V V
\] 
  Values: 
\[V\ ::= \K \ | \ \K' V \ |\  \SSS  \  | \ \SSS' V \  | \ \SSS'' V V \]
  where $\K$, $\K'$, $\SSS$, $\SSS'$,  $\SSS''$ are combinators.
\item Rules:
\[\K V_1\rightarrow \K' V_1 \quad \K'V_1 V_2 \rightarrow V_1 \]
\[\SSS V_1 \rightarrow \SSS'V_1 \quad \SSS' V_1 V_2 \rightarrow \SSS''V_1 V_2
    \quad\SSS'' V_1 V_2 V_3  \rightarrow (V_1V_3)(V_2 V_3) 
\]
\end{enumerate}
\end{defi}

\noindent
Notice that the calculus in the above definition is well-defined, since the set of terms is closed under
the reaction rules. One can define lazy/cbv reduction strategies on CL$^*$  as
in Definition~\ref{lazyr}, or as the closures of the reaction rules under the following reactive contexts:

\begin{defi}[CL$^*$ Reactive Contexts]\hfill
\begin{enumerate}[$\bullet$]
\item {\bf Lazy}. \ $ D[\ ] \ ::= \ [\ ] \ | \ D[\ ] P $ .
\item {\bf Cbv}. \ $D[\ ] \ ::= \ [\ ] \ | \ D[\ ]P \ |\ V D[\ ]$.
\end{enumerate}
\end{defi}

\noindent
Let $\simeq^*_{\sigma}$ be the lazy/cbv equivalence defined on CL$^*$,
similarly as in Definition~\ref{dleq} for CL.  There is a trivial
embedding of CL-terms into CL$^*$. Moreover, one can easily check
that, when restricted to terms of CL, $\simeq^*_{\sigma}$ coincides
with $\simeq_{\sigma}$.

Analogously to the CL case, we define the reactive system over CL$^*$.  In the context category, the unary closed contexts are defined by the grammar 
\[ C[\ ] \ ::= \ [\ ] \ |\ C[\ ] M   \ |\ M C[\ ] \]
where $M$ is a closed term.  Notice that, under the above definition, expressions like $\K' [\ ]$ do not represent unary closed context. In defining the IPO transitions, it is important to observe that $C[M]$ is a value \iff\ $M$ is a value and $C[\ ]$ is the identity context $[\ ]$. 
Let us denote by $\simeq_{\sigma I}^{*}$ the weak IPO bisimilarity obtained by considering the lazy/cbv  reactive system over CL$^*$. Since CL$^*$-terms expect at most one argument,
the IPO contexts for CL$^*$ are simpler than the ones for CL, and they are summarized in 
Figure~\ref{tbstar}.

\begin{figure}
\begin{center}
\begin{tabular}{| l |l | l |}
\multicolumn{3}{c}{\bf Lazy IPO  lts's on  CL$^*$}\\
\hline
term $M$ & IPO contexts & reactive IPO contexts \\
\hline\hline
 $M$ value &  $[\ ]P,$ $P C[ \ ]$ & $[\ ]P$ \\
\hline 
$M$ not a value & $[\ ] $,  $P C[ \ ]$  & $ [\ ] $\\
\hline
\end{tabular} \vspace*{0.5cm}

\begin{tabular}{| l |l | l| }
\multicolumn{3}{c}{\bf Cbv IPO lts's on  CL$^*$}\\
\hline
term  $M$ & IPO contexts & reactive IPO contexts \\
\hline\hline  \vspace*{0.5ex}
 $M$ value  &  $[\ ]P,$ $RC[ \ ]$, $V [\ ]$  & $[\ ]P$, $V [\ ]$ \\
\hline
$M$ not a value & $[\ ] $,  $RC[\ ]$ & $[\ ]$  \\
\hline
\end{tabular} 

\bigskip
 where  $R$ is not a value, $V$ is a value, $C[\ ]$ is a generic unary context.
\caption{IPO contexts for the lazy/cbv lts's on  CL$^*$.}
\label{tbstar}
\end{center}
\end{figure}

Similarly to the previous case, one can consider the multi-holed
contexts category as a list extension category.  In this category all
contexts are either IPO uniform or have a reactive index.  Moreover,
the reduction relation is deterministic.  Thus
Proposition~\ref{list-weak-congruence} applies and we have:

\begin{prop}\label{costar}
\hfill \begin{enumerate}[\em(i)]
\item
The equivalence $\simeq_{\sigma I}^{*}$ is a congruence w.r.t. unary contexts.
\item $\simeq^*_{\sigma I}{}={}\simeq^*_{\sigma R}$, where $\simeq^*_{\sigma R}$ denotes the IPO bisimilarity where only reactive IPO contexts are considered.
\end{enumerate}
\end{prop}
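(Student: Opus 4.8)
The plan is to verify that the CL$^*$ reactive system $\mathbf{C}^{*}_{\sigma}$ satisfies all the hypotheses of Proposition~\ref{list-weak-congruence} and then instantiate that proposition directly. First I would fix the relevant list extension category $\mathcal{D}$ of multi-holed CL$^*$ contexts, entirely analogous to $\mathcal{D}^{\lambda}_{\sigma}$: its objects are finite lists $\langle 1, \ldots, 1 \rangle$, its arrows are tuples of multi-holed contexts with the holes distributed among them, and $\otimes$ acts as concatenation. As for the $\lambda$-calculus and for plain CL, one checks that $\mathbf{C}^{*}_{\sigma}$ has redex RPOs; the argument of \cite{Sew02} carries over unchanged, since we work only with closed terms and closed contexts, and $\mathcal{C}^{*}_{\sigma}$ is a full subcategory of $\mathcal{D}$ under the usual identification of $0$ with the empty list.

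The heart of the argument is the required dichotomy: every multi-holed context $g$ is either IPO uniform or has a reactive index. I would establish this by a case analysis driven by the tables in Fig.~\ref{tbstar}. The contexts possessing a reactive index are exactly those whose leftmost hole lies on the reactive spine, i.e.\ the multi-holed contexts generated by lifting the CL$^*$ reactive-context grammars: $[\ ]C_1[\ ]\ldots C_k[\ ]$ in the lazy case, and the grammar $D[\ ]\,::=\,[\ ]\mid D[\ ]C[\ ]\mid V D[\ ]$ in the cbv case; plugging terms into the remaining holes and the identity into the distinguished one yields a reactive unary context. For \emph{all other} contexts I would prove IPO uniformity. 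The only non-reactive IPO labels available to such a $g$ arise from a head context of the shape $PC[\ ]$ (lazy) or $RC[\ ]$ (cbv) with $P$, $R$ not values, where the firing redex sits entirely inside the active head and the filled holes occur only in passive argument position. Consequently both the residual context $g_i$ and the index map $l_i$ witnessing uniformity depend on $g$ alone, not on the terms placed in the holes, which is precisely the definition of IPO uniform. The CL$^*$ observation that ``$C[M]$ is a value \iff\ $M$ is a value and $C[\ ]=[\ ]$'' is exactly what prevents the value/non-value status of a subterm from depending on the hole fillers, and so keeps the case split stable.

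Next I would record that the reduction relation on CL$^*$ is deterministic: both strategies select a unique leftmost redex, and the intermediate rules for $\K'$, $\SSS'$, $\SSS''$ are themselves deterministic. With the dichotomy and determinism in place, Proposition~\ref{list-weak-congruence}(i) immediately gives that $\simeq^{*}_{\sigma I}$ is a congruence w.r.t.\ unary contexts, which is part~(i); and Proposition~\ref{list-weak-congruence}(ii), whose determinism premise is now met, gives $\simeq^{*}_{\sigma I}={}\simeq^{*}_{\sigma R}$, which is part~(ii).

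The step I expect to be the main obstacle is the IPO-uniformity case analysis, specifically making rigorous the claim that when a non-value occupies the head position the reaction is \emph{local} to the head and ignores the hole fillers, uniformly over the infinitely many argument shapes. For each non-reactive IPO label $f$ one must exhibit explicit witnesses $g_i$ and $l_i$ and verify both defining clauses, namely that every $g_i(t_{l_i(0)}\otimes\ldots)$ is indeed an $f$-transition and that every $f$-transition has this form. The bookkeeping is routine but slightly delicate, because the head $P$ may require several reduction steps before any argument matters; one must therefore reason about the \emph{single} IPO step that labels the transition, rather than about the whole weak $\witrans{f}$ chain.
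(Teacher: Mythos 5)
Your proposal is correct and follows essentially the same route as the paper: the paper likewise treats the multi-holed CL$^*$ contexts as a list extension category, asserts that every such context is either IPO uniform or has a reactive index, notes determinism of the reduction, and then invokes Proposition~\ref{list-weak-congruence}(i) and (ii). You in fact spell out the dichotomy and the locality of head reductions in more detail than the paper, which states these facts without elaboration.
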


\noindent
By Proposition~\ref{costar}(ii) above, 
 the weak IPO equivalence can be significantly 
simplified.
Namely,  in the lazy case, we obtain   the weak IPO bisimilarity $\simeq_{l R}$, where
 only applicative  IPO contexts are considered (see Figure~\ref{tbstar}).
In the cbv case, Proposition~\ref{costar} 
allows us to
 reduce ourselves  to contexts of the shape $[\ ], [\ ] P, V [\ ]$ (see Figure~\ref{tbstar}). However, one can prove  that also in this case we can consider only applicative by-value contexts.
 We skip the details of such proof.

Moreover,
  we have $\K  \simeq_{\sigma I}^{*} \SSS (\K \K) (\SSS \K \K)$.  More in general, 
  the weak IPO bisimilarity  $ \simeq_{\sigma I}^{*}$ coincides with the lazy/cbv equivalence on CL:
  
\begin{thm}\label{ug}
For all $M,N \in CL^0$, \ \ $M\simeq_{\sigma I}^{*} N  \ \Longleftrightarrow\ M\simeq_{\sigma}N$.
\end{thm}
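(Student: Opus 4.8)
The plan is to prove the two inclusions $\simeq_{\sigma I}^{*}\subseteq\simeq_{\sigma}$ and $\simeq_{\sigma}\subseteq\simeq_{\sigma I}^{*}$ separately, working throughout at the level of CL$^*$ and invoking at the very end the fact, already noted, that the restriction of $\simeq^*_{\sigma}$ to CL-terms coincides with $\simeq_{\sigma}$. Concretely I would first establish the CL$^*$-level identity $\simeq^*_{\sigma I}{}={}\simeq^*_{\sigma}$ and then read off the statement for $M,N\in CL^0$. Two auxiliary facts about $\simeq^*_{\sigma}$ will be used freely: it is \emph{reduction-invariant} ($M\rightarrow_{\sigma}M'$ implies $M\simeq^*_{\sigma}M'$, immediate from Definition~\ref{dleq} since convergence and applicative behaviour are preserved by reduction), and it is a \emph{congruence} w.r.t.\ (multi-holed) contexts, the same property of $\simeq_{\sigma}$ that was already used in the proof of Lemma~\ref{AA}.

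The inclusion $\simeq^*_{\sigma I}\subseteq\simeq^*_{\sigma}$ is the exact analogue of Lemma~\ref{ccont}, now carried out on CL$^*$. I would show that $\simeq^*_{\sigma I}$ is a lazy/cbv bisimulation in the sense of Definition~\ref{dleq}. If $M\simeq^*_{\sigma I}N$ then $M\downarrow_{\sigma}\Leftrightarrow N\downarrow_{\sigma}$, because by Figure~\ref{tbstar} a value and a non-value have distinguishable IPO-transitions in CL$^*$ (a value offers the observable label $[\ ]P$, whereas a non-value only offers the unobservable $[\ ]$ together with passive labels), so a weak IPO bisimulation cannot relate a convergent term to a divergent one. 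The applicative clause $MP\simeq^*_{\sigma I}NP$ follows directly from the congruence property Proposition~\ref{costar}(i). Restricting to CL then gives $\simeq^*_{\sigma I}\subseteq\simeq_{\sigma}$.

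The substantial direction is $\simeq^*_{\sigma}\subseteq\simeq^*_{\sigma I}$, which is precisely where the naive CL construction failed and which the unary combinators of CL$^*$ are designed to repair. By Lemma~\ref{altcar} it suffices to show that $\simeq^*_{\sigma}$ is itself a weak IPO bisimulation: given $M\simeq^*_{\sigma}N$ and an IPO-transition $M\itrans{f}M'$, I must produce $N\witrans{f}N'$ with $M'\simeq^*_{\sigma}N'$, by case analysis on the label $f$ from Figure~\ref{tbstar}. When $f=[\ ]$ (a non-value doing an internal reduction) the label is unobservable; I answer with $N'=N$ and close using reduction-invariance, $M'\simeq^*_{\sigma}M\simeq^*_{\sigma}N$. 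When $f=[\ ]P$ (lazy) or $[\ ]V$ (cbv), $M$ is a value, so by the convergence clause $N\downarrow_{\sigma}$, say $N\rightarrow_{\sigma}^{*}W$ with $W$ a value; these steps are $\tau$-moves, and $W$ offers the same applicative label, giving $N\witrans{f}N'$ where $N'$ is the reduct of $WP$ (resp.\ $WV$); the applicative clause yields $MP\simeq^*_{\sigma}NP$, and reduction-invariance of both sides ($M'\simeq^*_{\sigma}MP$ and $N'\simeq^*_{\sigma}NP$) closes the case. Finally, for the passive labels $PC[\ ]$ (lazy) and $RC[\ ]$, $V[\ ]$ (cbv), the term under test sits passively as a closed subterm, so the transition is \emph{uniform}: $M'=D[M]$ for a multi-holed context $D$ determined by the reduction of the surrounding redex and independent of the hole, whence $N\itrans{f}D[N]$ and $M'=D[M]\simeq^*_{\sigma}D[N]$ by congruence. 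The cbv case is handled in the same way, with the obvious value-restrictions on arguments.

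The main obstacle I anticipate lies in this backward inclusion, and specifically in the applicative case $f=[\ ]P$: one must carefully match an IPO-transition of the \emph{value} $M$ against a \emph{weak} transition of $N$ that first reduces $N$ to a value and only then performs the single applicative step, threading reduction-invariance through both the source and the target. It is exactly here that the unary discipline of CL$^*$ is essential — because every CL$^*$-term consumes at most one argument before reacting, a value performs a unique observable applicative step and the spurious distinctions by ``number of expected arguments'' that broke the plain CL case (e.g.\ $\K$ versus $\SSS(\K\K)(\SSS\K\K)$) disappear. A secondary point requiring care is the justification that the passive transitions are genuinely uniform, i.e.\ that duplication of the argument by an $\SSS$-reduction is accommodated by treating $D$ as a multi-holed context and appealing to the congruence of $\simeq^*_{\sigma}$; if the direct bisimulation bookkeeping becomes awkward, the up-to principle of Lemma~\ref{upto} can be used to absorb these congruence closures.
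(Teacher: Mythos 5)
Your proposal is correct and follows essentially the same route as the paper: both directions are proved by coinduction, with $\simeq_{\sigma I}^{*}\subseteq\simeq_{\sigma}^{*}$ obtained by showing that the weak IPO bisimilarity is a CL lazy/cbv bisimulation (using the congruence property of Proposition~\ref{costar}, exactly as in Lemma~\ref{ccont}), and the converse by showing that the applicative equivalence is a weak IPO bisimulation on CL$^*$ via a case analysis on the labels of Figure~\ref{tbstar}. The paper states this only as a two-line sketch; your write-up supplies the label-by-label details (including the treatment of passive labels via uniformity and congruence) at the same level of rigor the paper itself assumes elsewhere.
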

\begin{proof}
($\subseteq$) One can show that $\simeq_{\sigma I}^{*}{}\subseteq{} \simeq_{\sigma }^{*}$
by coinduction,  as in the proof of Lemma~\ref{ccont}, by showing that $\simeq_{\sigma I}^{*} $ is a  bisimulation on CL$^*$, also using Proposition~\ref{costar}. Then, since $ \simeq_{\sigma }^{*}$
coincides with $ \simeq_{\sigma }$ on CL-terms, we obtain the claim.
\\ 
($\supseteq$) By coinduction, showing that $\simeq_{\sigma}$ is a weak IPO bisimulation on CL$^*$.
\end{proof}

As a consequence of Theorem~\ref{eqv} and Theorem~\ref{ug} above, we can recover the lazy/cbv observational equivalence on $\lambda$-terms as  weak IPO bisimilarity on 
CL$^*$.

\begin{prop}
For all $M,N \in \Lambda^0$, \ \  
$M \approx_{\sigma} N \ \Longleftrightarrow\ \mathcal{ T}(M)\simeq^{*}_{\sigma I}   \mathcal{T}(N)$.
\end{prop}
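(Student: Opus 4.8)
The plan is to obtain the statement as a straightforward concatenation of the two main correspondence results already established, namely Theorem~\ref{eqv} and Theorem~\ref{ug}, once we bridge the gap between the open-term equivalences $\widehat{\approx}_\sigma$ and $\widehat{\simeq}_\sigma$ appearing there and the closed-term equivalences $\approx_\sigma$ and $\simeq_\sigma$ occurring in the statement. First I would record the elementary observation that, for closed terms, the hatted equivalences collapse to the unhatted ones. Indeed, if $M,N \in \Lambda^0$ then $FV(M,N) = \emptyset$, so the only closing substitution is the empty one and the definition of $\widehat{\approx}_\sigma$ gives $M \widehat{\approx}_\sigma N \Leftrightarrow M \approx_\sigma N$. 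The same reasoning yields $P \widehat{\simeq}_\sigma Q \Leftrightarrow P \simeq_\sigma Q$ for $P,Q \in CL^0$. I would also note that $\mathcal{T}$ maps closed $\lambda$-terms to closed CL-terms, since $FV(\mathcal{T}(M)) \subseteq FV(M)$, which is immediate by induction on $M$ following the clauses defining $\mathcal{T}$; hence $\mathcal{T}(M), \mathcal{T}(N) \in CL^0$ and the closed-term equivalences on CL apply to them.

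With these remarks in place, I would simply chain the equivalences. Starting from $M \approx_\sigma N$, the first observation gives $M \widehat{\approx}_\sigma N$; Theorem~\ref{eqv} turns this into $\mathcal{T}(M) \widehat{\simeq}_\sigma \mathcal{T}(N)$; since $\mathcal{T}(M), \mathcal{T}(N)$ are closed, the observation above rewrites this as $\mathcal{T}(M) \simeq_\sigma \mathcal{T}(N)$; and finally Theorem~\ref{ug}, read as $\mathcal{T}(M) \simeq_\sigma \mathcal{T}(N) \Leftrightarrow \mathcal{T}(M) \simeq_{\sigma I}^{*} \mathcal{T}(N)$, delivers the claim. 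Since every step is a biconditional, the chain establishes both directions of the equivalence at once, uniformly for $\sigma \in \{l,v\}$.

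There is essentially no hard step here: all the substantive content has been pushed into Theorems~\ref{eqv} and~\ref{ug}, and what remains is the bookkeeping of passing between open and closed terms. The only point requiring (minor) care is verifying $FV(\mathcal{T}(M)) \subseteq FV(M)$, so that $\mathcal{T}$ respects closedness and the restriction of $\widehat{\simeq}_\sigma$ to $\mathcal{T}(M), \mathcal{T}(N)$ is legitimately identified with $\simeq_\sigma$; this is a one-line induction on the structure of $M$. In particular, no new coinductive or IPO argument is needed, as those have already been carried out in the proofs of the two cited theorems.
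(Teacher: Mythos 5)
Your proof is correct and follows exactly the route the paper intends: the paper states this proposition as an immediate consequence of Theorem~\ref{eqv} and Theorem~\ref{ug}, and your chain (collapsing the hatted equivalences to the unhatted ones on closed terms, noting that $\mathcal{T}$ preserves closedness, then applying the two theorems) is precisely that derivation spelled out.
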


However, such notions of weak IPO bisimilarities still suffer of the
problem of being infinitely branching, since the IPO contexts are $[\
]$, $[\ ]P$ for the lazy case, and $[\ ]$, $[\ ]V$ for the cbv case,
for all $P,V\in (CL^*)^0$.

This problem will be solved in the next section, where we introduce
the notion of second-order context category, and we endow CL$^*$ with
such a structure.

\section{Second-order Term Contexts}
\label{sec}

The definition of term context category \cite{LM00} can be generalized
to a definition of second-order term context category. The
generalization is obtained by extending the term syntax with function
(second-order) variables, that is, variables not standing for terms but
instead for functions on terms. The formal definition is the following

\begin{defi}[Category of Second-order Term Contexts]
Let $\Sigma$ be a signature for a term language. The category of
\emph{second-order term contexts} over $\Sigma$ is defined by: objects
are finite lists of naturals $ \langle n_1, \ldots , n_k \rangle$, an
arrow $ \langle m_1, \ldots, m_h \rangle \rightarrow \langle n_1,
\ldots , n_k \rangle$ is a k-tuple $\langle t_1 , \ldots , t_k
\rangle$, where the term $t_i$ is defined over the signature $\Sigma
\cup \{ F_1^{m_1}, \ldots , F_h^{m_h} \} \cup \{ X_{i,1} \ldots,
X_{i,n_i} \}$, where $F_i^{m_i}$ is a function variable of arity
$m_i$, $X_{i,j}$ is a ground variable.  The category of
\emph{second-order linear term contexts} is the subcategory whose
arrows are n-tuples of terms, satisfying the condition that the
n-tuples have to contain exactly one use of each function variable
$F^{m_i}_i$ and ground variable $X_{i,j}$.  The category of
\emph{second-order function-linear term contexts}, $T_2^* (\Sigma)$,
is the subcategory whose arrows are n-tuples of terms, satisfying the
condition that the n-tuples have to contain exactly one use of each
function variable $F^{m_i}_i$, moreover no function variable appears
inside the argument of another function variable.
\end{defi}

\noindent {\bf Remark.} Notice that the above definition of
second-order linear term contexts is different from that given in the
conference version of the present paper, \cite{DHL08}. The
modification was necessary because the original definition was
incorrect (second-order linear contexts were not closed by
composition).
\medskip

In the following we are going to use just a subcategory of the
category of second-order function-linear term contexts, however, at
this point, we prefer to present the original idea of second-order
term contexts in its full generality.

\begin{exa}
Given the signature of natural numbers $\{0, S, +\}$, examples of second-order linear contexts representing arrows in $\langle 2, 0 \rangle \rightarrow \langle 0, 2\rangle$ are:
\[
\!\!\langle F_2^0(), F_1^2(S(X_{2,2}) + X_{2,1}) \rangle, \   
\langle F_1^2(0,0), F_2^0() + (X_{2,1} + X_{2,2}) \rangle, \ 
\langle F_1^2(0, F_2^0()) , (X_{2,1} + X_{2,2})  \rangle
\]
Note that the last context is not function-linear. Examples of second-order function-linear contexts are:
\[ 
\langle F_2^0(), F_1^2(X_{2,2}, 0) \rangle, \  
\langle F_1^2(0,0), F_2^0()  +  X_{2,2} + X_{2,2} \rangle, \  
\langle F_1^2(0,0) + F_2^0(), X_{2,2} + X_{2,2}  \rangle
\]
None of the above contexts is linear.  Examples of second-order contexts that are neither function-linear nor linear are:
\[
\langle 0, F_1^2(X_{1,2}, X_{2,2})\rangle, \  
\langle F_1^2(0, F_2^0()), X_{2,2}\rangle, \  
\langle F_1^2(0,0), (F_2^0() + X_{1,2}) + (F_2^0() + X_{2,2})  \rangle 
\]
\end{exa}

Intuitively, an arrow in $\langle 2, 0 \rangle \rightarrow \langle 0, 2\rangle$ represents a pair of contexts containing  two holes $F_1^2, F_2^0$, where $F_1^2$ is a hole that must be filled by a term representing a function with two arguments while $F_2^0$ is a hole that must be filled by a term representing  function with no arguments, i.e., a ground term. The first context in the pair $\langle 2, 0 \rangle \rightarrow \langle 0, 2\rangle$ represents a function with no arguments, while the second context represent a function with two arguments $X_{2,1}, X_{2,2}$.

One can check that the standard category of term contexts over $\Sigma$ coincides with the subcategory whose objects are the lists containing only copies of the natural number $0$; in fact this subcategory uses function variables with no arguments and the ground variables do not appear. 

The identity arrow on the object  $\langle n_1, \ldots , n_k \rangle$ is: 
\[ \langle F_1^{n_1} (X_{1,1},\ldots X_{1, n_1} ), \ldots , F_k^{n_k} (X_{k,1},\ldots X_{k, n_k}) \rangle
\]

In order to define composition in the categories of second-order term contexts, it is convenient to consider the $\lambda$-closure of the tuple of terms representing arrows and to define arrow composition through  $\beta$-reduction.

The $\lambda$-closure of  a term $t$ on the signature $\Sigma \cup \{ F_1^{m_1}, \ldots , F_h^{m_h} \} \cup \{ X_{1}, \ldots, X_{n} \}$ is $\la  F_1^{m_1} \ldots F_h^{m_h}.  \la X_{1} \ldots X_{n} . t$, which, for brevity, can also be written  as $ \la \vec{F}. \la \vec{X}.t$.  
In general, given a second-order context $\langle t_1, \ldots, t_k\rangle: \langle m_1, \ldots, m_h \rangle\rightarrow \langle n_1, \ldots , n_k \rangle$, we consider the $\lambda$-closure: $\la \vec{F}. \langle \la \vec{X}_1.t_1, \ldots ,  \la \vec{X}_{k}.t_k \rangle$. 
The composition between the morphisms: 
\[ \la \vec{F}. \langle \la \vec{X}_{1} .s_1 , \ldots , \la \vec{X}_{k}.s_k \rangle : 
\langle l_1, \ldots, l_h \rangle \rightarrow \langle m_1, \ldots , m_k \rangle
\]
\[
\la \vec{G}. \langle \la \vec{Y}_{1}. t_1 , \ldots , \la \vec{Y}_{j}.t_j \rangle : 
\langle m_1, \ldots, m_k \rangle \rightarrow \langle n_1, \ldots , n_j \rangle
\]
is the $\beta$-normal form of the $\lambda$-expression
\[
\la \vec{F} . 
( \la \vec{G}. \langle  \la \vec{Y}_{1} .t_1, \ldots ,  \la \vec{Y}_{j}. t_j \rangle) 
( \la \vec{X}_{1} .s_1, \ldots , \la \vec{X}_{k}.s_k)  : 
\langle l_1, \ldots, l_h \rangle \rightarrow \langle n_1, \ldots , n_j \rangle
\]
To give an example, the composition between 
\[
\la F . \la X_1 . F(X_1, 0) : \langle 2 \rangle \rightarrow \langle 1 \rangle 
\ \ \ \mbox{ and } \ \ \
\la G . \la Y_1 Y_2 . G(S(Y_1)) + Y_2 : \langle 1 \rangle \rightarrow \langle 2 \rangle
\] 
is given by:
\[\eqalign{
 &\la F . (\la G . \la Y_1 Y_2 . G(S(Y_1)) + Y_2)  (\la X_1 . F(X_1, 0))\cr
 \rightarrow_{\beta} & \la F . \la Y_1 Y_2 . (\la X_1 . F(X_1, 0))(S(Y_1))) + Y_2)  \cr
 \rightarrow_{\beta} & \la F . \la Y_1 Y_2 . F( (S(Y_1), 0) + Y_2)
 : \langle 2 \rangle \rightarrow \langle 2 \rangle\ . 
  }
\]

In other words, the composition is given by a $j$-tuple of expressions $t_i$ in which every function variable $G_l$ is substituted by the corresponding expression $s_l$, with the ground variables of $s_l$ substituted by the corresponding parameters of $G_l$ in $t_i$. 

Note that the identity morphism is defined as a $\lambda$-term implementing the identity function,
while composition on morphisms is defined by the function composition in the $\lambda$-setting. Given this correspondence, it is easy to prove that the categorical properties for the identity hold, while  the associativity of composition essentially follows from the unicity of the normal form. 

Finally one need to prove that composition preserve linearity  and function-linearity.  For what concerns linearity, it is a well-known result that linear $\lambda$-terms are closed by $\beta$-reduction. From this fact one can immediately prove that second-order linear contexts are closed by composition. 

Preservation of function-linearity can be proved similarly. 
First we generalize the notion of function-linearity to $\lambda$-terms stating that  a function-linear $\lambda$-term is a typed
 lambda-term with constants, where 
\begin{enumerate}[$\bullet$]
\item 
all the variables and constants have either a ground type or a first-order function type; 
\item
each bound function variable (e.g. $F$) appears exactly once in the term, and only inside the arguments of constants (e.g. $S(F(0)+ 0$), or inside the arguments of $\lambda$-expressions having a second-order function type (e.g. $(\la G \la Y. G(Y) +Y) (\la X. F(X + S(0))) $). That is, no function variable appears inside the argument of an expression that has first order function type and is not a constant (e.g. $G(S(F(0))+ 0)$ and $(\la X. X +X)(F(0)) $).   
\end{enumerate}
It is straightforward to prove that function-linear $\lambda$-terms
are closed by $\beta$-reduction and that, given two function-linear
second-order contexts, the term, whose $\beta$-normal form defines
composition, is a function-linear $\lambda$-term.  From this the claim
follows.

The main general result on second-order term contexts is the following:
\begin{prop}\label{genrpo}
For any signature $\Sigma$, in the category of second-order (linear) (function-linear) term contexts over $\Sigma$, any commuting square, having as initial vertex the empty list $\epsilon$, has an RPO.
\end{prop}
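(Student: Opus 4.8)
The plan is to mimic the explicit RPO construction for ordinary term/context categories of \cite{Sew02,LM00}, adapting it to the second-order setting. The decisive simplification is that the bottom vertex of the square is the empty list $\epsilon$: an arrow $\epsilon \rightarrow I$ is a tuple of closed terms over $\Sigma$, with no function or ground variables, so the terms $t : \epsilon \rightarrow I_2$ and $l : \epsilon \rightarrow I_3$ are \emph{ground}, and hence so is their common composite $w := ct = dl : \epsilon \rightarrow I_4$. I would therefore reason about the single ground tuple $w$, viewing $c$ (resp.\ $d$) as the result of abstracting out of $w$ the occurrences of the components of $t$ (resp.\ $l$) as the arguments supplied to the function variables of $I_2$ (resp.\ $I_3$). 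Working with the $\lambda$-closures introduced above, all arrows are $\beta$-normal (function-)linear $\lambda$-terms, so every equation between composites is an equation between normal forms and can be checked positionally on $w$.

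First I would construct a candidate $\langle I_5, e, f, g\rangle$ (in the notation of Fig.~\ref{runo}) as the ``greatest common context'' of $c$ and $d$ over $w$. Each of $c$ and $d$ marks a family of occurrences in $w$, namely the regions filled respectively by the arguments of the function variables of $I_2$ and of $I_3$. I would take $g : I_5 \rightarrow I_4$ to be the maximal context, the common upper part of the trees of $w$, through which both $c$ and $d$ factor, recording in the object $I_5$ one hole, with its arity, for each maximal shared region; the residuals $e : I_2 \rightarrow I_5$ and $f : I_3 \rightarrow I_5$ are then read off as the parts of $c$ and $d$ lying below $g$, so that $g e = c$ and $g f = d$ by construction. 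That the inner square commutes, i.e.\ $e t = f l$, holds because below $g$ there are no further holes: each hole of $I_5$ is filled, from either side, by the uniquely determined ground subterm of $w$ sitting at that position, so $e t$ and $f l$ are literally the same tuple of subterms of $w$. This is exactly where groundness of $t$ and $l$ is used.

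Next I would verify the universal property. Given any other candidate $\langle I_6, e', f', g'\rangle$ with $e' t = f' l$, $g' e' = c$ and $g' f' = d$, the context $g'$ is itself a common factor of $c$ and $d$ over $w$, hence a sub-context of the maximal one $g$; this yields $h : I_5 \rightarrow I_6$ with $g' h = g$, and one checks on positions that $h e = e'$ and $h f = f'$, using $g'(h e) = g e = c = g' e'$ together with the fact that a context is left-cancellable on the structure it genuinely uses. Uniqueness of $h$ follows from (function-)linearity: since every hole of $I_5$ is used exactly once, a morphism out of $I_5$ is determined by the image it assigns to each hole, so no freedom remains once $g' h = g$ is imposed. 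I would run this argument uniformly for the plain, linear and function-linear categories, checking at each step that $g, e, f, h$ satisfy the relevant use-exactly-once constraint (and, for the function-linear case, that no function variable is placed inside the argument of another), appealing to the closure-under-$\beta$-reduction facts established just before the statement.

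The main obstacle I anticipate is the second-order nature of the holes. In the first-order case the greatest common context is simply the anti-unifier, the common tree-prefix of the two markings of $w$, and its correctness is routine. Here a single function variable $F^{m}$ abstracts a whole region of $w$ together with the $m$ positions at which its formal arguments reappear, so defining $g$ requires deciding how to split, between $g$ and the residuals $e, f$, the tree structure lying strictly inside the argument slots of such variables, while keeping $e, f, g$ well-formed arrows that respect linearity or function-linearity. Making this split canonical, so that maximality of $g$ and hence the universal property really hold, and checking that it agrees with the $\lambda$-closure definition of composition, is where I expect the genuine work to lie; the candidate check and the uniqueness argument are then largely bookkeeping.
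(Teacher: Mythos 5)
Your opening premise is wrong, and the error propagates through the whole construction. An arrow $\epsilon \rightarrow \langle n_1,\ldots,n_k\rangle$ has no \emph{function} variables, but by the definition of the category its $i$-th component is a term over $\Sigma \cup \{X_{i,1},\ldots,X_{i,n_i}\}$, i.e.\ it may (and in the intended application, must) contain ground variables. The left-hand sides of the parametric rules, e.g.\ $\K' X_1 X_2 : \epsilon \rightarrow \langle 2\rangle$, and the open terms $XM_1 : \epsilon \rightarrow \langle 1\rangle$ of Section~\ref{clsec}, are exactly such non-ground arrows out of $\epsilon$ --- the whole point of moving to second-order contexts is to represent rules parametrically. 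So the legs $t$ and $l$ of the redex square are not ground, the composite $w = ct = dl$ is not a single closed tuple you can reason about ``positionally'', and the reduction of the problem to finding a greatest common context (an anti-unification) of two markings of a fixed ground tree does not go through.

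The ingredient your proposal is missing, and which the paper's proof supplies, is \emph{unification}. Writing $s_1 = F_1^{n_1}(s_{1,1},\ldots,s_{1,n_1})$ and $s_2 = F_2^{n_2}(s_{2,1},\ldots,s_{2,n_2})$ (after peeling off common outer constructors by structural induction, and abstracting the subterms not containing the function variables), commutation of the square says precisely that the argument tuples $\vec{s}_1,\vec{s}_2$ form a unifier of $t_1$ and $t_2$ over their ground variables $X_{1,j}, X_{2,j}$. The RPO legs are obtained by replacing these arguments with a \emph{most general unifier}; the RPO vertex $\langle m'\rangle$ records the free variables of the mgu (not ``maximal shared regions'' of a ground term); and the unique mediating morphism to any other candidate is exactly the factorization of an arbitrary unifier through the mgu. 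This is the step you flag as ``where the genuine work lies'' --- how to split the structure inside the argument slots of a function variable between $g$ and the residuals --- and it cannot be made canonical by a maximality-of-common-prefix argument: the arguments of $F_1^{n_1}$ in the minimal candidate are in general open terms produced by unification, not subterms of any ground $w$. Your uniqueness argument via linearity also needs repair, since in the function-linear (and plain) categories the ground variables are not required to occur exactly once, so a morphism is not determined by ``the image of each hole''; uniqueness again comes from the mgu property.
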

\begin{proof}
First we present the proof for the special case useful in this paper, namely we consider the restricted category containing as objects the lists with at most one element. 
Given two arrows with domain the empty list: 
\newcommand{\vuoto}[1]{}
$\vuoto{\la \vec{X}_1 .} t_1 : \epsilon \rightarrow \langle n_1 \rangle$ and 
$\vuoto{\la \vec{X}_2 .} t_2 : \epsilon \rightarrow \langle n_2 \rangle$, and two arrows 
 $\vuoto{\la F_1^{n_1} . \vec{Y} .} s_1 : \langle n_1 \rangle \rightarrow \langle m \rangle$, 
 $\vuoto{\la F_2^{n_2} . \vec{Y} .} s_2 : \langle n_2 \rangle
\rightarrow \langle m \rangle$ completing $t_1$ and $t_2$ into a commuting square ($\vuoto{\la F_1^{n_1} . \vec{Y}_1 .} s_1 \circ \vuoto{\la \vec{X}_1 .} t_1  = \vuoto{\la F_1^{n_1} . \vec{Y}_1 .} s_1 \circ \vuoto{\la \vec{X}_1 .} t_1   : \epsilon \rightarrow \langle m \rangle$),
the corresponding RPO for this commuting square is inductively defined on the structures of $s_1$, $s_2$. 
There are several cases to consider:
 
\begin{enumerate}[(i)]
\item $s_1 = c_1(s_{1,1}, \ldots , s_{1,k_1})$ and $s_2 = c_2(s_{2,1},
\ldots , s_{2,k_2})$, with $c_1, c_2$ function symbols in the
signature $\Sigma$. Necessarily $c_1 = c_2$ (and $k_1 = k_2$). We have
to consider in which subterms of $s_1$ and $s_2$ the function
variables, $F_1^{n_1}$ and $F_2^{n_2}$, appear.  If $F_1^{n_1}$ and
$F_2^{n_2}$ appear in corresponding subterms, that is, there is an $i$
such that all $F_1^{n_1}$ appears in $s_{1,i}$ and all $F_2^{n_2}$ in
$s_{2,i}$, then we have that $\vuoto{\la F_1^{n_1} . \vec{Y} .}
s_{1,i}$ and $\vuoto{\la F_2^{n_1} . \vec{Y} .} s_{2,i}$, together
with $t_1, t_2$, form a commuting square, and the RPO, inductively
defined, for this second commuting square, immediately induces the RPO
for $s_1$ and $s_2$.  The subcase where $F_1^{n_1}$ and $F_2^{n_2}$ do
not appear in corresponding subterms is treated at point (iii).

\item $s_1 = F_1^{n_1}(s_{1,1}, \ldots , s_{1,n_1})$ and $s_2 =
F_2^{n_2}(s_{2,1}, \ldots , s_{2,n_2})$, and, for the general case,
$F_1^{n_1}$, $F_2^{n_2}$ not appearing in the subterms $s_{h,i}$.  In
this case, we have that 
\[t_1[s_{1_1}/X_{1,1}, \ldots,
s_{1,n_1}/X_{1,n_1}] = t_2[s_{2_1}/X_{2,1}, \ldots,
s_{2,n_2}/X_{2,n_2}]\ ,
\]
  that is, there is a unifier i.e., a substitution making $t_1$ and
  $t_2$ equal. Consider the most general unifier (mgu) for $t_1$ and
  $t_2$, this is given by tuples of terms, $s'_{1,1}, \ldots ,
  s'_{1,n_1}$ and $s'_{2,1}, \ldots , s'_{2,n_2}$, such that
  $t_1[s'_{1_1}/X_{1,1}, \ldots, s'_{1,n_1}/X_{1,n_1}] =
  t_2[s'_{2_1}/X_{2,1}, \ldots, s'_{2,n_2}/X_{2,n_2}]$.

  $\vuoto{\la F_1^{n_1} . \vec{Y'} .}  F_1^{n_1}(s'_{1,1}, \ldots ,
  s'_{1,n_1})\!:\!\langle n_1 \rangle \rightarrow \langle m' \rangle$
  and $\vuoto{\la F_2^{n_2} . \vec{Y'} .} F_2^{n_2}(s'_{2,1}, \ldots ,
  s'_{2,n_2})\!:\!\langle n_2 \rangle \rightarrow \langle m' \rangle$
  complete $t_1$ and $t_2$ into a commuting square that is also an
  RPO, in fact any other pair of arrows completing $t_1$ and $t_2$
  into a commuting square and factorizing the original one needs to be
  of the form $\vuoto{\la F_1^{n_1} . \vec{Y''} .}F_1^{n_1}(s''_{1,1},
  \ldots , s''_{1,n_1}): \langle n_1 \rangle \rightarrow \langle m''
  \rangle$ and $\vuoto{\la F_2^{n_2} . \vec{Y''} .}
  F_2^{n_2}(s''_{2,1}, \ldots , s''_{2,n_2}): \langle n_1 \rangle
  \rightarrow \langle m'' \rangle$, with the two sequences $\langle
  s''_{1,i} \rangle$ and $\langle s''_{2,i} \rangle$ defining a
  unifier for $t_1, t_2$.  The unique arrow factorizing the two
  commuting squares is $\vuoto{\la F^{m'} . \vec{Y'} .}
  F^{m'}(s'''_{1}, \ldots s'''_{m'})$, where $\langle s'''_i \rangle$
  is given by the mgu property.
\[
\xymatrix@C+10 pt{ 
& &  \langle m'' \rangle & &
\\
& & & &
\\
 \langle n_1 \rangle 
  \ar[rr]^{\vuoto{\la F_1^{n_1} . \vec{Y'} .}F_1^{n_1}(s'_{1,1}, \ldots , s'_{1,n_1})}
  \ar@/^1em/[rruu]^{\vuoto{\la F_1^{n_1} . \vec{Y''} .}F_1^{n_1}(s''_{1,1}, \ldots , s''_{1,n_1})\quad}  
& & \langle m' \rangle 
   \ar[uu]|{\vuoto{\la F^{m'} . \vec{Y'} .} F^{m'}(s'''_{1}, \ldots s'''_{m'})} 
& & \langle n_2 \rangle 
    \ar[ll]_{\vuoto{\la F_2^{n_2} . \vec{Y'} .}F_2^{n_2}(s'_{2,1}, \ldots , s'_{2,n_2})} 
    \ar@/_1em/[lluu]_{\vuoto{\la F_2^{n_2} . \vec{Y''}  .}\quad F_2^{n_2}(s''_{2,1}, \ldots , s''_{2,n_2})}
  \\
& & \epsilon 
    \ar[llu]^{\vuoto{\la \vec{X}_1 .} t_1}   
    \ar[rru]_{\vuoto{\la \vec{X}_2 .} t_2 } 
& &
}
\]

\item In this point we consider all the remaining cases, that is,
where: $s_1 = c_1(s_{1,1}, \ldots , s_{1,k_1})$, $s_2 = c_2(s_{2,1},
\ldots , s_{2,k_2})$ and either $F_1^{n_1}$ and $F_2^{n_2}$ do not
appear in corresponding subterms, or $c_1 = F_1^{n_1}$ or $c_2 =
F_2^{n_2}$.  Let us consider the term $s'_1$ obtained from $s_1$ by
substituting any maximal subterm $s_o$ not containing $F_1^{n_1}$ by a
ground variable $X_{s_o}$. 

  For example, if $s_1 = c_1(s_{1,1},
c_2(s_{1,2,1}, F_1^{n_1}(s_{1,2,2,1}, s_{1,2,2,2}), s_{1,2,3}))$ then
$s'_1$ is the term $c_1(X_{s_{1,1}}, c_2(X_{s_{1,2,1}},
F_1^{n_1}(X_{s_{1,2,2,1}}, X_{s_{1,2,2,2}}), X_{s_{1,2,3}}))$, and
analogously for the term $s_2$.  Let $s''_1 = s'_1 \circ \vuoto{\la
\vec{X}_1 .} t_1$, and $s''_1 = s'_2 \circ \vuoto{\la \vec{X}_2 .}
t_2$.  Now we have that: $s''_1
[s_{1,\vec{l_1}}/X_{s_{1,\vec{l_{1}}}}, \ldots ,
s_{1,\vec{l_{m_1}}}/X_{s_{1,\vec{l_{m_1}}}}] = s''_2
[s_{2,\vec{j_1}}/X_{s_{1,\vec{j_{1}}}}, \ldots ,
s_{1,\vec{j_{m_2}}}/X_{s_{1,\vec{j_{m_2}}}}] $ that is, there exists a
unifier for $s''_1$ and $s''_2$, we can consider the most general
unifier, given by a pair tuples of terms $s'_{1,\vec{l_1}}, \ldots ,
s_{1,\vec{l_{m_1}}}$ and $s_{2,\vec{j_1}}, \ldots ,
s_{1,\vec{j_{m_2}}}$.  By repeating the arguments used at point (ii),
we have that $\vuoto{\la F_1^{n_1} . \vec{Y'} .}
s'_1[s'_{1,\vec{l_1}}/X_{s_{1,\vec{l_{1}}}}, \ldots ,
s'_{1,\vec{l_{m_1}}}/X_{s_{1,\vec{l_{m_1}}}}]$ and $\vuoto{\la
F_1^{n_1} . \vec{Y'} .} s'_2[s'_{2,\vec{j_1}}/X_{s_{1,\vec{j_{1}}}},
\ldots , s'_{1,\vec{j_{m_2}}}/X_{s_{1,\vec{j_{m_2}}}}]$ form an RPO.
\end{enumerate}
The proof for the general case is now almost immediate. The RPO for
the square
\[\xymatrix@R-15 pt@C-25 pt{ 
&&\langle m_{1}, \ldots m_{k} \rangle \cr\cr
  \langle n_{1,1}, \ldots n_{1,j_1} \rangle  
  \ar[rruu]^{\langle s_{1,1}, \ldots , s_{1,k} \rangle\quad}   &&&&  
  \langle n_{2,1}, \ldots n_{2,j_2} \rangle  
  \ar[lluu]_{\quad\langle s_{2,1}, \ldots , s_{2,k} \rangle}\cr\cr
&&\epsilon
   \ar[lluu]^{\langle t_{1,1}, \ldots , t_{1,j_{1}} \rangle\quad}   
   \ar[rruu]_{\quad\langle t_{2,1}, \ldots , t_{2,j_{2}} \rangle} 
\cr 
} 
\]
can be obtained by combining the RPO's for the $k$ diagrams
\[\vcenter{\xymatrix@R-15 pt@C-25 pt{ 
&&m_i \cr\cr
  \langle n_{1,1}, \ldots n_{1,j_1} \rangle  
  \ar[rruu]^{s_{1,i}}   &&&&  
  \langle n_{2,1}, \ldots n_{2,j_2} \rangle  
  \ar[lluu]_{s_{2,i}}\cr\cr
&&\epsilon
   \ar[lluu]^{\langle t_{1,1}, \ldots , t_{1,j_{1}} \rangle\quad}   
   \ar[rruu]_{\quad\langle t_{2,1}, \ldots , t_{2,j_{2}} \rangle} 
\cr 
}}\qquad\hbox{for $1 \leq i \leq k$}
\]
into a sequence. In turn, the RPO for these
diagrams can be obtained by essentially repeating the construction
presented for the unary case.  Finally, it is immediate to prove that
the presented construction preserve linearity and function-linearity
of arrows.
\end{proof}

The above proposition holds also for the case of linear second-order contexts and the prove remains almost the same.

\subsection{\texorpdfstring{CL$^*$}{CL*} as  Second-order Rewriting System}\label{clsec}
In this section, we consider the second-order context category for the combinatory
calculus  CL$^*$ and we show that the weak IPO lazy bisimilarity thus obtained coincides with the lazy
observational equivalence on $\lambda$-calculus, while for the cbv case we get a finer equivalence. 
Interestingly, the second-order open bisimilarity gives a uniform 
characterization also on open terms.
 
Note that the terms of CL are defined by the signature $\Sigma_{CL} = \{K, S, \app\}$, where $\app$ is the binary operation of application that is usually omitted.  So the term $\SSS \K \K$ actually stands for $\app ( \app (\SSS, \K), \K)$.

First we deal with the lazy case, then we will sketch also  the cbv case.

\subsubsection{The Lazy Second-order Reactive System}

\begin{defi}[Lazy  Second-order Reactive System on CL$^*$]
The \emph{lazy second-order reactive system} ${\mathbf C}_{l}^{2*}$ consists of:
\begin{enumerate}[$\bullet$]
\item the  function-linear category  whose objects are the lists with at most one element, 
and whose arrows $\epsilon \rightarrow \langle n \rangle$ are the terms of CL$^*$ with, at most,
$n$ (first order) metavariables,
\[ M^n \ ::=\   X_1\ | \ldots | X_n \  | \  \K\  |\  \SSS\  |\  \K' M^n \ |\   \SSS' M^n \ |\  \SSS'' M^n M^n \ |    \ M^n M^n \]
 and  whose arrows $\langle m \rangle  \rightarrow \langle n \rangle$ are
the second-order  contexts defined by:
\[ \CC^{m,n} \ ::= \ F(M^n_1, \ldots , M^n_m) \ | \ M^n\CC^{m,n} \ |\ \CC^{m,n} M^n \]
\item the reactive contexts are all the second-order applicative contexts of the shape
\[F(M^n_1, \ldots, M^n_m) N^n_1 \ldots N^n_k\ ;\]
\item the reaction rules  are
\[\K X_1  \rightarrow \K' X_1\quad\K' X_1 X_2 \rightarrow X_1\]
\[\SSS X_1  \rightarrow  \SSS' X_1\quad\SSS' X_1 X_2 \rightarrow  
  \SSS'' X_1 X_2\quad\SSS'' X_1 X_2 X_3 \rightarrow (X_1 X_3 )(X_2 X_3)
\]
where $\K X_1,  \SSS X_1 : \epsilon \rightarrow \langle 1 \rangle$, $\K' X_1 X_2,  \SSS' X_1 X_2 : \epsilon \rightarrow \langle 2 \rangle$ and $\SSS'' X_1 X_2 X_3 : \epsilon \rightarrow \langle 3 \rangle$.
\end{enumerate}
\end{defi}

\noindent
Second-order contexts as defined above can be represented by $C[F(M_1, \ldots , M_m)]$, where $C[\ ]$ is a
 unary  first-order context on CL$^*$ (with metavariables).
To maintain the notation for contexts used in Sections 4, 5,  in the sequel a second-order context 
$C[F(M_1, \ldots , M_m) ]:
\langle m\rangle \rightarrow \langle n \rangle$ will be more conveniently written
as  $  C[\ ]_\theta   $,
where $\theta$ is a substitution s.t.\ $\theta(X_i) = M_i$ for all $i=1, \ldots , m$, moreover we write $M \stackrel{ C[\ ]_\theta }{\rightarrow} M'$ \iff\ $C[M\theta]   \rightarrow M'$.  Given Proposition~\ref{genrpo}, and the underlined RPOs construction, we have:

\begin{cor}
The reactive system ${\mathbf C}_{l}^{2*}$ has redex RPOs.
\end{cor}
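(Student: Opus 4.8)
The plan is to derive the statement directly from Proposition~\ref{genrpo}, the only real work being to check that the hypotheses of that proposition cover every redex square of $\mathbf{C}_{l}^{2*}$ and that the RPO it produces stays inside this reactive system. Recall that, by the definition of redex square, $\mathbf{C}_{l}^{2*}$ has redex RPOs exactly when every commuting square with $ct = dl$, where $\langle l,r\rangle$ is a reaction rule, $t\colon 0\to I_2$ is a state, $c\colon I_2\to I_4$ a context and $d\colon I_3\to I_4$ a reactive context, admits an RPO in the underlying category. The first thing to observe is that such a square always has the initial object $0$ as its bottom vertex: both the state $t$ and the left-hand side $l$ of a reaction rule are arrows with domain $0$, which in $\mathbf{C}_{l}^{2*}$ is the empty list $\epsilon$.

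Next I would identify the underlying category of $\mathbf{C}_{l}^{2*}$ with the restriction, to objects that are lists of length at most one, of the category of second-order function-linear term contexts over the signature $\Sigma_{CL^*}=\{\K,\K',\SSS,\SSS',\SSS'',\app\}$. Indeed the left-hand sides $\K X_1$, $\K' X_1 X_2$, $\SSS X_1$, $\SSS' X_1 X_2$, $\SSS'' X_1 X_2 X_3$ are arrows $\epsilon\to\langle n\rangle$ with $n\le 3$, the states are arrows $\epsilon\to\langle n\rangle$, and the contexts $c,d$ are function-linear second-order contexts of the prescribed shape; every object appearing is a list of length $0$ or $1$. Consequently each redex square of $\mathbf{C}_{l}^{2*}$ is exactly one of the commuting squares with initial vertex $\epsilon$ treated by Proposition~\ref{genrpo}, and the proposition supplies an RPO. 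Note that, for the notion of redex RPO, it is irrelevant whether the arrows of the RPO are themselves reactive; only existence of the RPO in the ambient category is required.

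The main obstacle, and the point deserving genuine care, is to verify that the RPO delivered by Proposition~\ref{genrpo} actually lives in the subcategory $\mathbf{C}_{l}^{2*}$ rather than merely in the ambient category of all second-order function-linear contexts. This is guaranteed by the explicit construction underlying the proof of that proposition, which was carried out precisely for the restricted category whose objects are lists of length at most one: its mediating object is a single-element list $\langle m'\rangle$, and its mediating and candidate arrows are assembled from the constructors of $\Sigma_{CL^*}$ together with most general unifiers of the terms under consideration. Since unification over $\Sigma_{CL^*}$ returns terms again within the CL$^*$ grammar, and since the construction preserves function-linearity, all the morphisms produced are legitimate arrows of $\mathbf{C}_{l}^{2*}$. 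The general, multi-object combination step of Proposition~\ref{genrpo} is not even needed here, so the corollary reduces to a direct instance of the unary RPO construction; crucially, because the reaction rules are now represented parametrically, the RPO is obtained once and for all rather than from an infinite family of ground redex squares.
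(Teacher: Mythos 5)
Your proposal is correct and follows essentially the same route as the paper, which states the corollary as an immediate consequence of Proposition~\ref{genrpo} (every redex square has initial vertex $\epsilon$, and the category of ${\mathbf C}_{l}^{2*}$ is the length-at-most-one restriction of the second-order function-linear context category over the CL$^*$ signature, for which the unary case of that proposition's construction applies). Your additional check that the RPO data produced by the construction (mgu's and constructor applications) remain within the CL$^*$ grammar is a detail the paper glosses over entirely, and is a welcome addition rather than a deviation.
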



 
 
 \noindent \emph{Example}: Let $M=XM_1$. Some of the IPO reductions of $M$ 
 are the following:\\
$XM_1 \stackrel{  [\ ]_{\{\K/X\}}  }{\longrightarrow} \K' M_1$;\ \ \
$XM_1 \stackrel{  [\ ]_{ \{\K'Y/X\}} }{\longrightarrow} Y$; \ \ \
$XM_1 \stackrel{  [\ ] _{\{\K'/X \}}Y  }{\longrightarrow} M_1$; \ \ \
$XM_1 \stackrel{  [\ ] _{\{ \SSS/X \}}  }{\longrightarrow} \SSS' M_1$;\ \ \
$XM_1 \stackrel{  [\ ] _{\{ \SSS'Y/X\} }  }{\longrightarrow} \SSS'' YM_1$;\ \ \
$XM_1 \stackrel{  [\ ] _{\{ \SSS'/X\} } Y}{\longrightarrow} \SSS'' M_1Y$;\ \ \
$XM_1 \stackrel{  [\ ]_ {\{ \SSS''YZ/X\}}   }{\longrightarrow} (YM_1) (Z M_1)$; \\
$XM_1 \stackrel{  [\ ]_ {\{ \SSS''Y/X\}}  Z }{\longrightarrow} (YZ)(M_1 Z)$; \ \ \
$XM_1 \stackrel{  [\ ]_ {\{ \SSS''/X \} } YZ }{\longrightarrow} (M_1 Z)(Y Z)$; \ \ \
$XM_1 \stackrel{  [\ ]_{\{\K Y/X\}}  }{\longrightarrow} \K' Y M_1$;\ \ \ 
$XM_1 \stackrel{  [\ ]_{\{{\K} Y_1 Y_2/X\}}  }{\longrightarrow} \K' Y_1 Y_2 M_1$.
\\ Notice that $[\ ]_{ \{  {\K} Y_1 \ldots Y_n/X \} }$ is an IPO context for any $n$.
\vspace*{0.4cm}

In general, the IPO contexts are summarized in Figure~\ref{IPOdue}.

Using Proposition~\ref{list-weak-congruence}, we can prove that the weak IPO bisimilarity
$\simeq^{2*}_{l I}$ is a congruence, and it has a simpler characterization in terms of applicative
contexts. Namely, 
we can consider as list extension category the category of all function-linear term contexts.
In the alternative notation, a second-order linear term contexts can be written as $C[\__{\theta_1}, \ldots, \__{\theta_n}]$, where $C[\__1, \ldots , \__n]$ is a first-order multi-holed context and $\theta_1, \ldots , \theta_n$
are n substitutions, each one acting on the term put in the corresponding hole.
By repeating the arguments for the first-order case, one can show that any second-order linear term context either is IPO uniform or it has a
reactive index. Then, by Proposition~\ref{list-weak-congruence}, we have:

\smallskip

\begin{figure}
\begin{center}
\begin{tabular}{| l |l | l|}
\hline
term $M$ & IPO contexts  & reactive IPO contexts\\
\hline\hline
 $X$  &  $  [\ ]_{ \{ {\A} Y/X\}}$, $ [\ ]_{\{  {\A} /X\}} Y $,  
 ${\A} \vec{Y} C_1[\ ]_{\emptyset}$ &  $  [\ ]_{ \{ {\A} Y/X\}}$, $ [\ ]_{\{  {\A} /X\}} Y $
 \\
\hline
 $XP_0\vec{P}$ & $[\ ]_{ \{  {\A} \vec{Y} /X\}} $, 
 ${\A} \vec{Y} C_1 [\ ]_{\emptyset}$ & $[\ ]_{ \{  {\A} \vec{Y} /X\}} $
   \\
\hline
${\bf C} \vec{P}$ , $M$ value &  $[\ ]_{ \emptyset} X  $,   ${\A} \vec{Y} C_1 [\ ]_{\emptyset}$ &  $[\ ]_{ \emptyset} X  $
\\
\hline
${\bf C} \vec{P}$, $M$ not value & $[\ ]_{\emptyset}  $, ${\A} \vec{Y} C_1 [\ ]_{\emptyset}$ & $[\ ]_{\emptyset}  $ \\
\hline
\end{tabular} \bigskip
\\
where
\\
${\A} \in \{ \K, \SSS, \K'Z_1, \SSS'Z_1,  \SSS''Z_1 Z_2 \ |\ Z_1, Z_2 \mbox{ fresh} \}
$
\\
 ${\bf C} \in \{ \K, \SSS,  \K', \SSS', \SSS''\}$  
\\
$C_1[\ ]$ ranges over $C[\ ] \ ::= \ [\ ] \ |\ C[\ ]Z \ |\ ZC[\ ]$

\end{center}
\caption{Second-order IPO contexts for the lazy CL$^*$.}
\label{IPOdue}
\end{figure}

 
 \begin{prop}\label{duecon} 
\hfill \begin{enumerate}[\em(i)]
\item
   For all terms of $CL^*$ $M, N$, for any substitution $\theta$ and
  for any (possibly open) first-order context $C[ \ ]$, \[ M
  \simeq^{2*}_{l I} N \ \Longrightarrow \ C[M\theta] \simeq^{2*}_{l I}
  C[N\theta] \ .\]
\item
  $\simeq^{2*}_{l I}{}={}\simeq^{2*}_{l R}$, where $\simeq^{2*}_{l 
  R}$ denotes the weak IPO bisimilarity, where only reactive IPO
  contexts are considered (see Figure~\ref{IPOdue}).
\end{enumerate}
 \end{prop}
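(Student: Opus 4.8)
The plan is to derive both statements as instances of Proposition~\ref{list-weak-congruence}, exactly as announced in the paragraph preceding the statement. Two ingredients must be in place before that proposition applies. First, the reactive system ${\mathbf C}_l^{2*}$ must have redex RPOs; this is precisely the content of the corollary established above from Proposition~\ref{genrpo}. Second, I must exhibit a list extension $\mathcal{D}$ of the unary context category $\mathcal{C}$ in which every multi-holed context is either IPO uniform or has a reactive index. For $\mathcal{D}$ I take the category of all function-linear term contexts on CL$^*$, whose arrows, in the alternative notation, have the shape $C[\__{\theta_1}, \ldots, \__{\theta_n}]$ with $C[\__1, \ldots, \__n]$ a first-order multi-holed context and each $\theta_i$ a substitution acting on the term placed in the $i$-th hole. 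Identifying $0$ with the empty list, each object $\langle n\rangle$ with a singleton, and taking $\otimes$ to be juxtaposition of hole-lists, this is genuinely a list extension of $\mathcal{C}$, and its unary fragment is exactly the collection of contexts $C[\ ]_\theta$ appearing in statement (i).

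The heart of the argument is the dichotomy: every function-linear context $C[\__{\theta_1}, \ldots, \__{\theta_n}]$ is IPO uniform or has a reactive index. I would establish this by the same case analysis used in the first-order treatment of Section~\ref{first}, reading off the possibilities from the IPO table in Figure~\ref{IPOdue}. The lazy strategy is leftmost-outermost, so for $C[t_1\theta_1, \ldots, t_n\theta_n]$ there are exactly two possibilities for where the head redex sits. If the context has the applicative shape $[\ ]_\theta\,N_1\cdots N_k$, with the leftmost hole in head position, then plugging any closed term and letting $\theta$ act yields an applicative, hence reactive, context; thus this hole is a reactive index. In every other case the head redex is pinned down by the rigid part of $C$ --- either its top combinator $\mathbf{C}\in\{\K,\SSS,\K',\SSS',\SSS''\}$ already forms a redex, or the holes occupy passive argument positions (or sit inside closed subterms) that the leftmost reduction does not touch --- so the resulting IPO transition is described by applying a fixed function-linear context to the very same argument terms, independently of which terms they are. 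This is exactly IPO uniformity.

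Once the dichotomy is in hand, the conclusions are immediate. By Proposition~\ref{list-weak-congruence}(i), $\simeq^{2*}_{l I}$ is a congruence with respect to all contexts of $\mathcal{D}$; restricting to the unary fragment $C[\ ]_\theta$ gives statement (i). For statement (ii) I note that the CL$^*$ lazy reduction relation is deterministic --- at each step there is a unique leftmost-outermost redex --- so Proposition~\ref{list-weak-congruence}(ii) applies and yields $\simeq^{2*}_{l I}=\simeq^{2*}_{l R}$.

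I expect the only delicate point to be the uniformity half of the dichotomy, specifically the bookkeeping of substitutions and hole-duplication. When the head redex fires inside $C$ one must check that its contractum can again be written as a single function-linear context applied to the same terms $t_i$, possibly with holes reordered or duplicated --- the functions $l_i$ in the definition of IPO uniform are precisely what accommodate this. The rule $\SSS'' X_1 X_2 X_3 \rightarrow (X_1 X_3)(X_2 X_3)$ is the critical instance, since it duplicates its third argument; one must verify that the corresponding argument term is copied through the $l_i$ rather than shared, which is where function-linearity (no function variable inside the argument of another function variable) keeps the construction well-defined.
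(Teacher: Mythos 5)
Your proposal follows exactly the route the paper takes: choose the category of function-linear (multi-holed) term contexts as the list extension, argue by the same case analysis as in the first-order treatment that every such context is either IPO uniform or has a reactive index, and then invoke Proposition~\ref{list-weak-congruence}(i) for congruence and (ii), via determinism of the lazy reduction, for $\simeq^{2*}_{l I}=\simeq^{2*}_{l R}$. The paper leaves the dichotomy at the level of ``repeating the arguments for the first-order case,'' so your additional attention to the duplication performed by the $\SSS''$ rule is a welcome elaboration rather than a deviation.
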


\noindent
By Proposition~\ref{duecon}(ii) above, the notion of IPO bisimilarity turns out to be much simpler, but it
 is  still infinitely branching  (when the term is of the shape $ X P_0 \vec{P}$ we have infinitely many IPO
 contexts $[\ ]_{\{ {\A} \vec{Y}/X\}}$). 
 However, one can prove that also the contexts $[\ ]_{\{ {\A} \vec{Y}/X\}}$, for any $|\vec{Y}|\geq 1$
 can be eliminated. This requires an ``ad-hoc'' reasoning:
 
 \begin{prop}\label{finlazy}
The lazy weak IPO bisimilarity $\simeq^{2*}_{l I}$  has a finitely branching characterization in terms of
the second-order IPO contexts of Figure~\ref{dappd}.
 \end{prop}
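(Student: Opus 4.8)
The plan is to prove the two inclusions between $\simeq^{2*}_{l I}$ and the bisimilarity $\simeq^{2*}_{l F}$ induced by the finitely branching lts of Figure~\ref{dappd}. One direction is immediate: since Figure~\ref{dappd} retains a \emph{subset} of the reactive IPO labels of Figure~\ref{IPOdue}, pruning transitions can only coarsen the induced weak bisimilarity, so $\simeq^{2*}_{l I}=\simeq^{2*}_{l R}\subseteq\simeq^{2*}_{l F}$ by Proposition~\ref{duecon}(ii). Hence the entire content lies in the reverse inclusion $\simeq^{2*}_{l F}\subseteq\simeq^{2*}_{l I}$, that is, in showing that the discarded labels — exactly the contexts $[\ ]_{\{\A\vec{Y}/X\}}$ with $|\vec{Y}|\ge 1$ on a term of shape $X\vec{P}$ — are \emph{redundant}, so that two terms agreeing on the finite label set cannot be separated by them.

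The device I would exploit is a reconstruction identity turning each discarded transition into a retained one on a slightly larger term. For $M=XP_0\ldots P_n$ and fresh $\vec{Y}=Y_1\ldots Y_k$, set $M^{\vec{Y}}:=XY_1\ldots Y_kP_0\ldots P_n = M\{X\vec{Y}/X\}$, obtained by substituting the head variable $X$ with $X$ applied to the fresh arguments $\vec{Y}$. Since both sides reduce the \emph{same} redex $\A Y_1\ldots Y_kP_0\ldots P_n$, one has $M\itrans{[\ ]_{\{\A\vec{Y}/X\}}}M'$ \iff\ $M^{\vec{Y}}\itrans{[\ ]_{\{\A/X\}}}M'$, where $[\ ]_{\{\A/X\}}$ is a \emph{retained} reactive label of Figure~\ref{dappd} (its IPO status on both sides is read off Figure~\ref{IPOdue}, the extra arguments $\vec{Y}$ being absorbed into the applicative reactive context). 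Thus, were $\simeq^{2*}_{l F}$ already closed under the head-extension substitution $\sigma=\{X\vec{Y}/X\}$, matching a discarded transition of $M$ would reduce to matching the retained transition of $M^{\vec{Y}}$: from $M^{\vec{Y}}\simeq^{2*}_{l F}N^{\vec{Y}}$ we would get $N^{\vec{Y}}\witrans{[\ ]_{\{\A/X\}}}N'$ with $M'\simeq^{2*}_{l F}N'$, and reading the identity backwards on $N$ yields precisely $N\itrans{[\ ]_{\{\A\vec{Y}/X\}}}N'$. (When $\vec{P}=\emptyset$ the discarded labels factor through a plain application, $[\ ]_{\{\A\vec{Y}/X\}}=[\ ]_{\{\A/X\}}\vec{Y}$, and are dealt with in the same way.)

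Everything therefore hinges on the closure of $\simeq^{2*}_{l F}$ under $\sigma$, and this is the real obstacle, because of an apparent circularity: closure under contexts is what we are ultimately after, yet we seem to need it in order to match the very transitions separating $\simeq^{2*}_{l F}$ from $\simeq^{2*}_{l I}$. I would break the circle exactly as in Proposition~\ref{list-weak-congruence}, never assuming the congruence but establishing it simultaneously. Concretely, I would take the context–substitution closure $S=\{\langle C[M\theta],C[N\theta]\rangle\mid M\simeq^{2*}_{l F}N\}$ and show, through the up-to principle of Lemma~\ref{upto}, that $S$ is a weak IPO bisimulation for the \emph{full} reactive IPO lts of Figure~\ref{IPOdue}; since $\simeq^{2*}_{l F}\subseteq S$, this gives $\simeq^{2*}_{l F}\subseteq\simeq^{2*}_{l I}$. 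The verification that $S$ is a bisimulation runs the same case analysis as Proposition~\ref{list-weak-congruence} — every second-order linear context is either IPO uniform or has a reactive index — the genuinely new case being a \emph{discarded} label fired at a base pair $M\simeq^{2*}_{l F}N$. There the reconstruction identity converts it into a retained transition of $M^{\vec{Y}}$, and $M^{\vec{Y}},N^{\vec{Y}}\in S$ for free, because $S$ is closed under $\sigma$ by construction; since $[\ ]_{\{\A/X\}}$ is a \emph{retained} label, this subgoal is an instance of the IPO-uniform/reactive-index machinery already being proved for the pair $(M^{\vec{Y}},N^{\vec{Y}})\in S$.

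The delicate point I expect to fight with is the well-foundedness of this reduction: passing from $M$ to $M^{\vec{Y}}$ \emph{increases} the number of arguments, so the step must be discharged coinductively rather than by induction on term size. I would organise the argument so that the outer induction of Lemma~\ref{upto} is on the number of reaction steps of $\witrans{f}$, while the reconstruction step is charged to the IPO-uniform/reactive-index case analysis on the \emph{context} of the $S$-pair $(M^{\vec{Y}},N^{\vec{Y}})$, which strictly simplifies the context and ultimately grounds in a finite-label match of the base relation $\simeq^{2*}_{l F}$. The only remaining routine checks are that the reconstruction identity yields the same reduct $M'$ for each head prefix $\A\in\{\K,\SSS,\K'Z_1,\SSS'Z_1,\SSS''Z_1Z_2\}$ — in particular that the fresh $\vec{Y}$ are consumed exactly where the $\K$- and $\SSS$-rules consume them, including the duplication effected by $\SSS''$ — and that, as recorded in Figure~\ref{IPOdue}, both sides of the identity are genuine IPOs; determinism of the reduction (used already in Proposition~\ref{list-weak-congruence}) guarantees that the weak match $N^{\vec{Y}}\witrans{[\ ]_{\{\A/X\}}}N'$ lands, up to $\simeq^{2*}_{l F}$, on the reduct witnessing $N\itrans{[\ ]_{\{\A\vec{Y}/X\}}}N'$.
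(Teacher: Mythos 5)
Your first inclusion and your reconstruction identity are both sound: indeed $M\{\A\vec{Y}/X\}=M\{X\vec{Y}/X\}\{\A/X\}$, so the discarded transition $M\itrans{[\ ]_{\{\A\vec{Y}/X\}}}M'$ and the retained transition $M^{\vec{Y}}\itrans{[\ ]_{\{\A/X\}}}M'$ produce the same reduct. The gap is in how you try to exploit this. First, your candidate relation $S=\{\langle C[M\theta],C[N\theta]\rangle\mid M\simeq^{2*}_{lF}N\}$ is closed under contexts and substitutions but \emph{not under conversion}, and the reducts of a discarded transition fall outside it: for $M=XP$ and $\A=\K$, $\vec{Y}=Y$, the reduct is $M'=\K'YP$, which is a one-step reduct of $M\theta=\K YP$ but is not of the form $C[M_1\theta]$ with the same decomposition available on the $N$-side (the matching reduct is $\K'YQ_0\vec{Q}$ for whatever head-variable form $N$ reduces to, and $P\simeq^{2*}_{lF}Q_0\vec{Q}$ is not available). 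Lemma~\ref{upto} only licenses ``up to transitive closure of $S$'', not ``up to reduction'', so this cannot be absorbed. Second, the discharge of the discarded label via the pair $(M^{\vec{Y}},N^{\vec{Y}})\in S$ is circular in a way your well-foundedness remark does not repair: the context being peeled off is the pure substitution context $[\ ]_{\{X\vec{Y}/X\}}$, and the IPO-pasting decomposition of the retained label $[\ ]_{\{\A/X\}}$ over it regenerates \emph{exactly} the discarded label $[\ ]_{\{\A\vec{Y}/X\}}$ at the base pair $(M,N)$ --- the context does not ``strictly simplify'', and there is no decreasing measure grounding the argument in a finite-label match.

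The paper's proof closes the candidate relation under KS-convertibility as well as under substitution: it shows that $\mathcal{R}=\{(M',N')\mid\exists\theta.\ M'\frown M\theta\ \wedge\ N'\frown N\theta\ \wedge\ M\simeq^{2*}_{lF}N\}$, where $\frown$ is KS-convertibility, is a weak IPO bisimulation. This is the ingredient you are missing: with conversion in the closure, the reducts of a discarded transition are \emph{automatically} related to the corresponding $\theta$-instances (they are one reduction step away from them), so nothing needs to be ``matched'' at the level of the finite-label game for these transitions; what remains is to show that $N$ weakly reaches a head-variable form $X\vec{Q}$ whenever $M$ has one, and that follows from $N$ matching the \emph{retained} labels of $M$. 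I would suggest either adopting that relation, or explicitly adding closure under $\frown$ (justified by determinism and hence confluence of the reduction, which makes convertible terms weakly bisimilar) to your $S$; with that modification your reconstruction identity becomes a clean way to organise the case analysis, but without it the reverse inclusion does not go through.
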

 \begin{proof}(sketch)
 Let $\simeq^{2*}_{l F}$ be the reduced bisimilarity obtained from $\simeq^{2*}_{l R}$ by not considering 
 the contexts $[\ ]_{\{ {\A} \vec{Y}/X\} }$, for any $|\vec{Y}|\geq 1$. Then $\simeq^{2*}_{l R}{}\subseteq{}
 \simeq^{2*}_{l F}$. In order to show the converse, one can first prove that the following is a weak IPO bisimulation:
 ${\mathrel R}
 = \{ (M', N') \mid \exists \theta.\ (M'\frown M\theta \ \wedge\ N' \frown N\theta \ \wedge\ M \simeq^{2*}_{l F} N
 \}$, where $M\frown N$ means that $M$ and $N$ are $KS$-convertible.
 \end{proof}
 
 \begin{figure}
\begin{center}
\begin{tabular}{| l |l |}
\hline
term $M$ & IPO contexts \\
\hline\hline
 $X$  &  $  [\ ]_{ \{ {\A}/X\}} Y$
  \\
\hline
 $XP_0\vec{P}$ & $[\ ]_{ \{  {\A}  /X\}} $
    \\
\hline
${\bf C} \vec{P}$ , $M$ value &  $[\ ]_{ \emptyset} X  $
\\
\hline
${\bf C} \vec{P}$ , $M$ not value & $[\ ]_{\emptyset}  $
 \\
\hline
\end{tabular} \bigskip
\\
where 
\\
${\A} \in \{ \K, \SSS, \K'Z_1, \SSS'Z_1,  \SSS''Z_1 Z_2 \ |\ Z_1, Z_2 \mbox{ fresh} \}$
\\
 ${\bf C} \in \{ \K, \SSS,  \K', \SSS', \SSS''\}$  

\end{center}
\caption{Finitely branching second-order IPO contexts for the lazy CL$^*$.}
\label{dappd}
\end{figure}
 
Finally, we are left to prove that the second-order weak IPO bisimilarity exactly recover the lazy 
observational equivalence. More in general, we will prove that the two equivalences coincide
on open terms.
Namely, we can view open terms with $n$ free variables as arrows from $\epsilon$ to $\langle n \rangle$
(by identifying variables with metavariables). Thus we have directly a notion of equivalence on open terms.
We will show that this equivalence coincides with the usual extension to open terms  of  the observational  equivalence by
 substitution. This gives a uniform finitely branching characterization of the 
observational equivalence on 
all (closed and open) terms.

\begin{prop}\label{inst}
For all $M,N \in \Lambda$, $M \widehat{\approx}_l N \ \Longleftrightarrow
 \ \mathcal{T}(M) \simeq^{2*}_{l I}    \mathcal{T}(N).$
\end{prop}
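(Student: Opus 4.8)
The plan is to reduce the claim, through the encoding theorem, to a comparison between the second-order weak IPO bisimilarity and the substitution-closed lazy equivalence on CL, and then to settle that comparison by two inclusions. By Theorem~\ref{eqv} we already have $M \widehat{\approx}_l N \Longleftrightarrow \mathcal{T}(M)\widehat{\simeq}_l \mathcal{T}(N)$, so it suffices to prove, for all (possibly open) CL-terms $P,Q$ regarded as arrows $\epsilon \rightarrow \langle n\rangle$ by identifying free variables with first-order metavariables, that
\[
  P \simeq^{2*}_{l I} Q \quad\Longleftrightarrow\quad P \widehat{\simeq}_l Q\ . \qquad (\star)
\]
I would then attack the two inclusions of $(\star)$ independently, using the finite characterization of $\simeq^{2*}_{l I}$ from Proposition~\ref{finlazy} (Figure~\ref{dappd}) throughout.

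For $\simeq^{2*}_{l I}{}\subseteq{}\widehat{\simeq}_l$, I would first invoke Proposition~\ref{duecon}(i): since $\simeq^{2*}_{l I}$ is preserved by $C[\,\cdot\,\theta]$, every closing substitution $\theta$ carries $P\simeq^{2*}_{l I}Q$ to closed terms $P\theta\simeq^{2*}_{l I}Q\theta$. On closed terms one argues exactly as in Lemma~\ref{ccont} that $\simeq^{2*}_{l I}$ is a lazy CL-bisimulation: a convergent term weakly performs the $[\ ]_{\emptyset}X$ transition while a divergent one has only the unobservable $[\ ]_{\emptyset}$ reductions, so convergence is preserved, and closure under application is again Proposition~\ref{duecon}(i). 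Hence $\simeq^{2*}_{l I}{}\subseteq{}\simeq_l$ on closed terms, whence $P\theta\simeq_l Q\theta$ for every closing $\theta$, i.e.\ $P\widehat{\simeq}_l Q$.

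For the converse $\widehat{\simeq}_l{}\subseteq{}\simeq^{2*}_{l I}$, I would show that $\widehat{\simeq}_l$ is a second-order weak IPO bisimulation, using the up-to principle of Lemma~\ref{upto} (closing under $KS$-convertibility, as in the proof of Proposition~\ref{finlazy}) together with the fact that $\widehat{\simeq}_l$ is substitutive and a congruence, inherited from $\simeq_l$ being a congruence. The verification is by cases on the IPO contexts of Figure~\ref{dappd}. The reduction context $[\ ]_{\emptyset}$ is matched by closure of $\widehat{\simeq}_l$ under reduction. The head-instantiation contexts $[\ ]_{\{\A/X\}}$ and $[\ ]_{\{\A/X\}}Y$ (for the term shapes $XP_0\vec{P}$ and $X$) are matched by substituting the combinator word $\A$ for the head metavariable on both sides: substitutivity of $\widehat{\simeq}_l$ gives $\A \widehat{\simeq}_l Q[\A/X]$, and congruence then relates the applied, reduced terms. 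Finally, the context $[\ ]_{\emptyset}X$, available when $P$ is a value, is matched by observing that $\simeq_l$ coincides with applicative bisimilarity, so that applying the fresh variable $X$ and later instantiating it with a closed $R$ agrees with directly applying $R$.

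The main obstacle lies in this last direction, and specifically in the ``apply a fresh variable'' and head-instantiation steps, where a single second-order transition encodes an entire parametric family of first-order behaviours and produces open target terms carrying a fresh variable. Two points require care. First, to know that a value $P$ can be answered by $Q$ at all, one must show that $P\widehat{\simeq}_l Q$ with $P$ a value forces $Q$ to weakly reduce to a combinator-headed (value) form; this uses that convergence under \emph{all} closing substitutions excludes a metavariable-headed or divergent normal form for $Q$, since a divergent closed instance for the head would otherwise break convergence. Second, one must pass fluidly between the two descriptions of $\widehat{\simeq}_l$ --- ``related under all closing substitutions'' and ``equiconvergent under all applications'' --- while keeping the freshly introduced variable under control, which is exactly what the up-to-$KS$-convertibility formulation is designed to absorb.
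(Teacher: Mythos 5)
Your proposal is correct in outline and its skeleton is parallel to the paper's, but it pivots on a different intermediate relation, which changes where the work lands. The paper does not compare $\simeq^{2*}_{l I}$ directly with $\widehat{\simeq}_l$: it introduces $\widehat{\simeq}^{*}_{l I}$, the closure under all closing substitutions of the \emph{first-order} IPO bisimilarity of Section~\ref{first}, proves the two inclusions $\simeq^{2*}_{l R}\subseteq\widehat{\simeq}^{*}_{l R}$ and $\widehat{\simeq}^{*}_{l R}\subseteq\simeq^{2*}_{l R}$ (the latter via a lemma that reduction preserves $\widehat{\simeq}^{*}_{l I}$), and only then invokes Theorem~\ref{ug} and Theorem~\ref{eqv} to reach $\widehat{\approx}_l$. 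Since Theorem~\ref{ug} identifies $\simeq^{*}_{l I}$ with $\simeq_l$ on closed terms, your pivot $\widehat{\simeq}_l$ and the paper's $\widehat{\simeq}^{*}_{l I}$ are extensionally the same relation, so your two inclusions are equivalent to the paper's two lemmas composed with Theorem~\ref{ug}. What the paper's factoring buys is that both sides of the harder inclusion are IPO bisimilarities, so second-order transitions $[\ ]_\theta\vec{X}$ are matched against first-order transitions $[\ ]\vec{P}$ by instantiation, and the behavioural analysis of what a partner term can weakly reduce to is largely inherited from the proof of Theorem~\ref{ug}; your route re-derives those facts inline from equiconvergence under all closing substitutions. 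Your version is self-contained and arguably more direct; the paper's reuses more established machinery. One place where your sketch is thinner than it needs to be: you argue the ``shape-forcing'' step only for the value case (a value $P$ forces $Q$ to weakly reduce to a combinator-headed form), but the head-instantiation cases need the analogous and slightly stronger fact that $P=X P_0\vec{P}$ forces $Q$ to weakly reduce to a term whose head is the \emph{same} metavariable $X$ --- excluding both value-headed reducts and reducts headed by a different variable $Y$, by choosing substitutions that make $\theta(X)$ converge and $\theta(Y)$ diverge under the relevant number of applications. This is the same style of argument and not a gap in the method, but it must be stated, since without it the transition $[\ ]_{\{\A/X\}}$ cannot be matched at all.
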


\subsection*{Proof of Proposition~\ref{inst}.}

We will show that  $\simeq^{2*}_{l I}$ coincides with
the natural extension to open terms of the first-order IPO bisimilarity $\simeq^{*}_{l I}$ of 
Section~\ref{first}.

\begin{defi}
Let $\widehat{\simeq}^{*}_{l I}$ be the extension of $\simeq^{*}_{l I}$ to open terms of CL$^*$
defined by, for all $M,N$ CL$^*$-terms such that $FV(M), FV(N) \subseteq \{ X_1 , \ldots , X_n\}$,
\[ M \widehat{\simeq}^{*}_{l I} N \  \mbox{iff}  \  \forall \theta :\{ X_1 , \ldots , X_n \} \rightarrow  
(CL^*)^0. \ M\theta
\simeq^{*}_{l I}  N\theta  \ .\]
\end{defi}

\begin{lem}
$\simeq^{2*}_{l R} {}\subseteq{} \widehat{\simeq}^{*}_{l R}$.
\end{lem}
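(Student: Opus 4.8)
The plan is to prove the equivalent inclusion $\mathcal{R} \subseteq \simeq^{*}_{l R}$ for the relation on closed CL$^*$-terms
\[
\mathcal{R} = \{\, \langle M\theta, N\theta\rangle \mid M \simeq^{2*}_{l R} N,\ \theta \text{ a closing substitution} \,\},
\]
since, unfolding the definition of $\widehat{\simeq}^{*}_{l R}$, the requirement $\langle M\theta, N\theta\rangle\in\simeq^{*}_{l R}$ for every closing $\theta$ is exactly $M \widehat{\simeq}^{*}_{l R} N$. As $\simeq^{*}_{l R}$ is the greatest first-order weak IPO reactive bisimulation, it suffices to show that $\mathcal{R}$ is such a bisimulation up to $\mathcal{R}^{*}$, invoking the up-to principle of Lemma~\ref{upto}. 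Throughout I would freely identify the $I$- and $R$-restricted bisimilarities at both the first and the second order, which is legitimate by Proposition~\ref{costar}(ii) and Proposition~\ref{duecon}(ii).

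The technical core is a simulation lemma relating the two reaction systems. For an open term $M:\epsilon\to\langle n\rangle$ and a closing $\theta$ I would show: every first-order reactive IPO transition $M\theta \rtrans{c} P$ is the $\theta$-instance of a second-order reactive IPO transition $M \itrans{C[\ ]_{\rho}} M'$, in the sense that there is a refinement $\theta'$ of $\theta$ with $\theta = \theta'\circ\rho$ on the metavariables of $M$, the first-order label $c$ being the $\theta'$-instance of the surrounding first-order context of $C[\ ]_{\rho}$, and $M'\theta' = P$; and, symmetrically, each second-order reactive IPO transition projects under $\theta'$ to a first-order reactive IPO transition carrying the \emph{same} first-order label $c$. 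This is proved by reading the table of Figure~\ref{IPOdue} against Figure~\ref{tbstar}, with a case split on the head of $M$: when the head is one of $\K,\SSS,\K',\SSS',\SSS''$ the reaction of $M\theta$ is independent of $\theta$ and lifts verbatim with $\rho = \emptyset$ and $\theta' = \theta$; the substantive case is a metavariable head $X_i$, where the first-order redex of $M\theta$ is created by $\theta(X_i)$ and is matched by a label $[\ ]_{\{\A\vec Y/X_i\}}$ or $[\ ]_{\{\A/X_i\}}\vec Y$ whose substitution records precisely the head combinator $\A$ of $\theta(X_i)$.

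Granting the simulation lemma, the bisimulation check runs as follows. Given $\langle M\theta, N\theta\rangle\in\mathcal{R}$ and $M\theta \rtrans{c} P$, the lemma yields $M\itrans{C[\ ]_{\rho}}M'$ with $M'\theta' = P$ and $\theta = \theta'\circ\rho$. Since $M \simeq^{2*}_{l R} N$, there is a matching weak second-order transition $N\witrans{C[\ ]_{\rho}}N'$ with $M'\simeq^{2*}_{l R}N'$; it carries the identical label, so the same refinement $\theta'$ applies to $N'$. Projecting each step of this weak transition under $\theta'$ — the leading and trailing identity-context reductions become first-order $\tau$-steps of $N\theta$, and the labelled step becomes a $c$-labelled step — gives $N\theta \wrtrans{c} N'\theta'$, and $\langle P, N'\theta'\rangle = \langle M'\theta', N'\theta'\rangle\in\mathcal{R}$. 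The purely internal case $c = [\ ]$ is absorbed into $\mathcal{R}^{*}$ by transitivity, using that, the reduction being deterministic, a term and its $\tau$-reduct are weakly bisimilar, as already observed in the proof of Proposition~\ref{list-weak-congruence}.

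The main obstacle is the metavariable-head case of the simulation lemma, in both directions. Lifting requires checking, combinator by combinator and for each admissible number of arguments, that whenever $\theta(X_i)$ supplies the redex in $M\theta$ there is a reactive IPO label in Figure~\ref{IPOdue} recording exactly the needed prefix of $\theta(X_i)$, together with a refinement $\theta'$ reconstructing the reduct $P = M'\theta'$. Projecting requires that the matching transition of $N$, which by bisimilarity carries the same label but may reach its redex only after internal reductions, instantiates to a genuine first-order weak transition of $N\theta$ with the same label; in particular the value/non-value status governing the choice of first-order reactive context $[\ ]$ versus $[\ ]P'$ must be consistent on both sides, which is precisely where working with weak rather than strong transitions is essential.
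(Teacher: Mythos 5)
There is a genuine gap, and it sits exactly where you locate the ``main obstacle'': the metavariable-head case of your simulation lemma is false. The second-order reactive labels of an \emph{open} term (Figure~\ref{IPOdue}) only record \emph{most general} instantiations of the head metavariable --- substitutions of the shape $\{\A\vec Y/X\}$ with $\A\in\{\K,\SSS,\K'Z_1,\SSS'Z_1,\SSS''Z_1Z_2\}$, i.e.\ exactly the minimal prefix needed to create a redex. A first-order reactive transition of $M\theta$ whose redex lies strictly deeper inside $\theta(X)$ is therefore not the $\theta$-instance of any second-order reactive transition of $M$. Concretely, take $M=X$ and $\theta(X)=\K\K\K$: then $M\theta=\K\K\K \rtrans{[\ ]} \K'\K\K$, but no label $[\ ]_{\{\A\vec Y/X\}}$ or $[\ ]_{\{\A/X\}}Y$ of Figure~\ref{IPOdue} admits $\theta$ as a refinement, since $\K\K\K=\app(\app(\K,\K),\K)$ is an instance of none of $\K Y$, $\K$, $\K'Z_1Y$, etc. So the factorization $\theta=\theta'\circ\rho$ you rely on does not exist, and the ``reading Figure~\ref{IPOdue} against Figure~\ref{tbstar}'' argument cannot be completed. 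The underlying point is that the transition system of an open term does not by itself determine the behaviour of its deep instances; that information is only recovered through the congruence/substitutivity property.

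The paper's proof supplies precisely the step you omit: it first invokes Proposition~\ref{duecon} to pass from $M\simeq^{2*}_{l R}N$ to $M\theta\simeq^{2*}_{l R}N\theta$, and only then compares transitions --- but now of the \emph{closed} terms $M\theta$, $N\theta$, for which the second-order reactive labels collapse to $[\ ]_{\emptyset}$ and $[\ ]_{\emptyset}X$. The first-order label $[\ ]$ then matches $[\ ]_{\emptyset}$ verbatim, and $[\ ]P$ is obtained from $[\ ]_{\emptyset}X$ by the further closing substitution $[P/X]$, which keeps the resulting pair inside $\mathcal{R}$. Your choice of relation $\mathcal{R}$ is the same as the paper's and your up-to bookkeeping is fine; what is missing is this initial appeal to Proposition~\ref{duecon}(i), without which the case analysis you propose cannot be carried through. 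If you insert that step, the whole second paragraph of your argument (lifting transitions of open terms, refinements $\theta'$, value/non-value consistency under instantiation) becomes unnecessary.
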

\begin{proof}
We show that $\mathcal{R} = \{ (M\theta, N\theta) \mid M
\simeq^{\smash{2*}}_{l R} N \ \wedge\ M\theta, N\theta \in (CL^*)^0
\}$ is a first-order bisimulation.  From $M\simeq^{\smash{2*}}_{l R}
N$, by Proposition~\ref{duecon}, we have $M\theta\simeq^{2*}_{l R}
N\theta$. Assume
$M\theta\smash{\stackrel{\scriptscriptstyle[\ ]}{\rightarrow_{I}}}
M'$, since $M\theta\simeq^{2*}_{l I} N\theta$, then
$N\theta\smash{\stackrel{\scriptscriptstyle[\ ]}{\Rightarrow_I}} N' $,
$M' \simeq^{2*}_{R I} N'$ and $(M',N') \in \mathcal{R}$. Now assume
$M\theta\smash{\stackrel{\scriptscriptstyle[\ ]P}{\rightarrow_{I}}}
M'$, then
$M\theta\smash{\stackrel{\scriptscriptstyle[\ ]X}{\rightarrow_{I}}}
M''$ with $M''[P/X]=M'$, since $M\theta\simeq^{2*}_{l I} N\theta$ then
also
$N\theta\smash{\stackrel{\scriptscriptstyle[\ ]X}{\Rightarrow_{I}}}
N''$with $M'' \simeq^{2*}_{l I} N''$. Thus
$N\theta\smash{\stackrel{\scriptscriptstyle[\ ]P}{\Rightarrow_{I}}}
N'$ and $N'' [P/X] =N'$ is closed. Thus $(M',N') \in \mathcal{R}$.
\end{proof}

\begin{lem}\label{clred}
Let $M\in CL^*$, $M \rightarrow_l M'$. Then $M \widehat{\simeq}^{*}_{l I}  M'$.
\end{lem}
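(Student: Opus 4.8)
The plan is to unfold the definition of $\widehat{\simeq}^{*}_{l I}$ and reduce the claim to a statement about closed terms. By definition, $M \widehat{\simeq}^{*}_{l I} M'$ holds exactly when $M\theta \simeq^{*}_{l I} M'\theta$ for every closing substitution $\theta : \{X_1, \ldots, X_n\} \rightarrow (CL^*)^0$, where $\{X_1,\ldots,X_n\} \supseteq FV(M)$ (note that a CL$^*$-reduction never introduces new variables, so $FV(M') \subseteq FV(M)$ and $\theta$ closes $M'$ as well). Thus I would fix an arbitrary such $\theta$ and aim to prove $M\theta \simeq^{*}_{l I} M'\theta$.

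The first, and technically central, step is to show that the reduction lifts along $\theta$, namely that $M\theta \rightarrow_l M'\theta$. The key observation is that the lazy strategy contracts only the leftmost--outermost redex, which necessarily lies on the head (left) spine of the term: the sole inference rule descending into a proper subterm is $\reguno{N \rightarrow_l N'}{NP \rightarrow_l N'P}$, which never enters an argument $P$. Hence, for $M \rightarrow_l M'$ to hold at all, the head of $M$ must be a combinator forming a redex ($\K' A B$, $\SSS'' A B C$, and so on), while any metavariable $X_i$ can occur only in an argument position. Substituting closed CL$^*$-terms for the $X_i$ therefore leaves the head spine, and with it the identity of the leftmost--outermost redex, unchanged; the same reaction fires in $M\theta$ and produces exactly $M'\theta$.

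Next I would observe that $M\theta \rightarrow_l M'\theta$ is precisely an identity-context reaction in the reactive system over CL$^*$, that is, a $\tau$-transition $M\theta \itrans{[\ ]} M'\theta$. Since the lazy reduction relation is deterministic and the identity context is unobservable, terms connected by $\tau$-transitions are weakly IPO bisimilar; this is exactly the fact already invoked in the proof of Proposition~\ref{list-weak-congruence}. Consequently $M\theta \simeq^{*}_{l I} M'\theta$, and since $\theta$ was arbitrary we conclude $M \widehat{\simeq}^{*}_{l I} M'$, as required.

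The hard part will be the lifting argument of the second paragraph: one must verify that closing substitution neither destroys the contracted redex nor creates a competing redex further to the left or higher up, so that $M\theta$ and $M'\theta$ are genuinely related by a single $\tau$-step rather than by some unrelated reaction. This is where the head-spine analysis of the lazy strategy does the real work. Once it is in place, the passage from a $\tau$-step to weak IPO bisimilarity is immediate from determinism, so no further delicate reasoning about the weak lts is needed.
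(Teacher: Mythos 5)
Your proposal is correct and follows essentially the same route as the paper, which argues in one line that $\forall \theta.\ M\theta \rightarrow^*_l M'\theta$ and that $\simeq^{*}_{l I}$ is closed under $\rightarrow_l$. Your head-spine analysis simply makes explicit why the reduction lifts along closing substitutions (indeed in a single step, where the paper more cautiously states $\rightarrow^*_l$), and the final appeal to determinism plus unobservability of the identity context is the same closure-under-reduction fact the paper invokes.
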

\begin{proof}
The proof follows from the fact that $\forall \theta. \ M\theta  \rightarrow^*_l M'\theta$ and 
$\simeq^{*}_{l I}$ is closed under $\rightarrow_l$.
\end{proof}

\begin{lem}
$\widehat{\simeq}^{*}_{l R} {}\subseteq{} \simeq^{2*}_{l R}$.
\end{lem}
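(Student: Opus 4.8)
The plan is to prove the inclusion coinductively, by showing that the relation $\mathcal{R}=\widehat{\simeq}^{*}_{l R}$, read as a relation on the open $CL^{*}$-terms (the arrows $\epsilon\to\langle n\rangle$ of the second-order category), is a weak IPO bisimulation on the $R$-restricted second-order lts. By Proposition~\ref{duecon}(ii) this yields $\mathcal{R}\subseteq\simeq^{2*}_{l R}$, which is the claim. Since $\mathcal{R}$ is an equivalence (symmetric and transitive because $\simeq^{*}_{l R}$ is, and reflexive), I have $\mathcal{R}^{*}=\mathcal{R}$ and may reason up to $\mathcal{R}^{*}$ by Lemma~\ref{upto}. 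In particular, Lemma~\ref{upto} lets me absorb the leading $\tau$-steps: it suffices, given $M\mathrel{\mathcal{R}}N$ and a \emph{single} reactive IPO step $M\itrans{f}M'$, to produce $N'$ with $N\witrans{f}N'$ and $M'\mathrel{\mathcal{R}}^{*}N'$. The pure $\tau$-case (label $[\ ]_{\emptyset}$, i.e.\ $M\rightarrow_l M'$) is then immediate: by Lemma~\ref{clred} $M\mathrel{\mathcal{R}}M'$, so the empty move of $N$ matches up to $\mathcal{R}$, and the same remark dispatches any leading reductions.

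The first real ingredient is a \emph{parametric classification} of open $CL^{*}$-terms: since lazy reduction is deterministic, every open term either (a) $\tau$-reduces, by combinator reductions not touching any free variable, to a value, or (b) $\tau$-reduces to a term stuck on a head variable, $X\vec{Q}$, or (c) diverges. Because $f$ is a reactive IPO context of $M$, Figure~\ref{dappd} forces $M$ to already be in case (a) (a value) or case (b) ($M=X\vec{P}$). I would then transfer this class to $N$ by probing with cleverly chosen closing substitutions, exploiting that laziness never evaluates the arguments of a stuck head, so a head variable can be made to converge or diverge \emph{independently} of the arguments, using that $KS$-combinatory logic provides both a $k$-ary combinator absorbing its arguments into a value and a divergent combinator. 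If $M$ is a value, then $M\theta$ is a value for all $\theta$, so $N\theta$ always converges; sending a putative stuck head (resp.\ a source of divergence) of $N$ to a divergent closed term excludes classes (b) and (c), so $N\rightarrow_l^{*}N_1$ with $N_1$ a value. Dually, if $M=X\vec{P}$, making $M\theta$ converge excludes divergence of $N$, making $M\theta$ diverge excludes convergence of $N$ to a value, and instantiating $X$ and a putative distinct head $Y$ of $N$ independently forces $Y=X$; hence $N\rightarrow_l^{*}X\vec{Q}$, stuck on the \emph{same} variable.

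Once the heads agree, $N$ replays exactly the label $f$ produced by $M$. In the value case $f=[\ ]_{\emptyset}X$, and I let $N$ run by $\tau$-steps to $N_1$ and then fire $N_1X\rightarrow_l N'$; in the stuck case $f$ is $[\ ]_{\{\A/X\}}$ (or $[\ ]_{\{\A/X\}}Y$ when $M=X$), and I let $N$ run to $X\vec{Q}$ and instantiate the same $X$ by $\A$. In each case $N\witrans{f}N'$. The relation $M'\mathrel{\mathcal{R}}N'$ is then verified by descent to first order: for an arbitrary closing substitution $\sigma$ of $\mathrm{FV}(M')\cup\mathrm{FV}(N')$ one forms a closing substitution $\sigma'$ realising the instantiation, respectively the added argument, of $f$, so that $M\sigma'\rightarrow_l^{*}M'\sigma$ and $N\sigma'\rightarrow_l^{*}N'\sigma$, substitution commuting with the combinator reductions fired. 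Since $M\sigma'\simeq^{*}_{l R}N\sigma'$ (by $M\mathrel{\mathcal{R}}N$, using the congruence Proposition~\ref{costar}(i) in the value case to apply the argument) and $\simeq^{*}_{l R}$ is closed under $\rightarrow_l$, one gets $M'\sigma\simeq^{*}_{l R}N'\sigma$; as $\sigma$ is arbitrary, $M'\mathrel{\mathcal{R}}N'$.

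The main obstacle is the classification-transfer step of the second paragraph: converting the external, quantified-over-all-substitutions information carried by $\mathcal{R}$ into the internal, parametric statement that $N$ reduces to a value, respectively to a term stuck on the very same head variable as $M$. The delicate point is that an \emph{observable} second-order transition of $M$ corresponds, under any \emph{single} closing substitution, to an \emph{unobservable} ($\tau$) first-order step, so no single first-order bisimulation game exhibits the match; the head of $N$ must instead be pinned down by a whole family of probing substitutions. Checking that the required probes exist and act independently on distinct variables is the technical heart of the argument, whereas the subsequent descent to first order is routine given closure of $\simeq^{*}_{l R}$ under reduction and its congruence property.
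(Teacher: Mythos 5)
Your proposal is correct and follows essentially the same route as the paper: a coinductive argument showing that $\widehat{\simeq}^{*}_{l R}$, viewed on open terms, is a second-order (weak, reactive) IPO bisimulation, with a case split on whether $M$ reduces to a combinator-headed value or is stuck on a head variable, and a descent to closing substitutions to verify that the residuals remain related. Your explicit probing argument with convergent/divergent instantiations is precisely the justification the paper leaves implicit when it asserts that $N$ ``must reduce to a value different from a variable'' (case (i)) and that $N$ can replay the transition $[\ ]_{\theta}\vec{X}$ on the same head variable (case (ii)).
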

\begin{proof}
We  show that $\mathcal{R} = \{ (M, N) \mid  M  \widehat{\simeq}^{*}_{l R}  N \}$ is a second-order bisimulation. 
If  $M \stackrel{[\ ]_\theta \vec{X}}{\rightarrow_I} M'$, then there are two cases.
\\
(i) $M= {\bf C} \vec{M}$, for a combinator ${\bf C}$ on $ CL^*$. Then $\theta = \emptyset$, and for 
any closing $\theta$ and closed $\vec{P}$ such that $|\vec{X}| = |\vec{P}|$, $M\theta \stackrel{\vec{P}}{\rightarrow_I} M''$ and $M'' = M' \theta [\vec{P}/X]$.
Since $M\theta \simeq^{*}_{l R} N\theta$, then $N\theta \stackrel{\vec{P}}{\Rightarrow_I} N''$ and 
$M'' \simeq^{*}_{l R}  N''$.  There are two subcases: either $\vec{X}= [\ ]$ or $\vec{X}= X$.
In the first subcase, we have $M \rightarrow_I M'$ (second-order) and $N \Rightarrow N$
 (second-order),
thus by Lemma~\ref{clred} $M' \widehat{\simeq}^{*}_{l R}  N$, and hence $(M',N)\in \mathcal{R}$.
In the second subcase, i.e., $\vec{X}=X$, $M$ is a value different from a variable, then one can check that also $N$ must
reduce to a value different from a variable, thus $N \stackrel{[\ ]_\emptyset  X}{\Rightarrow} N'$ and $N'' = N' \theta [P/X]$. Thus 
$M' \widehat{\simeq}^{*}_{l R} N' $, and hence $(M',N') \in \mathcal{R}$. 
\\
(ii) $M= X \vec{M}$.
Since for any closing $\theta$, $M\theta   \simeq^{*}_{l R}  N\theta$, then also 
$N \stackrel{[\ ]_\theta  \vec{X}}{\Rightarrow_I} N'$. Moreover, for any $\overline{\theta}$ closing 
$M\theta, N\theta$, for any $\vec{P}$ such that $|\vec{P}| =|\vec{X}|$, we have $M\theta \overline{\theta} \stackrel{\vec{P}}{\rightarrow_I} M''$,
 $N\theta \overline{\theta} \stackrel{\vec{P}}{\rightarrow_I} N''$, $M'' = M'\theta [\vec{P}/\vec{X}]$,
 $N'' = N'\theta [\vec{P}/\vec{X}]$. Thus for all $\overline{\theta}'.\ M' \overline{\theta}'  \simeq^{*}_{l R} 
 N \overline{\theta}'$, hence $(M',N') \in \mathrel{R}$.
\end{proof}

\subsubsection{The Cbv Second-order Reactive System}
The  main difference between the cbv and the lazy case is 
that the variables in the cbv case are meant to represent values, consequently  cbv substitutions 
have to map variables into values.

First of all, the values on CL$^*$ are defined by:
\[ V\ ::= \ X \ |\ \K \ |\  \K' V \ | \ \SSS\ |\  \SSS'V \ |\ \SSS'' VV \ .\]

\begin{defi}[Cbv  Second-order Reactive System on CL$^*$]
The \emph{cbv second-order reactive system} ${\mathbf C}_{v}^{2*}$ consists of:
\begin{enumerate}[$\bullet$]
\item the function-linear  category  whose objects are the lists with at most one element, 
and whose arrows $\epsilon \rightarrow \langle n \rangle$ are the terms of CL$^*$ with, at most,
$n$ (first order) metavariables, and  whose arrows $\langle m \rangle  \rightarrow \langle n \rangle$ are the second-order  contexts defined, briefly, by:
\[ \CC \ ::= \ F(V_1, \ldots , V_m) \ | \ M \CC \ |\ \CC M \]
where the values $V_1, \ldots ,V_m$ and the term $N$ are built using $n$ variables.
\item the reactive contexts are defined by
\[   \DD \ ::= \  F(V_1, \ldots , V_m) \ | \     \DD M \   | \     V \DD\ ;\]  
\item the reaction rules  are
\[\K X_1  \rightarrow \K' X_1\quad \K' X_1 X_2 \rightarrow X_1\]
\[\SSS X_1  \rightarrow  \SSS' X_1\quad
  \SSS' X_1 X_2 \rightarrow  \SSS'' X_1 X_2\quad
  \SSS'' X_1 X_2 X_3 \rightarrow (X_1 X_2 )(X_1 X_3)\ .
\]
\end{enumerate}
\end{defi}

\noindent
By Proposition~\ref{genrpo},  we have:

\begin{cor}
The reactive system ${\mathbf C}_{v}^{2*}$ has redex RPOs.
\end{cor}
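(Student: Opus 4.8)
The plan is to derive the statement directly from Proposition~\ref{genrpo}, exactly as was done for ${\mathbf C}_{l}^{2*}$. By the definition of redex square, such a square in ${\mathbf C}_{v}^{2*}$ consists of a term $t:\epsilon\rightarrow\langle m\rangle$, a left-hand side $l:\epsilon\rightarrow\langle n\rangle$ of a reaction rule, a context $c$ and a reactive context $d$ with $ct=dl$, arranged as a commuting square. Since every term and every reaction-rule left-hand side of ${\mathbf C}_{v}^{2*}$ is an arrow out of the empty list $\epsilon$ (the distinguished object $0$), every redex square is a commuting square with initial vertex $\epsilon$. Proposition~\ref{genrpo}, instantiated at the CL$^*$ signature and at the function-linear term-context category (restricted to objects that are lists of length at most one), therefore supplies an RPO for each such square.

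The only point that does not come for free is that ${\mathbf C}_{v}^{2*}$ is the \emph{value-restricted} variant of the function-linear category: the arguments of the function variable $F$ must be cbv values and the ground metavariables $X_{i,j}$ range over values. I would first record that the reaction-rule left-hand sides $\K X_1$, $\SSS X_1$, $\K' X_1X_2$, $\SSS' X_1X_2$, $\SSS'' X_1X_2X_3$ are legitimate arrows of ${\mathbf C}_{v}^{2*}$, their metavariables being values, so that redex squares really are squares of this category. I would then re-run the inductive RPO construction from the proof of Proposition~\ref{genrpo} and verify, at each step, that the mediating context and the two factorisations it yields are again arrows of ${\mathbf C}_{v}^{2*}$, that is, that the arguments filling their function variables are values.

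The main obstacle will be the unification step, case (ii) of the proof of Proposition~\ref{genrpo}, where the arguments of the mediating function variable are produced as a most general unifier (mgu) of the two lower terms $t_1,t_2$: here one must argue that this mgu is a value-substitution, so that the resulting arrows indeed lie in ${\mathbf C}_{v}^{2*}$. I would handle this by noting two facts. First, solvability is guaranteed by the given redex square itself, since the reactive context $d$ already exhibits a value-substitution unifying $t_1$ and $t_2$: in cbv every reactive context substitutes values for the metavariables of $l$. Second, the resulting problem is an ordinary first-order unification problem whose unknowns $X_{i,j}$ are constrained to values; it therefore admits a most general such unifier, which is value-valued and no more specific than the one already witnessed by $d$. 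Hence the construction of Proposition~\ref{genrpo} restricts to ${\mathbf C}_{v}^{2*}$ without change, and every redex square has an RPO.
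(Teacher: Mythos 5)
Your proposal follows the paper's own route exactly: the paper derives this corollary immediately from Proposition~\ref{genrpo}, with no further argument supplied. Your additional check that the RPO construction (in particular the mgu step of case (ii)) restricts to the value-constrained arrows of ${\mathbf C}_{v}^{2*}$ -- which is a non-full subcategory of the function-linear term-context category, so this is not automatic -- addresses a genuine subtlety that the paper passes over in silence, and is a welcome strengthening rather than a divergence.
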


As in the lazy case, a second-order context $\CC: \langle m\rangle
\rightarrow \langle n \rangle$ will be more conveniently denoted by $
C[\ ]_\theta $, where $C[\ ]$ is a unary first-order context and
$\theta$ is a cbv substitution, i.e., s.t.\ $\theta(X_i)$ is a value,
for all $i=1, \ldots , m$.

According to our definition, there are terms that are neither values
nor they are reducible (they do not contain any redex), the term $X Y$
is an example.  A term $M$ of this kind can be transformed in a
reducible one by substituting a single specific variable with a value.
We call \emph{critical variable} a variable of this kind.

\begin{defi}
The \emph{critical variable} of a second-order term $M$, $Cr(M)$, if it exists, is recursively defined by:
\[\eqalign{
 Cr(V)   &= \emptyset\ ,\cr
 Cr(X V) &= X\ ,\cr
 Cr(V M) &= Cr(M)\ , \mbox{if $M$ is not a value,}\cr
 Cr(M N) &= Cr(M)\ , \mbox{if $M$ is not a value.}\cr
  }
\]
\end{defi}
 
The second-order IPO contexts for cbv are summarized in Figure~\ref{cbvIPO}.
In that figure, the symbol $R$ ranges over most general reducible terms. That is, any reducible term can be obtained by instantiating the variables of a term contained in that grammar.
The symbol $T$ is used to represent general terms; remember that variables represent general values.  

\begin{figure}
\begin{center}
\begin{tabular}{| l |l | l|}
\hline
term $M$ & IPO contexts & reactive IPO contexts \\
\hline\hline
 $X$  
 &  $ [\ ]_{\{  {\A} /X\}} Y $, ${\A} [ \ ]_{\emptyset}$, $R C_1[\ ]_{\emptyset}$,
 &  $ [\ ]_{\{  {\A} /X\}} Y $, ${\A} [ \ ]_{\emptyset}$  \\   
\hline
$M$ a value but not a variable 
&  $[\ ]_{ \emptyset} X $, ${\A} [ \ ]_{\emptyset}$,   $RC_1 [\ ]_{\emptyset}$  
&  $[\ ]_{ \emptyset} X $, ${\A} [ \ ]_{\emptyset}$
\\
\hline
$M$ reducible
 & $[\ ]_{\emptyset}  $,  $RC_1 [\ ]_{\emptyset}$
 & $[\ ]_{\emptyset}  $
 \\
\hline
$M$ contains a critical variable
 & $[\ ]_{ \{  {\A} /Cr(M)\}} $,  $RC_1 [\ ]_{\emptyset}$ 
 & $[\ ]_{ \{  {\A} /Cr(M)\}} $\\
\hline
\end{tabular} \bigskip
\\
where
\\
${\A} \in \{ \K, \SSS, \K'X_1, \SSS'X_1,  \SSS''X_1 X_2 \ |\  X_1, X_2 \mbox{ fresh} \}$
\\
$R$ ranges over $R \ ::=\ \A Z \ |\ X R \ |\ R T$ 
\\
$C_1[\ ] $ ranges over $C[\ ] \ ::= \ [\ ] \ |\ C[\ ]T \ |\ TC[\ ]$
\\ with 
$T$ ranging over $T \ ::= \ X \ | \ (T T) $
\end{center}
\caption{Second-order IPO contexts for cbv CL$^*$.}
\label{cbvIPO}
\end{figure}
 
As for the previous case, by Proposition~\ref{list-weak-congruence} and by considering as list extension category the
category of all  by-value function-linear term contexts, we have:

\begin{prop} 
\hfill \begin{enumerate}[\em(i)] 
\item
  For all terms of $CL^*$ $M, N$, for any substitution $\theta$ and for
  any (possibly open) first-order context $C[ \ ]$, \[ M
  \simeq^{2*}_{v I} N \ \Longrightarrow \ C[M\theta] \simeq^{2*}_{v I}
  C[N\theta] \ .\]
\item 
  $\simeq^{2*}_{v I}{}={}\simeq^{2*}_{v R}$, where $\simeq^{2*}_{v R}$
   denotes the weak IPO bisimilarity, where only reactive IPO contexts
   are considered.
\end{enumerate}
 \end{prop}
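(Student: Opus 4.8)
The proof runs in close parallel to the lazy Proposition~\ref{duecon}, and my plan is to obtain both claims as instances of the general congruence result Proposition~\ref{list-weak-congruence}. I would first record that the reactive system ${\mathbf C}_v^{2*}$ has redex RPOs (the corollary asserting this, obtained from Proposition~\ref{genrpo}), so the hypotheses of Proposition~\ref{list-weak-congruence} are available. As the list extension $\mathcal{D}$ of the underlying context category I take, as announced in the text, the category of all \emph{by-value function-linear} term contexts; following the notational convention of this section I write a multi-holed arrow as $C[\__{\theta_1},\ldots,\__{\theta_n}]$, with $C[\__1,\ldots,\__n]$ a first-order multi-holed context on $CL^*$ and each $\theta_i$ a value substitution acting on the term placed in the $i$-th hole. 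Since the cbv strategy on $CL^*$ selects a unique redex and is therefore deterministic, the extra hypothesis of Proposition~\ref{list-weak-congruence}(ii) also holds. Granting the uniformity/reactivity check below, part~(i) of the proposition is then exactly Proposition~\ref{list-weak-congruence}(i), once one observes that the unary second-order contexts of ${\mathbf C}_v^{2*}$ are precisely the arrows $C[\ ]_\theta$ and that composing such an arrow with a term $M$ yields $C[M\theta]$; and part~(ii), namely $\simeq^{2*}_{v I}={}\simeq^{2*}_{v R}$, is Proposition~\ref{list-weak-congruence}(ii) read off for this system.

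The heart of the argument is thus the verification that every by-value function-linear multi-holed context $g$ is \emph{either} IPO uniform \emph{or} has a reactive index. I would perform a structural case analysis on the leftmost--outermost shape of $g$, driven by the cbv reactive-context grammar $\DD\ ::=\ F(V_1,\ldots,V_m)\mid \DD M\mid V\DD$ and by the enumeration of cbv IPO contexts in Figure~\ref{cbvIPO}. When $g$ belongs to $\DD$ --- i.e.\ its active hole sits at the head of the application spine, or immediately to the right of a value that is itself in reactive position --- that hole is a reactive index, because instantiating the remaining holes with closed values produces a reactive unary context of the cbv system. In every other case the head of $g$ is a fixed combinator, or a value whose argument position is not the active hole, and I claim $g$ is IPO uniform: the firing label $f$ and the residual context it leaves behind are determined by the fixed outer skeleton of $g$ together with the \emph{critical variable} $Cr(\cdot)$, which pinpoints the unique position whose instantiation by a value $\A$ (one of $\K,\SSS,\K'X_1,\SSS'X_1,\SSS''X_1X_2$) unblocks a reaction, while the values substituted into the holes play only a passive, parametric role. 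For each admissible $f$ one exhibits the finite list of residual contexts $g_1,\ldots,g_h$ and index functions $l_1,\ldots,l_h$ witnessing uniformity, reusing verbatim the first-order computation of the cbv IPO transitions.

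The step I expect to be the main obstacle is precisely this uniformity bookkeeping in the cbv setting, which is genuinely more delicate than its lazy counterpart. In the lazy case reactivity of a hole amounts to the single applicative shape $[\ ]C_1\ldots C_k$, whereas here a hole may be reactive either by heading the spine or by following a value, and a context built over a non-value may still react after one hole is filled by a value --- so one must check that the prescribed witnesses $g_i$ correctly record which holes have become values, that function-linearity is preserved by the residual contexts, and that the critical-variable clauses account exactly for the stuck-but-instantiable terms of the form $XV$ and their spine extensions. Once this case analysis is carried out in full, determinism of cbv reduction together with Proposition~\ref{list-weak-congruence} delivers both the congruence statement~(i) and the identity $\simeq^{2*}_{v I}={}\simeq^{2*}_{v R}$ of~(ii).
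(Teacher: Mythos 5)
Your proposal is correct and follows essentially the same route as the paper: the paper derives both parts directly from Proposition~\ref{list-weak-congruence}, taking the category of by-value function-linear term contexts as the list extension and relying (without spelling out the details) on the check that every such multi-holed context is either IPO uniform or has a reactive index, plus determinism of cbv reduction for part~(ii). Your write-up merely makes explicit the case analysis that the paper leaves implicit.
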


\noindent
It is important to notice that the reactive IPO contexts provide directly  a finitely branching lts for the cbv combinatory logic (notice that, contrary to the lazy case, for the cbv case IPO contexts of the shape 
$[\ ]_{\{ {\A} \vec{Y}/X\}}$, for $|\vec{Y}|\geq 1$, do not exist, since  substitutions have to map 
variables into values).  

The cbv weak IPO bisimilarity turns out to be strictly included in the cbv contextual equivalence. Namely, if we consider
\[\eqalign{
  \mathcal{T} (\lambda x.x)
&=\SSS \K \K\ ,  \ \mbox{and}\cr
  \mathcal{T} (\lambda xy.xy)
&=\SSS [\SSS (\K\SSS)(\SSS (\K\K)(\SSS \K\K))] [\SSS (\SSS
(\K\SSS)(\K\K) )(\K\K)]  
  }
\]
then  $\mathcal{T} (\lambda x.x) \approx_v \mathcal{T} (\lambda xy.xy)$, however $\mathcal{T} (\lambda x.x) \not \simeq^{2*}_{vI}
\mathcal{T} (\lambda xy.xy)$, because
 \[ \mathcal{T} (\lambda xy.xy) \stackrel{[\ ]_{\emptyset}X}{\Rightarrow} \SSS'' (\K' X) (\SSS'' \K\K)
\stackrel{[\ ]_{\emptyset}Y}{\longrightarrow}, \mbox { while } \mathcal{T} (\lambda x.x) \stackrel{[\ ]_{\emptyset}X}{\Rightarrow} X 
\stackrel{[\ ]_{\emptyset}Y}{\not\Rightarrow} \ .\]
 The problem arises from the fact that in the second-order cbv bisimilarity we observe the existence of a 
critical variable, while in the contextual equivalence we do not.

\section{Final Remarks and Directions for Future Work}\label{final}
   There are several other attempts to deal with parametric rules in the literature. In his seminal paper \cite{Sew02}, Sewell presents two different constructions, one based on ground reaction rules and the other based on parametric rules.  The RPO construction can be seen as a categorical account of the ground rules construction.  Parametric rules, in the form they are defined in \cite{Sew02}, do not have an obvious categorical presentation. 
In \cite{KSS05}, the authors introduce the notion of \emph{luxes} to generalize the RPO approach to cases where the rewriting rules are given by pairs of arrows having a domain different from  $0$.  Luxes can be seen as a categorical account of the parametric rules approach of Sewell.  When instantiated to the category of contexts, the luxes approach allows to express rewriting rules not formed by pairs of ground terms but, instead formed by pairs of contexts (open terms), and so allowing parametricity. 
Compared to our approach, based on the notion of second-order context, the  approach of luxes is more abstract and it can be applied to a  wider range of cases (categories). However, if we compare the two approaches in the particular case of context categories, we find that the luxes approach has a more restricted way to instantiate a given parametric rule. This restriction results in a not completely satisfactory treatment of the $\lambda$-calculus.
It remains the open question of substituting the notion of second-order context with a more abstract and general one. This will allow to recover the extra generality of  luxes. 
 \\   A possible alternative approach for dealing with the $\lambda$-calculus in Leifer-Milner's RPO setting, it that of
    using 
suitable  encodings in the (bi)graph framework \cite{Mil06}. However, we feel that  our
 term  solution based on second-order context categories and  CL  is  
 simpler and more direct. Alternatively, in place of CL, one could also consider 
 a  $\lambda$-calculus with explicit substitutions, in order to obtain a convenient
  encoding of the $\beta$-rule, allowing for a representation as a second-order reactive system. 
  This is an experiment to be done. Here we have chosen CL, since it is simpler;  moreover,
  the correspondence between the standard $\lambda$-calculus and the one with explicit 
  substitutions deserves further study.
\\  We have considered lazy and cbv strategies, however also other strategies, e.g. 
\emph{head} and
\emph{normalizing} could be dealt with, possibly at the price of some complications due to the fact
that
 such strategies are usually defined on open terms. It would be
also interesting to explore non-deterministic strategies on $\lambda$-calculus.

\end{document}